\newcommand{\addreferencesection}{
  \phantomsection
  \addcontentsline{toc}{section}{References}
  }
\newcommand{\netco}{{network-coordination game}\xspace}
\newcommand{\netcos}{{network-coordination games}\xspace}
\newcommand{\alert}[1]{} %{\textcolor{red}{\bf #1}}
\newcommand{\payoff}{{\mathrm{payoff}}}
\newcommand{\sss}{{{\bm \sigma}}}
\newcommand{\CP}{{{\mathscr P}}}
\newcommand{\CQ}{{{\mathscr Q}}}
\newcommand{\CA}{{{\mathscr A}}}
\newcommand{\Li}{{transformation vector}\xspace}
\newcommand{\Lset}{{transformation set}\xspace}
\newcommand{\Lcal}{{\mathscr{L}}}
\title{Smoothed Efficient Algorithms and Reductions for Network Coordination Games}
 \author{Shant Boodaghians}
 \author{Rucha Kulkarni}
 \author{Ruta Mehta\footnote{This work was supported by NSF grant CCF-1750436}}
 \affil{Department of Computer Science,\\ 
 University of Illinois at Urbana-Champaign,\\
 \texttt{\{boodagh2,ruchark2\}@illinois.edu, rutamehta@cs.illinois.edu} 
 }
\date{}
\newtheorem{thm}{Theorem}[section]
\newtheorem{lem}[thm]{Lemma}
\newtheorem{claim}[thm]{Claim}
\newtheorem{cor}[thm]{Corollary}
\newtheorem*{claim*}{Claim}
\theoremstyle{definition}
\newtheorem{defn}{Definition}[section]
\newcommand{\ip}[1]{\left\langle #1 \right\rangle}
\newcommand{\R}{\mathbb R}
\newcommand{\kset}{\{1,\,\dotsc,\,k\}}
\newcommand{\ts}{\textsuperscript}
\newcommand{\nashcoord}{{NetCoordNash}\xspace}
\newcommand{\maxcut}{{FlipMaxCut}\xspace}
\begin{document}

 \thispagestyle{empty}
\maketitle

\begin{abstract}

%Extensive work in the last two decades has brought deep insights into the worst case complexity of computing Nash equilibria (NE). However, largely negative results have raised the need for {\em beyond worst-case} analysis of these problems. 

Worst-case hardness results for most equilibrium computation problems have raised the need for {\em beyond-worst-case} analysis.
To this end, we study the smoothed complexity of finding pure Nash equilibria in Network Coordination Games, a PLS-complete problem in the worst case. This is a potential game where the {\em sequential-better-response} algorithm is known to converge to a pure NE, albeit in exponential time. First, we prove polynomial (resp. quasi-polynomial) smoothed complexity when the underlying game graph is a complete (resp. arbitrary) graph, and every player has constantly many strategies. We note that the complete graph case is reminiscent of perturbing {\em all parameters}, a common assumption in most known smoothed analysis results.

\vspace*{2pt}

Second, we define a notion of {\em smoothness-preserving reduction} among search problems, and obtain reductions from $2$-strategy network coordination games to local-max-cut, and from $k$-strategy games (with arbitrary $k$) to local-max-cut up to two flips. The former together with the recent result of \cite{BCC18} gives an alternate $O(n^8)$-time smoothed algorithm for the $2$-strategy case. %known smoothed efficient algorithm for the local-max-cut problem, gives an alternate efficient smoothed algorithm for the $2$-strategy games. 
This notion of reduction allows for the extension of smoothed efficient algorithms from one problem to another. 

%For the first set of results, we combine and extend the local-max-cut approaches \cite{ER14,A+17} to handle the multi-strategy case, where nodes are replaced with (player,strategy) pairs. Our approach is baѕed on a careful choice of matrix to capture potential change under better-response, lower bound its rank, and combin it with an appropriate union bound -- the last two requires a careful case analysis. We believe our approach should be of interest to address smoothed complexity of other potential/congestion games. 

% For the first set of results, we develop techniques to analyze the probability of a slow increase in potential during the better-response algorithm for a perturbed game. Our approach carefully choses a matrix to capture potential changes, lower bounds its rank under various cases, and combines it with an appropriate union bounds. This combines and extends the local-max-cut approaches \cite{ER14,A+17} to handle the multi-strategy case, and therefore we believe should be of interest to address smoothed complexity of other potential/congestion games. 
% \medskip

% OR
% \medskip

For the first set of results, we develop techniques to bound the probability that an (adversarial) better-response sequence makes slow improvements on the potential. Our approach combines and generalizes the local-max-cut approaches of \cite{ER14,A+17} to handle the multi-strategy case: it requires a careful definition of the matrix which captures the increase in potential, a tighter union bound on adversarial sequences, and balancing it with good enough rank bounds. We believe that the approach and notions developed herein could be of interest in addressing the smoothed complexity of other potential and/or congestion games. 
\end{abstract}

\newpage
 \thispagestyle{empty} 
 \setcounter{tocdepth}{2}
 \tableofcontents
 \newpage
 \pagenumbering{arabic}

\section{Introduction}\label{sec:introduction}
Nash equilibrium is one of the most central solution concepts of game theory. 
Extensive work within Algorithmic Game Theory has brought significant insights to the computational complexity of finding Nash equilibria (NE) in various game models (see Section \ref{sec:related-work} for a detailed discussion). %\cite{Papadimitriou94,LMM03,AKV05,DGP06,CDT06,CS03,DGP09,KT10,EY10,  Mehta14, MVY15, CDO15, FPT04, SV08}.
Most problems of this form are shown to be complete for some class in ``Total-Function NP''\footnote{TFNP, A class of search problems in the intersection of NP and co-NP.}, typically for either PPAD or PLS {\em e.g.,} \cite{FPT04,DGP09,CDT06Smooth,CDT06,SV08,KT10,EY10,S10,Mehta14}.
The class PPAD captures problems with parity arguments like finding fixed-points of functions, Sperner's Lemma, and finding (mixed) Nash equilibria in general games \cite{Papadimitriou94,Kintali13,DGP09,CDT06Smooth,Goldberg11}, while PLS (Polynomial Local Search) captures problems with local-search algorithms, like local-max-cut, local-max-SAT, and pure NE in potential games \cite{JPY88, SY91, FPT04, SV08, CD11}.

%In particular, finding NE in the simplest model of {\em two-player} games is PPAD-complete \cite{}; such a game is represented by two payoff matrices $(A,B)$. The problem remains hard for various special sub-classes, as well as to approximate \cite{}. A well-studied multi-player extension of this is of {\em network game}, defined on a graph where each node is a player playing a two-player game with each of its neighbors. These being more general than two-player games finding NE remains PPAD-complete, even constant approximation \cite{}. On the other hand, a {\em network coordination game}, where the two-player game on every edge is a coordination game, {\em i.e.,} game $(A,B)$ such that $A=B$, then it becomes a potential game and is guaranteed to have a pure NE\footnote{A NE where every player chooses a strategy to play deterministically}. However, finding one is PLS-complete \cite{}, and remains so for various other potential/congestion games \cite{}. Extensive work on both general and potentail games have brought a fair understanding of worst case complexity of both PPAD and PLS from various perspectives, {\em e.g.,} query complexity \cite{}, communication complexity \cite{}, and decision problems \cite{}.

%\alert{-- sentence about potential games --}
 Although it is well accepted that PPAD and PLS are unlikely to be in P \cite{BCEIP98,BPR15,Rub17ETH}, problems in these classes admit respectively path-following style complementary pivot algorithms \cite{LH64,GW03,AGMS11,GJM11} and local-search-type algorithms \cite{JPY88}.
 The natural local-search algorithms for PLS problems have been observed to be empirically fast \cite{JPY88,CDP08,DFIS16}.
 %\footnote{Johnson, Papadimitriou, and Yannakakis (1988) said, ``Practically all the empirical evidence would lead us to conclude that finding locally optimal solutions is much easier than solving NP-hard problems''}. 
 However, these algorithms take exponential time in the worst case \cite{SY91,SvS04}. 
A similar phenomenon occurs with the classical Simplex method for solving Linear Programs. To study this case, Spielman and Teng introduced a powerful model of {\em Smoothed analysis}, which ``continuously interpolates
between the worst-case and average-case analyses of algorithms,'' \cite{ST04}. 
The basic idea is to formally show that adversarial instances are ``sparse and scattered,'' in a probabilistic sense. %given any (adversarial) instance, once {\em small perturbations} are introduced to it, the algorithm will be efficient in expectation or with high probability. 
%
%These classes also capture a number other problems from diverse fields, such as optimization and cryptography, that are believed to be hard \cite{}, and therefore by now it is well acceted that they are unlikely to be in P \cite{BPR, Rub}.
This gives rise % for practitioners/game theorists who believe that most real-life systems do operate at near equilibrium \cite{}, giving rise 
to the following question:

\vspace*{2pt}
\paragraph{Question.} Can we design smoothed efficient algorithms for finding Nash equilibria? %that are efficient under some ``beyond worst-case'' model, e.g., smoothed analysis, average case, etc.?
\medskip

In this paper we answer the question in the affirmative for network-coordination games, a well-studied model (see $e.g.$ \cite{CD11} and \cite{SW16}) which succinctly captures pairwise coordination in multi-player games. % under the smoothedness model, where parameters of the given instance are perturbed randomly within a given density. 
We obtain smoothed (quasi-)polynomial time algorithms to find pure Nash equilibria (PNE) in network-coordination games (\nashcoord) with constantly many strategies, a PLS-complete problem in the worst case \cite{CD11}. 
%To the best of our knowledge this gives the first smoothed efficient algorithm for a ``hard'' Nash equilibrium problem.

\paragraph{Smoothed Analysis.} The work of Spielman and Teng \cite{ST04} introduced the smoothed analysis framework to study good empirical performance of classical Simplex method for linear programs (LP). 
%Smoothed analysis was introduced by Spielman and Teng to understand empirically fast running time of the classical simplex algorithm on linear programes (LP), which is exponential in the worst case \cite{}. 
%
%They reason that, if the algorithm performs badly only on sparsely distributed instances, then this would explain our observation that the Simplex method performs well in practice. 
They showed that introducing independent random perturbations to any given (adversarial) LP instance, ensures that the Simplex method terminates fast with high probability (here the run-time depends inverse polynomially in the perturbation magnitude). 
%
%when independently random perturbations are introduced into {\em all} entries of the constraint matrix (referred to as a {\em smoothed instance}), then the Simplex method, implemented with the shadow-pivot rule, finds a solution in polynomial time, both in expectation as well as with high probability. Here, the running time of the algorithm depends polynomially on the density of the perturbation: as the variance of the perturbation approaches zero, the density grows unboundedly, and the instance approaches a non-perturbed instance. %This essentially tells us that Simplex is efficient on almost all LP instances, except for sparsely distributed and scattered instances. 
%
Performance on such probabilistic instance has since been known as {\em smoothed complexity} of the problem --  
%which has been accepted as 
one of the strongest guarantees one can hope for {\em beyond worst-case}. 
In the past decade and a half, much work has sought to obtain smoothed efficient algorithms when worst-case efficiency seems infeasible \cite{DMRSS03, BV04, MR05, RV07,  AV09, ERV14, ER14, A+17}, including for integer programming, binary search trees, iterative-closest-point (ICP) algorithms, the $2$-OPT algorithm for the Traveling Salesman problem (TSP), the knapsack problem, and the local-max-cut problem. 

In case of Nash equilibrium (NE) computation, smoothed complexity of two-player games is known to not lie in P unless RP$\:=\:$PPAD \cite{CDT06Smooth}, which
follows from the hardness of 1/poly additive approximation. %and the fact that solution of a perturbed instance gives approximate solution of the original instance. 
On the contrary, for most PLS-complete problems, the natural local-search algorithm often finds an additive-approximate solution efficiently. There is always a ``potential function'' that the algorithm improves in each step. Intuitively, until an approximate solution is reached, the algorithm will improve the associated potential function significantly in every local-search step.

The potential function is also reminiscent of the {\em objective function} of LP's. %which must increase with every step of the Simplex method. 
This allows one to perform smoothed analysis of FLIP algorithm for {\em local-max-cut}, a classical PLS-complete problem. Here, given a weighted graph, the goal is to find a cut that can not be improved by moving a single vertex across the cut. The algorithm FLIPs the partition of any one vertex in one step if this improves the cut.\footnote{Note that, unlike max-cut, if all edge weights are $poly(n)$, assuming they are integers, then FLIP finds a local-max-cut in polynomial time. This indicates the existence of many ``easy'' instances ``near'' any given instance. Similar is the case with other PLS-complete game problems too.} 
First, \cite{ER14} showed that FLIP terminates in quasi-polynomial time with high probability, when the edge weights are perturbed. 
A recent second result by Angel, Bubeck, Peres, and Wei \cite{A+17} showed polynomial run-time for the same algorithm when the weights of {\em all} edges are perturbed, viewing missing edges as zero-weight edges. 
We note that the simultaneous perturbation of {\em all} input parameters seems to be crucial in getting smoothed polynomial time algorithms so far, {\em e.g.,} see \cite{ BD02, ST04,SST06}.

\paragraph{Summary of Our Results.} Motivated by the above intuition and results, we study the smoothed complexity of \nashcoord, which we recall is also PLS-complete. An instance of \nashcoord is represented by an undirected {\em game graph} $G=(V,E)$, where the nodes are the players, and every node $v\in V$ plays a two-player coordination game with each of its neighbors. If every player has $k$ strategies to choose from, then the game on each edge $(u,v)$ can be represented by a $k\times k$ payoff matrix $A_{uv}$. Once every player chooses a strategy, the payoff value for each edge is fixed, and each player gets the sum of the payoffs on its incident edges. The goal is to find a pure NE of this game. We analyze the problem through two different approaches: $(i)$ direct smoothed analysis of an algorithm (see Section~\ref{sec:BRA}), and $(ii)$ through reductions (Section~\ref{sec:red}). 
%A strategy profile is said to be at a Nash equilibrium if no player gains by deviating unilaterally. 
%

%It is easy to see that 
We first analyze a local-search algorithm called {\em better-response} (BRA), where players take turns making improving moves (termed {\em better-response} moves).
%, while all other players remain fixed. 
The sum of payoffs across all the edges acts as a {\em potential function} measuring the progress of the algorithm, $i.e.$ the function value increases every time a player makes a better-response move %. while all other players remain fixed 
(Section~\ref{sec:PrelGame}). % for details).
%
%The sum of payoffs across all the edges acts as a {\em potential function}, $i.e.$ the function value increases every time a player makes an improving move (termed ``{\em better-response} move''), while all other players remain fixed (see Section \ref{sec:PrelGame} for details). 
%Therefore, any algorithm where the players take turns making improving moves must output a pure NE (PNE) of the game. 
%We call such an algorithm a {\em better-response} algorithm (BRA). 

We show that, when $k$ is constant, then for games where for all $(u,v)\in E$, all payoff entries of $A_{uv}$ are perturbed independently at random, %within density $\phi$ 
any {\em better-response} algorithm converges in quasi-polynomial time with high probability (Theorem~\ref{thm:in-text-qpoly}). 
Furthermore, if $G$ is a complete graph and any two players participate in a random game $A_{uv}$, 
% or we consider it as a complete graph by taking $A_{uv}$ to be an all zero matrix for the edges that are not present, 
then the algorithm takes polynomial time with high probability (Theorem~\ref{thm:whp-poly}). 
To show these results, we develop a technique to bound the probability that a sufficiently long sequence of better-response moves makes little improvement to 
%the value of 
the potential function. 
This technique may be applicable to analyze the smoothed complexity of other potential games. 
%Towards this, we show that our approach gives smoothed quasi-polynomial algorithm for a congestion games when \# strategies/player and \#players/resource both are constants. 
%To the best of our knowledge, ours is the first smoothed efficient algorithm for a Nash equilibrium problem that is hard in the worst-case.
Apart from party affiliation games, for which efficient smoothed complexity directly follows from local-max-cut \cite{FPT04}, % for party affiliation games that directly follows The efficient smoothed complexity of local-max-cut directly implies the same for NE computation in party affiliation games \cite{FPT04}. %These games have a restrictive model where every player has $2$ strategies to choose from, and the problem can be easily reduced to the local-max-cut problem (the players map to the vertices, the edges to the games, and the sides of the cut to the strategies chosen. The sum of payoffs of all players (up to a factor $2$) is the cut value in the reduced local-max-cut problem). 
%Apart from this, 
to the best of our knowledge, no smoothed efficient algorithm for a worst-case hard Nash equilibrium problem was known prior to this work. 
%that was also hard in the worst case.
%We denote as $k$-\nashcoord instances of \nashcoord for fixed $k$.

\begin{figure}\label{fig:maxcut-to-netco}
\begin{center}
\begin{tikzpicture}
\fill (0,0) circle (2pt);
\fill (2,0) circle (2pt);
\draw (0,0)--(2,0);
\node at (0,0) [above left] {$u$};
\node at (2,0) [above right] {$v$};
\node at (1,0.3) [above] {$w_{uv}$};

\draw[thick] (3,0.5)++(0,3pt)--+(0,-6pt);
\draw[thick] (3,0.5)--(4,0.5)--+(-4pt,3pt)--+(0,0)--+(-4pt,-3pt);

\fill (5,0) circle (2pt);
\fill (9,0) circle (2pt);
\draw (5,0)--(9,0);
\node at (5,0) [above left] {$u$};
\node at (9,0) [above right] {$v$};
\node at (7,0.2) [above] {$A_{uv}=\left[\begin{smallmatrix}
	0&w_{uv}\\[2pt]w_{uv}&0
\end{smallmatrix}\right]$};
\end{tikzpicture}
\end{center}
\vspace{-0.5cm}
\caption{Local-max-cut to $2$-strategy \netcos: mapping of edge $(u,v)$.}
\vspace{-0.5cm}
\end{figure}

Our analysis combines and extends the approaches for local-max-cut \cite{A+17,ER14}; as discussed in Section~\ref{sec:challenge}. Local-max-cut reduces to a special case of $2$-strategy \netcos \cite{CD11} (see Figure~\ref{fig:maxcut-to-netco}), therefore handling the more general setting of \netco with multiple strategies poses some challenges.

Our second set of results analyzes smoothed complexity through reductions. To extend a smoothed efficient algorithm from one problem to another, the usual notion of reduction does not suffice. Among other things, it needs to ensure that independent perturbations of the parameters of the original problem produce independent perturbations of all parameters in the reduced problem. Based on this we define {\em smoothness-preserving} reduction (see Theorem~\ref{lem:reductionGivesAlg}), and obtain two such reductions:
%For us, the obvious choice of problem to reduce to is local-max-cut, however now the reduction has to be {\em from} \netco  { \em to} local-max-cut. Furthermore, the reduction should ensure that independent perturbations in the parameters of the former produce independent perturbation of all edge weights in the latter problem, among other things. We call such a reduction {\em smoothness-preserving}, and obtain two such reductions:
$(i)$~from 2-strategy \netcos to local-max-cut, and $(ii)$ general $k$-strategy \netcos to local-max-cut up-to-2-flips. ($i.e.$ cuts whose value cannot be improved by moving any {\em two} vertices) (see Theorem~\ref{thm:reduction}). 
We show that the first reduction, together with smoothed efficient algorithms for local-max-cut gives alternate smoothed efficient algorithms for the $2$-strategy \netco; in particular, the recent result of \cite{BCC18} gives an $O(n^8)$ algorithm.
For general \netcos, polynomial smoothed complexity of local-max-cut up-to-two-flips, where the local improvement algorithm flips two vertices in every step, needs to be shown. We leave this as an open question.

\subsection{%\color{red}
Smoothed Algorithms: Challenges and Techniques}
%Overview of challenges and new techniques}
\label{sec:challenge}
%\Rucha{have to shorten/polish}
%{\color{red}%As we note in section~\ref{sec:common}, 
Our results follow a framework which is common to past work on smoothed algorithms for local max cut~\cite{ER14,A+17,BCC18} (see Section~\ref{sec:common}). The goal is to show that with high probability {\em every} sufficiently long sequence of improving moves (of the local-search algorithm) is very likely to significantly improve the potential. This is shown via a two-step process %All use the following general framework 
hinging on Lemma~\ref{lem:probability}: first, we represent the potential improvement in every step as a linear combination of the input parameters and consider the corresponding matrix for a subsequence, and second, show a tight union bound on the number of different sequences and relevant initial configurations, paired with a high rank-bound for the matrix. %lower bound the rank of the matrix and combine it with a union bound over all possible improving subsequences and relevant initial configurations.

Several obstacles were encountered while trying to apply the general framework to the $k$-\nashcoord problem. The first was how to correctly represent each move in a better-response (BR) sequence. Specifying only the moving player is insufficient, as every player has more than $2$ strategies to choose from. 
A move is entirely specified by the triple (player, strategy-from, strategy-to). 
This, however, is too descriptive and the union bound is too large. 
Labelling as (player, to-strategy) suffices, and strikes the right balance between rank and union bound. This definition, however, muddies the proof technique further on.

\begin{itemize}
\item \textbf{Rank analysis:} When the BR sequence represents moves as (player, strategy) pairs, within a sequence, some {\em pairs} may be repeated, and some {\em players} may be repeated, but always playing distinct strategies. Therefore, notions of repeating and non-repeating players need to be carefully defined. 
Secondly, we would want the nodes in the directed-influence-graph arguments of Section~\ref{sec:cyclic}, used for showing rank bounds in terms of repeat-moves, to also be labeled as (player,strategy) pairs. This would have multiple nodes in the graph corresponding to the same players, each influencing multiple players. This makes for messy analysis and poor rank bounds. We instead label the nodes of these influence graphs as simply the players, unlike the improving sequence itself. 
This causes rank bounds to depend on the number of players with repeating moves, rather than the number of repeating moves. Thus, notions of {\em critical subsequences} and {\em separated blocks} need to be carefully adapted from past notions.

\item \textbf{Defining \textit{critical subsequences} and \textit{separated blocks}:} 
We show that moves by non-repeating players allow us to ``separate'' the sequence and combine the rank bounds from both sides. This leads to the notion of {\em separated blocks,} which requires a careful selection of the boundary moves. 
To combine these bounds, we use the idea of a {\em critical block} from~\cite{A+17}, which was very important to their poly-time bound, and is similarly helpful in our result.
It would have been preferable to use any non-repeated move as the separators, but it is not possible to do so, as the same player may also repeat other moves.
These notions must be adapted carefully so that the rank and union bounds balance each other. This, overall, loosens the rank bound.
%The high level ideas however, are clever and can be reformed to be applied in this case too, but lead to looser rank bounds.

%Although this led to higher ranks for the local-max-cut problem than $k$-\nashcoord, when properly designed, it allows converting the rank bound in terms of players, to one in terms of the length of the sequence. 
%\item Defining \textit{transition blocks}: To allow such a conversion from bounds in terms of players or moves to those in terms of the length of the sequence, one must also define \textit{transition block}-like subsequences. Defined as sequences of repeating players in \cite{A+17}, for $k$-\nashcoord we must remember the difference between repeating players and repeating moves in the sequence. For easier analysis,   we must have looser bounds, which hurts the minimum rank guarantee of a long enough BR sequence. This then leads to the next challenge.
\item \textbf{Union bounds:} The union bound analysis must now bear the brunt of the looser rank bounds above, and must be made tighter through properties of the $k$-\nashcoord problem. Eliminating the influence of the inactive players allows us to avoid having to take their strategies into account. Taking a sum of repeated moves does not suffice in our setting, and we define the notion of ``cyclic sums'' to handle this. 
\end{itemize}

%Overall, a number of new technical insights are needed to handle  multiple strategies per player. %in combining both and then extending to the network coordination game, which is clearly much more general than even local-max-k-cut. 
%Our approach may shed light on obtaining smoothed efficient algorithms for Nash equilibrium computation in general potential games. 
%Towards this, we show that part of our analysis applies successfully to a subclass of congestion games, where every player has at most $k$ strategies, and every resource occurs in a bounded number of strategies (a PLS-complete problem). 
%Designing the $k$-\nashcoord model in a specific manner, performing more sophisticated analyses when required, so that the frameworks of \cite{ER14} and \cite{A+17} could be remodeled to apply to a significantly general problem is the main contribution of this paper. 

\subsection{Related Work}\label{sec:related-work}

The works most related to ours are \cite{ER14} and \cite{A+17} analyzing smoothed complexity of local-max-cut; see Section \ref{sec:challenge} for a detailed comparison. Independently, Bibaksereshkeh, Carlson, and Chandrasekaran \cite{BCC18} improved the running-time for the local-max-cut algorithm, and obtained smoothed polynomial and quasi-polynomial algorithms for local-$3$-cut and local-$k$-cut with constant $k$ respectively. 
%In local-$k$-cut the goal is to find a $k$-partition of the vertices such that the total weight of edges crossing any partition can not be improved by moving any one vertex. 
Reduction of Figure \ref{fig:maxcut-to-netco} easily extends to reduce local-$k$-cut to $k$-strategy \netcos, implying that the latter significantly generalizes the former. However, the reduction is not {\em smoothness-preserving}, and hence our smoothed efficient algorithm is not directly applicable to solve the local-$k$-cut problem.
%A common theme among these results, ours, as well as of Arthur and Vassilvitskii \cite{AV09} on $k$-means, and Englert, R\"{o}glin, and V\"{o}cking \cite{ERV14} on traveling salesman is to track improvement in some function which acts as a measure of progress.

\paragraph{Beyond-worst case complexity of NE computation.} The smoothed-efficient algorithm for local-max-cut directly gives one for party affiliation games~\cite{FPT04}. For two-player games which are non-potential in general, Chen, Deng, and Teng \cite{CDT06Smooth} ruled out polynomial smoothed complexity unless RP=PPAD. While, towards average case analysis B\'ar\'any, Vempala, and Vetta \cite{BVV07} showed that a game picked uniformly at random has a NE with support size two for both the players whp. The average case complexity of a random potential game was shown to be polynomial in the number of players and strategies by Durand and Gaujal \cite{DG16}.

\paragraph{Worst-case analysis.} Potential games and equivalently congestion games have been studied at length (e.g., \cite{Rosenthal73, MS96, RT02, FPT04, CMN05}), capturing routing and traffic situations (e.g., \cite{Smith79, DN84, Rgarden07, HS10, TMMA13,nw1}), and resource allocation under strategic agents (e.g., \cite{JT04,FT12,cong1}). Unlike general games, existence of the potential function ensures that these games always have a pure NE \cite{Rosenthal73}. 
Finding pure NE is typically PLS-complete \cite{FPT04,CD11}, while finding any NE, mixed or pure, is known to be in CLS \cite{DP11}, a class in the intersection of PPAD and PLS.
A remarkable collection of work studies the loss in welfare at NE through the notions of Price-of-Anarchy and Price-of-Stability (e.g., \cite{KP99,RT02,CK05,ADGMS06,ADKTWR08,spoa,poa1}). %To the best of our knowledge no smoothed complexity results are known for these games, and 
Our approach should help provide ways to obtain smoothed efficient algorithms for these games.

Worst case complexity of NE computation in general non-potential games has been studied extensively.
The computation is typically PPAD-complete, even for various special cases (e.g., \cite{AKV05,CDT06,Mehta14,inbal}) and approximation (e.g., \cite{CDT06Smooth,Rub16}). On the other hand efficient algorithms have been designed for interesting sub-classes (e.g., \cite{KT07,TS,Imm11,AGMS11,CD11,CCDP15,ADHLMS16,BB17,B18}), exploiting the structure of NE for the class to either enumerate, or through other methods such as parameterized LP and binary search. 
For two-player games, Lipton, Mehta, and Markakis gave a quasi-polynomial time algorithm to find a constant approximate Nash equilibrium \cite{LMM03}. Recently, Rubinstein \cite{Rub17ETH} showed this to be the best possible assuming exponential time hypothesis for PPAD, and Kothari and Mehta \cite{KM18} showed a matching unconditional hardness under the powerful algorithmic framework of Sum-of-Squares with oblivious rounding and enumeration. These results are complemented by communication \cite{RubCom1,RubCom2} and query complexity lower bounds \cite{Que2,Que1,Que3}.  
Lower bounds in approximation under well-accepted assumptions have been studied for the decision versions \cite{GZ,CS,HK,BKW,DFS}. %, like NP-hardness, SETH, and planted-clique, have also been studied for the decision versions of the problem, {\em e.g.,} if there exist more than one equilibrium, equilibrium with payoff at least $h$, etc. \cite{GZ,CS,HK,BKW,DFS}. 

\section{Game Model, Smoothed Analysis, and Statement of Results}
We introduce here the model we consider for Network Coordination Games, the notion of smoothed analysis, and state our main contributions.

\noindent\paragraph{Notation:} In what follows, $[k]$ denotes $\kset$, and $\ip{.,.}$ is the inner product. 

\subsection{Nash Equilibria in Network Coordination Games}\label{sec:PrelGame}
%-- network game, network coordination game, and Nash equilibrium
A game with two players, where each player has $k$ strategies, can be defined by two $k \times k$ payoff matrices $(A,B)$, one for each player. 
%When the first player plays $i\in [k]$ and the second plays $j\in[k]$, their respective payoffs are $A(i,j)$ and $B(i,j)$. 
It is called a {\em coordination game} if $A=B$. 
We assume without loss of generality that every player has the same number of strategies.
% Considering games where all players have the same number of strategies is without loss of generality (wlog) since we can add dummy strategies with bad payoffs.

\paragraph{$k$-\netco.}
A \netco is a multi-player extension of coordination games. 
The game is specified by an underlying undirected graph $G=(V,E)$, 
where the nodes are players, and each edge represents a two-player coordination game between its endpoints.
A $k$-\netco is where each player has $k$ strategies, and the edge games are represented by $k\times k$ matrices $A_{uv}$. 
%Each player is required to play the same strategy with all of its neighbours, and her payoff is the sum over all payoffs received in these games.
If $u$ plays $i\in[k]$ and $v$ plays $j\in [k]$ then both get payoff $A_{u,v}(i,j)$ on this edge; we will sometimes denote this by $A((u,i)(v,j))$ to disambiguate. %to specify which player is playing which strategy.
Nash equilibria are invariant to shifting and scaling of the payoffs, so $w.l.o.g.$ we assume every entry of $A_{uv}$ is contained in $[-1,1]$.  
% -- for simplicity of notations now on we will denote it by $A(u_i,v_j)$. 
%\note{All payoffs can be scaled by one value without affecting the solution. This is because doing so doesn't change the relative comparison between payoff values of a player, and a strategy that is better than others still remains so. Hence w}ithout loss of generality, all payoff values in $A$ are assumed to lie in the range $[-1,1]$.%, a standard assumption \note{($e.g.$ \cite{?}).}
Let $n=|V|$. A {\em strategy profile} is a vector $\sss\in[k]^n$ where each player chooses a strategy from $[k]$. %We denote $\sigma_u$ as the strategy assigned to player $u$. 
%For any strategy profile $\sss$, 
The payoff of player $u$ is then:
\[
\payoff_u(\sss) := \textstyle\sum_{v:\: uv\in E} A_{uv}(\sigma_u,\sigma_v)
\]
\paragraph{Nash Equilibrium.} At a  Nash equilibrium (NE) no player gains by deviating unilaterally. %{\em i.e.} by changing her strategy when strategies of all other players remain fixed. 
In general, NE strategy profiles may be randomized. However, a NE where every player chooses a strategy deterministically is called {\em pure Nash equilibrium} (PNE). Formally, strategy profile $\sss$ is a PNE, if and only if
%\[
$
\forall u \in V,\ \ \payoff_u(\sss) \ge \payoff_u(\sigma'_u,\sss_{-u}),\ \  \forall \sigma'_u \in [k],
$
%\]
where $\sss_{-u}$ denotes the strategies of all the players in $\sss$ except $u$.
In {\em Potential Games} \cite{Rosenthal73}, PNE's are known to always exist. 
Such a game admits a {\em potential function} which encodes the  individual ``progress'' of the players, {\em i.e.} $\exists g:[k]^n \rightarrow \mathbb R$ such that for all $\sss\in[k]^n$,
$
g(\sss) - g(\sigma'_u,\sss_{-u}) = \payoff_u(\sss) - \payoff_u(\sigma'_u,\sss_{-u}),\ \ \ \forall u, \forall \sigma'_u \in [k].%,\ \ \ \forall \sss \in [k]^n
$
Clearly, every local-maximum of $g$, {\em i.e.} $\sss$ such that $g(\sss) \ge g(\sigma'_u,\sss_{-u}),\ \forall u, \forall \sigma'_u\in[k]$, is a pure NE. % for the potential game. 
%The following is a well-known fact:

\begin{lem}[\cite{CD11}]
Network Coord.{ }Games are potential games with potential function % and the corresponding potential function is the total payoff of all the players (up to a factor $2$)
\begin{equation}\label{eq:pot}
\payoff(\sss) =\textstyle \sum_{(u,v) \in E} A_{uv}(\sigma_u,\sigma_v) =
\frac12\sum_{u\in V} \payoff_u(\sss)
\end{equation}
\end{lem}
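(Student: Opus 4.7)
The plan is to verify the two parts of the statement in turn: first the numerical identity, then the potential-function property.

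First I would establish the equality $\sum_{(u,v)\in E}A_{uv}(\sigma_u,\sigma_v)=\tfrac12\sum_{u\in V}\payoff_u(\sss)$ by a standard double-counting argument. Expanding the right-hand side using the definition $\payoff_u(\sss)=\sum_{v:\,uv\in E}A_{uv}(\sigma_u,\sigma_v)$, every undirected edge $(u,v)\in E$ contributes the term $A_{uv}(\sigma_u,\sigma_v)$ to $\payoff_u(\sss)$ and $A_{vu}(\sigma_v,\sigma_u)$ to $\payoff_v(\sss)$. Since the edge game is a coordination game (i.e., both endpoints receive the same entry, which is the defining feature used in the model of Section~\ref{sec:PrelGame}), these two contributions are equal, so each edge is counted twice in $\sum_u \payoff_u(\sss)$. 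Dividing by $2$ yields the claimed identity.

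Next I would show that $\payoff(\sss)=\sum_{(u,v)\in E}A_{uv}(\sigma_u,\sigma_v)$ satisfies the potential-function definition. Fix an arbitrary player $u\in V$ and alternative strategy $\sigma'_u\in[k]$, and consider the difference $\payoff(\sss)-\payoff(\sigma'_u,\sss_{-u})$. The only terms in the sum that differ between these two profiles are those on edges incident to $u$, since all other edges $(x,y)$ with $x,y\neq u$ depend only on $\sss_{-u}$ and hence cancel. What remains is
\[
\payoff(\sss)-\payoff(\sigma'_u,\sss_{-u})=\sum_{v:\,uv\in E}\bigl(A_{uv}(\sigma_u,\sigma_v)-A_{uv}(\sigma'_u,\sigma_v)\bigr)=\payoff_u(\sss)-\payoff_u(\sigma'_u,\sss_{-u}),
\]
which is exactly the required identity for the potential function $g=\payoff$.

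Since both steps are direct verifications from the definitions, there is no real obstacle here; the lemma is essentially a bookkeeping exercise, and the only subtlety to flag is the symmetry of the edge-payoff in coordination games, which is precisely what makes the factor of $\tfrac12$ work out and what distinguishes this from general two-player games on a network (where the sum of payoffs need not give a potential).
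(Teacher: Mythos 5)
Your proof is correct; the paper itself gives no proof, simply citing \cite{CD11}, and your two-step verification (double-counting the symmetric edge payoffs for the factor of $\tfrac12$, then cancelling all edges not incident to the deviating player) is exactly the standard argument that the citation stands in for.
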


Our goal is to find a pure NE for a given $k$-\netco.

\paragraph{Better-Response Algorithm (BR alg., or BRA).}
Another immediate consequence of being a potential game is that, for any strategy profile $\sss$, if some player $u$ can deviate to $\sigma'_u$ and improve her payoff, 
then the move $\sigma_u\mapsto\sigma'_u$ is termed a {\em better-response (BR) move} for player $u$ from strategy profile $\sss$. 
Clearly, under such a BR move, the potential function value increases, {\em i.e.,} 
% \[\payoff(s'_u,\sss_{-u})-\payoff(\sss) = \payoff_u(s'_u,\sss_{-u}) -\payoff_u(\sss)>0.\] 
$\payoff(\sigma'_u,\bm \sigma_{-u})-\payoff(\bm \sigma) = \payoff_u(\sigma'_u,\bm \sigma_{-u}) -\payoff_u(\bm \sigma)>0.$ 

Note that $\payoff(\sss)$ may only take $k^n$ possible values.
Hence, if a BR move is made whenever possible, the players must  converge to a local optimum of the potential function, or equivalently, to a pure NE of the game. 
This gives a local-search based {\em better-response} algorithm to solve $k$-\nashcoord. %$(a)$~Start at arbitrary strategy profile. $(b)$~If there is a player who can deviate and improve, change her strategy to an improving one. $(c)$~Repeat step $(b)$ until no player can make an improving move. 

% For the analysis purposes, we will view a strategy profile as a function $\sigma:V\rightarrow [k]$ and payoff values defined by $A$ as a $|E|k^2$ dimensional vector where entry $(uv,i,j)$ is $A(u_i,v_j)$. Then the potential value $\payoff(\sigma)$ can be thought of as a dot production $\langle B_\sigma, A\rangle$, where $B_\sigma$ is a $|E|k^2$ vector with $(uv, i, j)$ entry set to $1$ if $\sigma(u)=i$ and $\sigma(v)=j$, and $0$ otherwise. 

\subsection{Smoothed Analysis and Reductions}\label{sec:PrelSmooth}
The notion of smoothed analysis was %is a notion of beyond-worst-case analysis for problems with worst-case exponential-time algorithms that perform surprisingly well empirically. The notion 
introduced by Spielman and Teng %to ``continuously interpolatesbetween the worst-case and average-case analyses of algorithms'' 
\cite{ST04} to bridge the gap between average- and worst-case analysis. %To perform smoothed analysis of an algorithm, the input parameters are randomly sampled from some input distribution, and the running time of the algorithm is analyzed either in expectation or with high probability. \note{redundant with smoothed analysis explanation in intro. Can put the quote there and delete this para?}
For a search problem $\CP$ let $(I,X)$ be an instance where $I$ is possibly discrete information, and $X$ is a real-valued vector whose dimension depends on $I$. For example, in the case of a $k$-\netco, $I$ consists of the game graph $G$ and the number of strategies $k$, and $X$ is the payoff vector $A$. 

\begin{defn}[Polynomial Smoothed Complexity $w.h.p.$ or in expectation]\label{def:Smooth}
    Let $(I,X)$ be a random instance of $\CP$, where $I$ is chosen arbitrarily, and $X$ is a random real-valued vector whose entries are independent and have density at most $\phi$. If there exists an algorithm which solves arbitrary instances of $\CP$ in finite time, and for all $I$, solves the random input $(I,X)$ in time at most $(\phi\cdot |I|\cdot |X|)^c$ for some $c>0$ with probability at least $1-1/poly(\phi,|I|,|X|)$, where $|X|$ denotes the number of entries in $X$, then $\CP$ is said to have {\em polynomial smoothed complexity $w.h.p.$}
If the same holds in expectation, then $\CP$ is said to have {\em polynomial smoothed complexity in expectation.}
\end{defn}
Standard (Turing) reductions between two search problems $\CP$ and $\CQ$ are well-defined, and used to extend an algorithm of $\CQ$ to solve instances of $\CP$, or to imply hardness for $\CQ$ given hardness for $\CP$. 
We extend this notion to define {\em smoothness preserving reductions}. 
\begin{defn}[Strong and Weak Smoothness-Preserving Reductions]\label{def:SmoothRed}
A {\em randomized, smoothness-preserving reduction} from a search problem $\CP$ to $\CQ$ is defined by poly-time computable functions $f_1$, $f_2$, and $f_3$, and a real probability space $\Omega \subseteq \mathbb R^d$, such that,
\begin{itemize}
\item For any $(I,X)\in \CP$, 
    and for arbitrary $R\in \Omega$,
    $\big(f_1(I),\, f_2(X,R)\big)$ is an instance of 
    $\CQ$, such that all (locally optimal) solutions $\sss$ map to a solution $f_3(\sss)$ of $(I,X)$. %\note{Should we set $f_3$ to be linear? Because FIXP?}
    
\item Whenever the entries of $X$ and $R$ are drawn independently at random from distributions with density at most $\phi$, then $f_2(X,R)$ has entries which are independent random variables with density at most $poly(\phi,|X|,|R|)$. This is called a {\em strong} reduction.

\item If the entries of $f_2(X,R)$ instead of being independent are linearly independent combinations of entries of $X$ and $R$, and the density is similarly bounded, then call it a {\em weak} reduction.
\end{itemize}
\end{defn}

\paragraph{$d$-\maxcut.}\label{sec:PrelLMC}
%Part of our results include a reduction to the local-max-cut problem \cite{}. 
Extending the notion of local-max-cut \cite{SY91}, we define the $d$-\maxcut problem, given by an undirected graph $G=(V,E)$ with edge weights $w_{uv}$ for all $uv\in E$. 
The goal is to find a non-empty subset $S\subsetneq V$ of vertices, such that $\delta(S) := \sum_{uv\in E:u\in S,v\notin S} w_{uv}$ is a local optimum up to $d$-flips, {\em i.e.,} $\delta(S)\geq \delta(S')$ for all $S'\subset V$ such that $S$ and $S'$ differ by at most $d$ vertices, or $|(S\setminus S') \cup (S'\setminus S)|\le d$. %\triangle A)$ for all $A\subseteq V$ such that $1\leq|A|\leq d$, and where $\triangle$ denotes symmetric difference.\footnote{The set $A\triangle B$ is defined as the set of elements lying in exactly one of $A$ and $B$. So $A\triangle B = (A\setminus B)\cup(B\setminus A)$.}
Note that, $1$-\maxcut is the usual local-max-cut problem.

% the cut value across $(S,\bar{S})$, namely $\sum_{(u,v)\in E: u\in S, v\in \bar{S}} w_{uv}$, can not be improved by moving any vertex from $S$ to $\bar{S}$ or from $\bar{S}$ to $S$. 

% Similarly, we can also define {\em local-max-cut up to $d$-flips}, where the goal is to find a partition $(S,\bar{S})$, such that the cut value can not be improved by switching at most $d$ vertices. Here, think of $d$ as a constant.

For a constant $d$, the $d$-\maxcut problem admits a natural {\em FLIP algorithm}, which like the BR algorithm, will check whether there exists a local improvement, and move the candidate solution to the improved solution until no local improvement is possible.

\section{Overview of Our Results and Techniques}
\label{sec:PrelOR}

In this section, we give a high-level overview of the proof method and formally state the results of the paper. 
The following Sections~\ref{sec:BRA}--\ref{sec:red} provide the details for these proofs.
As discussed above, our results are twofold: first, we extend pre-existing proof methods to directly show that the BRA terminates in smoothed polynomial time, and second, we introduce the notion of a smoothness-preserving reduction which allows us to give an alternate algorithm for 2-\nashcoord, and a conditional one for $k$-\nashcoord.
We begin with a formal definition of the smoothed problem:

\vspace*{-3pt}
\paragraph{Smoothed $k$-\nashcoord.}
Given a $k$-\netco given as an undirected graph $G=(V,E)$ and an $(|E|k^2)$-dimensional payoff vector $A$, where the entries of $A$ are independent random variables supported on $[-1,1]$ with density at most $\phi$, find a PNE.%pure NE of the game. 
\smallskip

Since the real-valued input to the smoothed 
%$k$-\nashcoord 
problem is stochastic, the running time of an algorithm for it will be as well, and the guarantees will be stated either $w.h.p.$ or in expectation (Definition \ref{def:Smooth}). 
The proof of smoothed-poly running time falls within a framework which has been used in the past to show smoothed-polynomial running time for the FLIP algorithm in local-max-cut~\cite{ER14,A+17,BCC18}, and can also be applied to other local-improvement potential-descent algorithms. This section begins with an overview of this common framework, then proceeds to explain how the required bounds may be shown in our setting. 

\subsection{A Common Framework for Local-Improvement Algorithms}\label{sec:common}

%To highlight the framework common to our problem and local-max-cut, we 
Observe that the potential function for $k$-\nashcoord, as given in~\eqref{eq:pot}, is an integer linear combination of the payoff values, and is actually a 0--1 combination.
%In the case of our problem and local-max-cut, it is a 0--1 combination.
The following framework can be applied whenever this holds, as long as the potential function's range is also polynomially bounded.
It hinges on the following lemma:

\begin{lem}[\cite{Rog08}]\label{lem:main-praobability} Let $X\in \R^d$ be a vector of $d$ independent random variables where each $X_i$ has density bounded by $\phi$. Let $\alpha_1,\,\dotsc,\,\alpha_r$ be $r$ linearly independent vectors in $\mathbb{Z}^d$.
then the joint density of $(\ip{\alpha_i,X})_{i\in [r]}$ is bounded by $\phi^r$, and for any given $b_1,b_2,\dotsc \in \mathbb{R}$ and $\epsilon>0$,\\[-.5em] %In particular, if sets $J_i\subset \mathbb{R}$ have measure $\epsilon$, then
\begin{equation}\label{eq:prob-bound-intro}
\Pr\Big[\textstyle\bigwedge_{i=1}^r\ip{\alpha_i,X}\in [b_i,b_i+\epsilon]\Big] \leq (\phi\epsilon)^r
\end{equation}
\end{lem}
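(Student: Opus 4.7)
The plan is to first establish the joint-density bound, and then derive the probability bound immediately by integrating over the box. For the probability part, once we know the joint density of $(\ip{\alpha_i,X})_{i\in[r]}$ is bounded by $\phi^r$ everywhere, the event $\bigwedge_i \ip{\alpha_i,X}\in[b_i,b_i+\epsilon]$ corresponds to integrating this density over an axis-aligned box of volume $\epsilon^r$, yielding the stated bound $(\phi\epsilon)^r$. So the only real content is the density bound.

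To bound the joint density of $Y:=(\ip{\alpha_i,X})_{i\in[r]}$, the approach is a linear change of variables with a Jacobian argument. Stack the $\alpha_i$ as rows of an $r\times d$ integer matrix $A$. Since the $\alpha_i$ are linearly independent, $A$ has rank $r$, so some $r\times r$ submatrix $M$ of $A$ is nonsingular. By permuting coordinates of $X$ (which does not change the joint density, since the $X_i$ are independent), I may assume that $M$ consists of the first $r$ columns of $A$. I then condition on $(X_{r+1},\dotsc,X_d)=(x_{r+1},\dotsc,x_d)$; conditionally on this event, $Y$ is an affine function of $(X_1,\dotsc,X_r)$ whose linear part is exactly $M$.

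The crucial observation is that $M\in\mathbb Z^{r\times r}$ is nonsingular, so $\det M$ is a nonzero integer and hence $|\det M|\ge 1$. By independence, the conditional density of $(X_1,\dotsc,X_r)$ given $(X_{r+1},\dotsc,X_d)$ is the product of the marginal densities, each bounded by $\phi$, and so is itself bounded by $\phi^r$. The standard change-of-variables formula then gives the conditional density of $Y$ as (at most) $\phi^r/|\det M|\le \phi^r$. Integrating out the conditioning variable using the law of total probability preserves this bound, since
\[
f_Y(y)=\int f_{Y\mid X_{r+1},\dotsc,X_d}(y\mid x)\,f_{X_{r+1},\dotsc,X_d}(x)\,dx\;\le\;\phi^r\int f_{X_{r+1},\dotsc,X_d}(x)\,dx = \phi^r.
\]

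The main obstacle, as well as the slickness of the argument, lies in the integrality of the $\alpha_i$: it is what guarantees $|\det M|\ge 1$ and thereby prevents the Jacobian from blowing up the density. Everything else is bookkeeping: choosing the right submatrix, reducing to a conditional statement to avoid having to worry about unbounded marginals in the extra coordinates, and a single application of the change-of-variables formula. Once the density bound is in hand, the probability bound is immediate by multiplying by the volume of the box.
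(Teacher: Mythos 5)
Your proof is correct. Note that the paper does not prove this lemma at all --- it is imported as a black box from the cited reference \cite{Rog08} --- so there is nothing in the paper to compare against; your argument is the standard one from that literature (condition on the coordinates outside a nonsingular $r\times r$ integer submatrix $M$, use $|\det M|\ge 1$ in the change of variables, integrate out), and all the steps check out. One small quibble with your commentary only: the marginals of the conditioned-out coordinates $X_{r+1},\dotsc,X_d$ are also bounded by $\phi$ by hypothesis, so the conditioning is not needed to ``avoid unbounded marginals''; it is simply the clean way to reduce a rank-$r$ map on $\R^d$ to an invertible map on $\R^r$.
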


Here, $X$ constitutes the random inputs to the smoothed problem, and the $\alpha$ vectors represent the change in the potential function. 
Formally, if a potential function $\Phi(\bm  \sigma)$ is given for some problem, and $\bm \sigma_1,\bm \sigma_2,\bm \sigma_3,\dotsc$ denotes the walk through the state space given by some local-improvement algorithm,
then we have vectors $\alpha_i$ such that $\ip{\alpha_i,X}=\Phi(\bm \sigma_i)-\Phi(\bm \sigma_{i-1})$ for all $i$.
Setting $b_i=0$, the above lemma upper bounds the probability that {\em every} step of the local-improvement algorithm is only a small improvement \mbox{($\ip{\alpha,X}<\epsilon$)}, while still being an improvement \mbox{($\ip{\alpha,X}>0$)}.

Let $\mathcal E$ be the event in the probability statement, that is, $\textstyle\bigwedge_{i=1}^r\ip{\alpha_i,X}\in [0,0+\epsilon]$.
If $\mathcal E$ does not hold, and the sequence is indeed an improving one, then at least one of the improvements must be at least $\epsilon$. If $\mathcal E$ does not hold for {\em any} sequence of $\Omega(n)$ moves, then we can bound the running time of the iterative algorithm by
\[
	\frac n \epsilon\cdot \left(\max_{\sigma} \Phi(\bm \sigma) - \min_{\sigma}\Phi(\bm \sigma)\right)
\]
Finally, if $\Phi$ is bounded in a (quasi)polynomially-sized range, and $\epsilon$ is taken to be $1/(quasi)poly(n)$, then we conclude that the procedure runs in (quasi)polynomial time.

\paragraph{Rank Bound vs Union Bound.} Note that in order to get (quasi)polynomial running time with high probability, we must first upper-bound the probability of event $\mathcal E$, simultaneously for
{\em all} sequences of $\Omega(n)$ local-improvement moves, for which we simply take the union bound.
To counteract this large union bound, we must lower-bound the rank of the matrix $[\alpha_i]_{i=1}^n$.
It remains then to choose $\epsilon$ correctly to counteract the union bound, and proving the best rank in general.

This highlights the main technical challenge when applying the common framework: labelling the moves. 
If a move's label is too descriptive --- {\em e.g.} a full state vector --- then the union bound will be much too large. However, if a move's label is too vague --- {\em e.g.} denoting a move only by the player who is moving, not the strategy --- then the rank bound will not be large enough.  
We introduce, in the next section, those parameters which work in our setting.
At a high-level, our analysis follows the framework of the previous local-max-cut papers~\cite{ER14,A+17}. 
However, since players have multiple strategies, this poses some challenges for the technical details. 
We keep notation consistent whenever possible, to allow making analogies to past approaches.

\subsection{Notation}\label{sec:notation}
Recall the problem of $k$-\nashcoord defined above, and the better-response algorithm (BRA) discussed in Section~\ref{sec:PrelGame}.
We represent each better-response (BR) move by a player-strategy pair $(u,i)$, denoting that player $u$ is replacing strategy $\sigma_u$ by $i$, (assuming $i\neq \sigma_u$). 
We also denote as~$\sss^t$ the strategy profile after the $t$\ts{th} BR move. 
Formally, $\sss^0$ is the initial strategy profile, 
and if the move at time $t$ is given by $(u_t,i_t)$, then
\(
	\sss^{t} := (i_t,\sss^{t-1}_{-u_t}).
\)
The change in the potential function at this step is then given by $\payoff(\sss^t) - \payoff(\sss^{t-1})$, which is clearly an integer linear combination of the $A_{uv}(i,j)$ payoff values. 
Since the combinations have integer coefficients, and the payoff values have density bounded by $\phi$, then the total improvement (a random variable) has density at most $\phi$ as well.
For any fixed BR sequence $S$ of length $2nk$, we define these linear combinations as the set of vectors 
$\Lcal=\{L_1,\,L_2,\,\dotsc\}$, where $L_t\in \{-1,0,1\}^{(|E|\times k^2)},\forall t\in [2nk]$, with entries indexed by each of the payoff values. 
The values of its entries are chosen as follows:
\begin{equation*}
L_t((u,i),(v,j))=\left\{\begin{array}{llll}
1 & \text{if: ~~}u_t\in \{u,v\}&\text{and ~~}\sigma^t_u=i&\text{and ~~}\sigma^t_v=j.\\
-1 & \text{if: ~~}u_t\in \{u,v\}&\text{and ~~}\sigma^{t-1}_u=i&\text{and ~~}\sigma^{t-1}_v=j.\\
0 & \text{otherwise.} 
\end{array} \right.
\end{equation*}
That is, every entry signifies whether the corresponding payoff value remains unchanged ($0$), or gets added ($1$) or removed ($-1$) from the potential function. 
The inner product $\ip{L_t,A}$ gives the change in the total payoff of the player who makes a move at time $t$, and thus, the increase in potential due to the $t$\ts{th} move.  

Each of the inner products $\ip{L_t,A}$ can be shown to be unlikely to take values in the range $(0,\epsilon]$ by the assumption of bounded density. To argue that $\Lcal$ has high rank, we partition all players who make a move in the sequence into two sets: 
those players who never play the same strategy twice throughout the whole sequence (non-repeating players), and those who do (repeating players). 
We will denote these quantities as $p_1$ and $p_2$ respectively, and define $p=p_1+p_2$.
Furthermore, since we will sometimes have to refer to players, and other times to moves, 
we denote as $d$ the number of distinct (player,strategy) pairs which appear in the sequence, and let $q_0$ denote the number of players which play a ``return move,'' that is, moves where a player returns to their original strategy.
Note that for any sequence of moves $S$, we have \[
p(S)\leq d(S)\leq k\cdot p(S),\qquad q_0(S)\leq p_2(S),
\qquad q_0(S)\leq d(S)/2. \]
We also introduce the quantity $d_1$, which is the number of (distinct) moves by all non-repeating players, so $p_1\leq d_1\leq k\cdot p_1$.

\subsection{Smoothed Polynomial Complexity, Rank Bounds, and Union Bounds}
\label{sec:smoothed-poly-overview}
Recall the definition of the smoothed $k$-\nashcoord defined above. 
The first main contribution of this paper is to show the following result:

\begin{thm}\label{thm:poly}
Given a smoothed instance of {\em $k$-\nashcoord} on a complete game graph, and with an arbitrary initial strategy profile, 
then any execution of BRA, where improvements are chosen arbitrarily, will converge to a PNE in at most 
$(nk\phi)^{O(k)}$ steps, with probability \mbox{$1-1/poly(n,k,\phi)$}.
\end{thm}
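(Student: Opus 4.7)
The plan is to instantiate the common framework of Section~\ref{sec:common}. Fix $\epsilon = (nk\phi)^{-ck}$ for a sufficiently large constant $c$, and a threshold length $T = \Theta(nk)$. Since $G$ is complete and all payoffs lie in $[-1,1]$, the potential $\payoff(\sss)$ lies in $[-\binom{n}{2},\binom{n}{2}]$. If I can show that, with probability $1 - 1/poly(n,k,\phi)$, no contiguous subsequence of the BRA trajectory of length $T$ has all its improvements in $(0,\epsilon]$, then the total number of BR moves is at most $(n^2/\epsilon)\cdot T = (nk\phi)^{O(k)}$. So it suffices to prove a statement about arbitrary length-$T$ BR sequences, independent of how the algorithm actually runs.

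For a fixed sequence $S = ((u_1,i_1),\dotsc,(u_T,i_T))$ together with the initial strategies of the players that appear in it, the vectors $L_1,\dotsc,L_T$ defined in Section~\ref{sec:notation} are determined. By Lemma~\ref{lem:main-praobability}, the probability that every $\ip{L_t,A}$ lies in $[0,\epsilon]$ is at most $(\phi\epsilon)^{\mathrm{rank}(\Lcal)}$. I would then take a union bound over (a) all descriptions of $S$, of which there are at most $(nk)^T$, and (b) all initial strategies of the $p(S)$ players that move in $S$, contributing an extra factor of $k^{p(S)}$ (the non-moving players' strategies do not affect the $L_t$ because on the complete graph every edge-entry of $A$ is independent and fresh, and any fixed strategy of a non-mover just selects a subvector of $A$ that is still jointly independent with density $\leq \phi$).

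The heart of the proof is a rank bound on $\Lcal$ large enough to beat this union bound. For the complete graph, each move $(u_t,i_t)$ changes $2(n-1)$ entries of $A$ simultaneously, indexed by pairs $((u_t,i_t),(v,\sigma_v^{t-1}))$ and $((u_t,\sigma^{t-1}_{u_t}),(v,\sigma_v^{t-1}))$ for every $v\neq u_t$. Following the strategy of \cite{A+17} adapted to the $k$-strategy setting described in Section~\ref{sec:challenge}, I would split $S$ into \emph{separated blocks} using moves made by non-repeating players as separators, isolate a \emph{critical block} inside $S$, and show that within such a block the $L_t$-vectors corresponding to non-repeating moves are linearly independent because each such move contributes a uniquely supported pattern of $A$-entries that no later move can erase. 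This yields a rank bound of the form $\Omega(d_1)$, and combining with analogous bounds in the complementary blocks gives a global rank bound that depends favourably on $d_1$ and $T - O(k\cdot p_2)$.

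The main obstacle will be balancing this rank bound against the union bound once a player is allowed to revisit strategies. In local-max-cut a flip is its own inverse, so repeated moves cancel neatly; with $k$ strategies a player can cycle through states, and a single repeating player can produce $\Theta(k)$ different $L_t$-patterns that are not directly independent. To handle this I would classify sequences by the tuple $(d, p_1, p_2, q_0, d_1)$ introduced in Section~\ref{sec:notation}, use \emph{cyclic sums} of repeated moves by the same player (as sketched in Section~\ref{sec:challenge}) to encode the dependencies that a repeat introduces, and union-bound separately over each structural class. The calculation should show that the loss in rank from each repeating player is offset by a matching reduction in the number of sequences with that structure, leaving $(\phi\epsilon)^{\mathrm{rank}}\cdot(nk)^T\cdot k^{p(S)} = 1/poly(n,k,\phi)$ when $\epsilon$ is chosen as above. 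This balancing, rather than the rank bound itself, is what forces the constant in the exponent to grow with $k$ and gives the final $(nk\phi)^{O(k)}$ bound.
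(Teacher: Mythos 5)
Your overall architecture matches the paper's: fix $\epsilon=(nk\phi)^{-O(k)}$, show every length-$\Theta(nk)$ sequence is either non-improving or improves by $\epsilon$, union-bound over sequences, and conclude via the $[-\binom n2,\binom n2]$ potential range. The rank-bound sketch (separated blocks from non-repeating players, critical blocks, cyclic sums for repeaters) is also the right shape. However, there is a genuine gap at the single most delicate point of the argument: your claim that the union bound over initial configurations contributes only a factor $k^{p(S)}$ because ``the non-moving players' strategies do not affect the $L_t$ \dots any fixed strategy of a non-mover just selects a subvector of $A$ that is still jointly independent.'' This is false as stated: $L_t$ has nonzero entries in rows $((u_t,\cdot)(w,\sigma^0_w))$ for every inactive $w$, so the event $\bigwedge_t \ip{L_t,A}\in(0,\epsilon]$ genuinely changes with each of the $k^{n-p}$ choices of inactive players' strategies, and independence of the selected subvector does not let you avoid union-bounding over all of them --- you need the bad event to fail \emph{simultaneously} for every selection. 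Since the rank of a critical subsequence is at most its length $\leq 2k\,p(S)$, which can be far smaller than $n$, a naive $k^n$ factor destroys the bound, and this is exactly the obstacle the paper's proof is built around.

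The paper closes this gap with two distinct devices, one per case. For $p_1\geq p_2$ it shifts the potential to $\widetilde A_{uv}(\sigma,\sigma')=A_{uv}(\sigma,\sigma')-A_{uv}(\sigma^0_u,\sigma^0_v)$, collapses the inactive players' contribution into per-(player,strategy) constants $C(u,\sigma)$ with $C(u,\sigma^0_u)=0$, and rounds these to multiples of $\epsilon$, trading the $k^n$ factor for $(2n/\epsilon)^{d-q_0}$ --- which is then absorbed because the rank exceeds $d-q_0$ by an extra $d_1/2$. For $p_2\geq p_1$ it replaces the $L_t$'s by cyclic sums $V(u)=\sum_j L_{t_j}$ over a full return cycle of a repeating player; these sums have identically zero entries in all inactive-player rows, so they are literally independent of the inactive configuration, and a separate digraph/arborescence argument certifies rank $p_2/2$ for the sums themselves (at the cost of widening the interval to $(0,\ell\epsilon)$). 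Your proposal mentions cyclic sums only as a way to ``encode the dependencies that a repeat introduces,'' which misses their actual role, and contains no analogue of the bucketing step at all. Without one of these two mechanisms (or a substitute), the union bound cannot be reduced to $k^{p(S)}$ and the proof does not go through.
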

\vspace*{-1ex}

We also have convergence in $(nk\phi)^O({k})$ moves in expectation, as shown in theorem~\ref{thm:expectation}, discussed below.
Note that the probability value in the above statement is over the possible choices of payoff values for the network-coordination game, and not over executions of the BR algorithm. This statement holds true regardless of how BRA is implemented, even adversarially.
The {\em complete game graph} condition requires that any two players in this game share an edge in the game graph, and the payoff matrix for a non-existing game edge is not a fixed, all-zeros matrix, but is instead a random payoff matrix like all other edges.
This completeness technicality fits the model of most known smoothed polynomial-time algorithms ($e.g.$ \cite{ST04,SST06,ERV14,A+17,BCC18}) which require {\em every} parameter to be perturbed. 
We will later discuss results in the case of incomplete graphs, where missing edges are assumed to be 0-payoff games and are unperturbed.

Theorem~\ref{thm:poly} is shown using the ``common framework'' from Section~\ref{sec:common}. 
The random input is the set of (random) payoff matrices $\{A_{uv}\}_{u,v\in G}$, and the ``$\alpha$'' vectors are the columns of the $\mathcal L$ matrix as defined in Section~\ref{sec:notation} for a BR sequence of length $2nk$.
If all of the $n$ players appear in the sequence $S$,  then as shown in Corollary \ref{cor:all-active}, $\mathcal L$ has rank at least $\left(1-\tfrac 1n\right)\big(d(S)-q_0(S)\big) \geq n-1$.
Therefore, any sequence where every player is present is ``good'' with probability $(\phi\epsilon)^{n-1}$, where a sequence is ``good'' if it contains either a non-improving move, or a move which improves the potential by at least~$\epsilon$.

\paragraph{Case I. All Players Active.} Recall from Section~\ref{sec:common} that we wish to take the union bound over all sequences of length $\Omega(n)$. In fact, for this result, we will consider sequences of length $2nk$. Since there are $n$ players, there are $k^n$ possible initial configurations of the players, and $(nk)^{2nk}$ possible sequences, so with probability $1-(nk)^{O(nk)}(\phi\epsilon)^{\Omega(n)}$, all linear-length sequences are ``good.''
Setting $\epsilon= 1/\phi(nk)^{O(k)}$ suffices to have the probability of success be $1-1/poly(n,k)$.
Since $-n^2\leq \payoff(\sss)\leq n^2$, then with probability $1-1/poly(n)$, the BRA must terminate in at most $2n^2/\epsilon = \phi\cdot (nk)^{O(k)}$ many iterations, as desired.

\paragraph{Case II. Few Players Active.} However, it is not always the case that there are $n$ {\em active} players, or even $\Omega(n)$ active players in a given sequence of length $2nk$. A player is {\em active} if they appear in the sequence, and otherwise, {\em inactive}. 
We will show, in the following sections, rank bounds which depend on the $p_1$, $p_2$, $q_0$, and $d$ values, as defined in Section~\ref{sec:notation}. 
As we will discuss, these ranks will not be sufficiently large to handle the na\"ive union bound described above. The following table summarizes the bound we show in each case, and the resulting success probability, 
under the assumption $p(S)\leq \ell = 2(d(S)-q(S))\leq k\cdot p(S)$, where $\ell$ is sequence length, and $p_1(S)\leq d_1(S)$.
\begin{center}
	\renewcommand{\arraystretch}{1.4}
	\begin{tabular}{c|ccc}
	Case&Rank Bound&Union Bound&Probability of Success\\\hline
	$p_1\geq p_2$&$d(S)-q_0(S)+d_1(S)/2$&$k^{p(S)}(4n/\epsilon)^{d(S)-q_0(S)}(nk)^\ell$& $1-(nk\phi)^{O(k\cdot p(S))} \epsilon^{p(S)/4}$ \\[2pt]
	$p_2\geq p_1$&$p_2(S)/2$&$k^{p(S)}(nk)^\ell$&$1-(nk)^{O(k\cdot p(S))}(\ell \phi\epsilon)^{p(S)/4}$
	\end{tabular}
\end{center}
And so, setting $\epsilon = 1/(nk\phi)^{O(k)}$ suffices in  both cases for good success probability. 
We later show why $\ell= 2(d(S)-q_0(S))$ suffices. The other inequalities follow by definition.

\subsubsection{Mostly Non-Repeating Players, and a Union Bound via Bucketing}
Recall, a {\em non-repeating} player is one who plays each strategy at most once throughout the better-response sequence, including the initial strategy at the beginning of the sequence. Non-repeating players are key to showing rank bounds in the following sense: 

\paragraph{Rank Bounds through Separated Blocks.}
Let $v$ be some non-repeating player, and suppose the $\tau$-th move is $(v,\sigma)$.
Let $\tau'$ be the next move of $v$. Then we must have that $((v,\sigma)(*,*))$ entries of $L_t$ can only be nonzero for $\tau\leq t\leq \tau'$, and at least one of these entries must be nonzero. Therefore, if the submatrix consisting of columns $\{L_t:\tau<t<\tau'\}$ restricted to rows of the form $((v,\sigma)(*,*))$ can be shown to have large rank, then in some sense we may isolate this submatrix and inductively show a large rank for the rest of the matrix.

Furthermore, if there is some player $w$ which is inactive in the sequence, playing strategy $\sigma_w$ from the start, then the submatrix restricted to $((v,*)(w,\sigma_w))$ rows, and columns indexed by moves of player $v$, can again be isolated and shown to be upper-triangular.

These observations can be extended to hold simultaneously for all intervals between two consecutive moves of the non-repeating players:
Let $P_1$ be the set of non-repeating players, and let $T=\{\tau_v:v\in P_1\}$ be the set of all moves for non-repeating players. 
Suppose $T=\{t_1<t_2<\dotsc<t_m\}$, and say $t_0=0$, $t_{m+1}=|S|$.
Let $S_i$ for $i=0,\,1,\,\dotsc,\, m$ be the subsequence of $S$ between $t_i$ and $t_{i+1}$ excluding endpoints. We call these $S_i$ the {\em separated blocks} of the sequence $S$.
If the move at time $t_i$ is $(v_i,\sigma^i)$, then it can be seen that
\[
	rank(S)\geq |T| + \textstyle\sum_{i=0}^m rank\big|_{((v_i,\sigma^i)(*,*))} (S_i)
\]
where $rank\big|_{C}(S)$ is the rank of the submatrix given by $S$ restricted to entries from $C$. This is shown in Lemma~\ref{lem:rank-d-q} by applying the above observations, and sorting the blocks in increasing order.

\paragraph{Extension via Critical Subsequences.}
As argued in Lemma~\ref{lem:rank-d-q}, it is not hard to show that for any sequence $S$, $rank(S)\geq d(S)-q_0(S)$. In fact, we show that for a separable block $S_i$, $rank\big|_{((v_i,\sigma^i)(*,*))}(S_i)\geq d(S_i)-q_0(S_i)$. However, this is not enough to immediately give the desired rank bound of $d(S)-q_0(S)+d_1(S)/2$. 
For this, we introduce the notion of a {\em critical subsequence}, based on the notion of a critical block introduced in~\cite{A+17}. 
Let $\ell(S)$ denote the length of a sequence, and call a contiguous subsequence $S'\subseteq S$ {\em critical} if $\ell(S')\geq 2(d(S')-q_0(S'))$, but for every sub-subsequence $S''\subsetneq S'$, $\ell(S'')< 2(d(S'')-q_0(S''))$. 
We show in Claim~\ref{lem:crit} that for any $S$ with $|S|\geq 2nk$, $S$ must contain some critical subsequence $S'$ which satisfies $\ell(S')=2(d(S')-q_0(S'))$.

Consider, now, the rank bounds due to separated blocks applied to a critical subsequence $S'$: since every separated block $S_i$ of $S'$ is a strict subsequence of $S'$, $d(S_i)-q_0(S_i)> \ell(S_i)/2$, but $\ell(S') = d_1(S') + \sum_{S_i\text{ separated}} \ell(S_i)$, so 
\[
	rank(S') \ \geq\  
	d_1(S') + \textstyle\sum_{S_i\text{ sep.}} d(S_i)-q_0(S_i) \ > \ 
	d_1(S')/2 + \ell(S')/2
\]
Our desired rank bound follows from recalling that $\ell(S')=2d(S')-2q_0(S')$.

\paragraph{Union Bound via Bucketing.}
Working within a critical subsequence allows us to find high-rank subsequences. However, when the critical subsequence is too small, issues may arise when taking the union bound over all sequences: there are $k^n(nk)^\ell$ sequences of length $\ell$, and the rank of a sequence is at most its length. Thus,  we get success probability $1-k^n (nk)^\ell (\phi\epsilon)^{\ell}$ in the best case. 
If $\ell$ is small, the $k^n$ term dominates the probability bound, and $\epsilon$ may need to be exponentially small for good results.
The issue at hand is that the initial strategies of inactive players contribute too much to the union bound.
For the case of $p_1\geq p_2$, we separate out their effect on the potential function, and simply keep track of the effect size.
The first part of this method was introduced in~\cite{A+17}, and allows us to reduce the $k^n$ term to a $k^{p(S)}$ term, while paying in the exponent of $\epsilon$.
The idea is as follows: let $P_0$ be the inactive players, and $P_1$ be the active players. then 
\[
	\payoff(\sss) = \sum_{uv\in E}A_{uv}(\sigma_u,\sigma_v) 
	= %\underbrace{
    \sum_{u,v\in P_1}\!\! A_{uv}(\sigma_u,\sigma_v)
    %}_{\text{(a)}} 
    + 
	\sum_{u\in P_1}
    %\underbrace{
    \sum_{w\in P_0}A_{uw}(\sigma_u,\sigma_w)
    %}_{\text{(b)}} 
    + 
	%\underbrace{
    \sum_{w,w'\in P_0}\!\!A_{ww'}(\sigma_w,\sigma_{w'})
    %}_{\text{(c)}}
\]
The left terms depend only on the strategies of the active players, the right term is constant, and the middle terms can be separated into $|P_1|$ constant terms, one per active player, per strategy played. 
These constant terms may be rounded to the nearest multiple of $\epsilon$, and lie in the range $[-n,n]$. Therefore, we need only keep track of $k^{p(S)}(2n/\epsilon)^{d(S)}(nk)^\ell$ values. Unfortunately, this is shy of our goal, as we can only show rank $d(S)-q_0(S)+\Omega(p_1)$, rather than $d(S)+\Omega(p_1)$.
Shifting the sum to cancel the initial payoff of all players, however, allows us to reduce the union bound:
\[
	\payoff(\sss^t)-\payoff(\sss^0) = 
	%\underbrace{
    \sum_{u,v\in P_1}\!\! \widetilde A_{uv}(\sigma^t_u,\sigma^t_v)
    %}_{\text{(a')}} 
    + 
	\sum_{u\in P_1}%\underbrace{
    \sum_{w\in P_0}\widetilde A_{uw}(\sigma^t_u,\sigma^t_w)
    %}_{\text{(b')}} 
    + 
	%\underbrace{
    \sum_{w,w'\in P_0}\!\!\widetilde A_{ww'}(\sigma^t_w,\sigma^t_{w'})%}_{\text{Always 0}}
\]
where $\widetilde A_{uv}(\sigma,\sigma') = A_{uv}(\sigma,\sigma')-A_{uv}(\sigma^0_u,\sigma^0_v)$. This has the effect of cancelling out $q_0(S)$ distinct middle terms, getting the desired union bound.

\subsubsection{Mostly Repeating Players, and Cyclical Sums}\label{sec:cyclic}
Recall the table of bounds from Section~\ref{sec:smoothed-poly-overview}. 
The previous section showed the bounds and analyses for the first row of the table, in the case $p_1\geq p_2$. 
The previous section's analysis only works when restricted to a critical subsequence $S'$ with $p_1(S')\geq p_2(S')$. Thus, we must also restrict ourselves to a critical subsequence, where here $p_2(S')\geq p_1(S')$. 
This means we must again find ways to control the union bound terms due to inactive players.

The fundamental concept here is the notion of a {\em cyclical sum}.
These are vectors which are sums of vectors from $\mathcal L$, which all have zero entries in rows for inactive players.
Suppose player $u$ moved to strategy $i$ twice, 
and let $\tau_0$ be the first occurrence of $(u,i)$ in the BR sequence (possibly t=0), and let $\tau_1,\tau_2,\dotsc,\tau_k$ be all subsequent appearances of $u$ in the sequence, playing any strategy. Suppose $\tau_m$ is the second occurrence of $(u,i)$ in the BR sequence. 
Let $w$ be some inactive player, who is always playing strategy $\sigma_w$.
Then the sum $L_{\tau_1}+L_{\tau_2}+\dotsm+L_{\tau_m}$ cancels out all $((u,*)(w,\sigma_w))$ entries, since each one gets added in some $L_{\tau_i}$, and and removed in $L_{\tau_{i+1}}$, where $\tau_{m+1}=\tau_1$. 
% Note that we do not consider $\tau_0$ in this sum.
These sums, therefore, do not depend on the initial configurations of inactive players.

Thus, working with the cyclic sums gives a $k^{p(S)}(nk)^\ell$-sized union bound.
However, the main lemma of the common framework does not directly apply, since we are bounding the rank of these cyclic sums instead of $\mathcal L$. We use the fact that $\Pr[\ip{L_t,A}\in(0,\epsilon)\text{ for }t=\tau_1,\tau_2,\dotsc]\leq \Pr[\ip{L_{\tau_1}+L_{\tau_2}+\dotsm+L_{\tau_m},\,A}\in (0,\ell\epsilon)]$ and bound this instead.
% To bound the probability that $\ip{L_t,A}$ lie in $(0,\epsilon)$ for all $t$, it suffices to bound the probability that $\ip{L_{\tau_1}+L_{\tau_2}+\dotsm+L_{\tau_m},\,A}$ lie in $(0,\ell\epsilon)$. 

\paragraph{A Rank Bound for Cyclic Sums.} It remains to show that these cyclical sums have large rank. 
This rank bound will be applicable in the case of non-complete graphs as well as complete. 
A cyclical sum must contain a non-zero entry, as otherwise the sequence can not be improving.
This allows us to form an auxiliary digraph, where the nodes are the active players, and we add an edge from a repeating player $u$ to any other player $v$, if {\em some} cyclic sum for $u$ contains a non-zero entry with $v$. 
We show in Lemma~\ref{lem:rank-p2} that there exists a way to bi-partition the nodes of this graph such that for one of the two halves contains at least half of the repeating players, and each node in this half has an out-neighbor in the other half. 
These edges, and their associated matrix entries, allow us to form an upper-triangular sub-matrix of rank $p_2/2$, giving us the desired bound.

\subsection{Smoothed Quasipolynomial Complexity of the BRA on General Graphs}
We have shown above that for complete game graphs, the BRA terminates in polynomial time with high probability, and in expectation.
We discuss here how the cyclic-sum interpretation of Section~\ref{sec:cyclic} immediately gives quasi-polynomial smoothed complexity for arbitrary game graphs. Lemma~3.4 in~\cite{ER14} proves that any sequence of $\Omega(n)$ {\em improving} moves must contain a subsequence $S'$ with at least $\Omega(\ell(S')/\log n)$ distinct repeated moves. For $k$-\nashcoord, this implies that a sequence of $\Omega(nk)$ player-strategy pairs in a BRA sequence must contain a subsequence $S'$ with at least $\Omega(\ell(S')/\log(nk))$ distinct recurring pairs. Therefore, $p_2\geq \Omega(\ell/k\log(nk))$, since each player can only appear in $k$ distinct pairs. 
This fact, along with the discussion in Section~\ref{sec:cyclic}, allow us to show the following:
\begin{thm}\label{thm:qpoly}
Given a smoothed instance of {\em $k$-\nashcoord} with an arbitrary initial strategy profile, 
then any execution of a BR algorithm where improvements are chosen arbitrarily will converge to a PNE in at most 
% $O(\phi(nk)^{(1+\eta)k\log(nk)})$
$\phi\cdot (nk)^{O(k\log(nk))}$
% steps, with probability $1-1/(nk)^{O(\eta)}$, for any fixed $\eta>0$.
steps, with probability $1-1/poly(n,k)$.
\end{thm}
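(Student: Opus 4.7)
The plan is to adapt the cyclic-sum framework of Section~\ref{sec:cyclic} to arbitrary game graphs, replacing the complete-graph-specific union bound arguments with a combinatorial pigeonhole on repeated moves. The key observation is that the rank bound for cyclic sums (Lemma~\ref{lem:rank-p2}) and its associated union-bound size $k^{p(S)}(nk)^{\ell}$ both apply verbatim to arbitrary graphs, because the cyclic-sum cancellation removes any dependence on the initial strategies of inactive players. What must be replaced is the guarantee that $p_2(S)$ is linear in the sequence length $\ell(S)$, which on arbitrary graphs we only get logarithmically.

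First, I would lift the repeated-move lemma of Etscheid--R\"oglin (Lemma~3.4 in \cite{ER14}) to the player-strategy-pair setting. Their lemma shows that any improving FLIP sequence of length $\Omega(n)$ contains a contiguous subsequence $S'$ in which at least $\Omega(\ell(S')/\log n)$ distinct moves are repeated within $S'$. Re-indexing moves as (player, strategy) pairs and replacing $n$ by $nk$ (the total number of such pairs), the same proof gives: any BR sequence of length $\Omega(nk)$ contains a contiguous subsequence $S'$ in which at least $\Omega(\ell(S')/\log(nk))$ distinct pairs appear more than once. Since any single player contributes at most $k$ distinct pairs, this forces $p_2(S') \geq \Omega(\ell(S')/(k\log(nk)))$.

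Second, I would apply the cyclic-sum rank analysis of Section~\ref{sec:cyclic} to this $S'$. By Lemma~\ref{lem:rank-p2}, the cyclic sums associated with the repeating players of $S'$ span a collection of linearly independent integer combinations of $\Lcal$ of rank at least $p_2(S')/2$. Each such sum is a sum of at most $\ell(S')$ of the $L_t$ vectors, so if every move of $S'$ improves the potential by a value in $(0,\epsilon]$, each cyclic sum lies in $(0,\ell(S')\epsilon]$. Lemma~\ref{lem:main-praobability} then bounds this probability by $(\ell(S')\phi\epsilon)^{p_2(S')/2}$. The union bound over choices of $S'$ -- starting pair per active player and the $\ell(S')$ moves -- is $k^{p(S')}(nk)^{\ell(S')}$ (no factor of $k^n$ since cyclic sums ignore inactive players). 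Combining,
\begin{equation*}
\Pr[\text{some $S'$ is slow-improving}] \;\le\; k^{p(S')}(nk)^{\ell(S')}\cdot (\ell(S')\phi\epsilon)^{\Omega(\ell(S')/(k\log(nk)))}.
\end{equation*}

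Setting $\epsilon = 1/(nk\phi)^{c\,k\log(nk)}$ for a sufficiently large constant $c$ makes this probability drop below $1/\mathrm{poly}(n,k)$ for every $\ell(S')$ at least some $\Omega(nk)$. Since the potential lies in $[-n^2,n^2]$ and strictly increases at every BR move, the algorithm must terminate within $2n^2/\epsilon = \phi\cdot(nk)^{O(k\log(nk))}$ steps. The main obstacle I expect is verifying the first step carefully: the original \cite{ER14} counts repetitions of single moves, and I need to confirm that translating to (player, strategy) pairs costs exactly the advertised $\log(nk)$ factor, and moreover that nothing in the cyclic-sum rank argument of Section~\ref{sec:cyclic} silently used the complete-graph hypothesis -- the auxiliary digraph construction behind Lemma~\ref{lem:rank-p2} should only need that each repeating player has at least one nonzero entry in some cyclic sum, which holds whenever the BR move is genuinely improving, independently of graph completeness.
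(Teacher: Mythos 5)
Your proposal is correct and follows essentially the same route as the paper's proof of Theorem~\ref{thm:in-text-qpoly}: lifting Lemma~3.4 of \cite{ER14} to player--strategy pairs to get $p_2(S')\geq\Omega(\ell(S')/(k\log(nk)))$, applying the cyclic-sum rank bound of Lemma~\ref{lem:rank-p2} (which, as you correctly note, does not use completeness), and balancing the $k^{p}(nk)^{\ell}$ union bound with $\epsilon=1/(nk\phi)^{O(k\log(nk))}$. The only caveat is that the log-repeating subsequence produced by the lifted lemma may be short, so the final union bound must range over all lengths $1\leq\ell\leq O(nk)$ (handled by a geometric sum, as in the paper), not only over $\ell=\Omega(nk)$.
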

\vspace*{-1ex}

Our results also include a meta-theorem stating that, for any problem in PLS with bounded total improvement, a $w.h.p.$ smoothed complexity bound implies an expected-time bound (see Section~\ref{sec:exp}, Theorem~\ref{thm:expected-time}). 
This, together with Theorem \ref{thm:poly}, imply:

\begin{thm}\label{thm:expectation}
Given a smoothed instance of {\em $k$-\nashcoord} with an arbitrary initial strategy profile, 
then any execution of BRA where improvements are chosen arbitrarily will converge to a PNE in $O(\phi)\cdot (nk)^{O(k\log(nk))}$ steps in expectation.
Furthermore, if the game graph is complete, it converges in at most $(nk\phi)^{O(k)}$ steps in expectation.
\end{thm}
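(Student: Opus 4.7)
The plan is to derive Theorem \ref{thm:expectation} directly from the w.h.p.\ bounds of Theorems \ref{thm:poly} and \ref{thm:qpoly} via the meta-theorem for PLS problems (Theorem \ref{thm:expected-time}). The argument amounts to a formal transfer of w.h.p.\ smoothed bounds to expected-time bounds, leveraging the fact that $k$-\nashcoord is a PLS problem whose potential function takes values in a polynomially bounded range.

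First I would verify the hypotheses of the meta-theorem for $k$-\nashcoord. By~\eqref{eq:pot}, the potential $\payoff(\sss)$ lies in $[-n^2,n^2]$, since each of the at most $\binom{n}{2}$ edges contributes a payoff in $[-1,1]$; hence the total improvement along any BRA trajectory is at most $2n^2$. Moreover, BRA is a valid PLS local-search procedure: every BR move strictly increases the potential, so no strategy profile is ever revisited and the trivial worst-case runtime is bounded by the number of profiles $k^n$. These two facts (bounded potential, finite worst-case) are exactly what the meta-theorem needs.

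Next, I would invoke the meta-theorem separately in the two cases. For arbitrary game graphs, Theorem \ref{thm:qpoly} bounds the BRA runtime by $\phi\cdot(nk)^{O(k\log(nk))}$ with probability $1-1/\mathrm{poly}(n,k)$; feeding this into Theorem \ref{thm:expected-time} yields the claimed expected bound $O(\phi)\cdot(nk)^{O(k\log(nk))}$. For complete game graphs, Theorem \ref{thm:poly} bounds the runtime by $(nk\phi)^{O(k)}$ with probability $1-1/\mathrm{poly}(n,k,\phi)$, and the same invocation of Theorem \ref{thm:expected-time} gives the $(nk\phi)^{O(k)}$ expected-time bound.

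The main obstacle here lies in the meta-theorem itself rather than in this corollary. A naive splitting of the expectation would give $\mathbb{E}[T]\leq T_0 + \Pr[T>T_0]\cdot k^n$, which is still exponential since the failure probability is only $1/\mathrm{poly}$. Theorem \ref{thm:expected-time} must therefore exploit a finer tail than a single threshold: the proof of Theorems \ref{thm:poly} and \ref{thm:qpoly} actually produces a family of bounds parameterized by the minimum per-move improvement $\epsilon$, asserting that with probability at least $1-g(\epsilon)$ every length-$\Theta(nk)$ subsequence of BR moves either terminates or gains at least $\epsilon$ in potential. Integrating this parameterized tail via $\mathbb{E}[T]=\int_0^\infty \Pr[T>t]\,dt$ (equivalently, summing over dyadic choices of $\epsilon$) yields a finite expectation that matches the w.h.p.\ bound up to constants. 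Supplying and verifying this tail-integration, with the accompanying bookkeeping of how $\epsilon$ enters the exponent in both Theorem \ref{thm:poly} and Theorem \ref{thm:qpoly}, is the main technical content encapsulated by Theorem \ref{thm:expected-time}, and is what makes the present corollary go through cleanly.
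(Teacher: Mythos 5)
Your proposal is correct and follows essentially the same route as the paper: the paper derives Theorem~\ref{thm:expectation} by feeding the $\epsilon$-parameterized probability bounds from~\eqref{quasi-delta-fixed} and~\eqref{poly-delta-fixed-p} (not the w.h.p.\ theorems as black boxes) into the meta-theorem, Theorem~\ref{thm:expected-time}, whose proof is exactly the tail-summation $\mathbb E[T]=\sum_t\Pr[T\geq t]$ with $t=\gamma i$ that you describe. Your identification of why a single-threshold split fails and of the need for the parameterized tail is precisely the content of Corollaries~\ref{corr:expected-quasi} and~\ref{corr:expected-poly}.
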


% Along with the same meta-theorem that was used to show polynomial expected-time convergence of the BRA for complete graphs, we also show

% \begin{thm}\label{thm:poly-expectation}
% Given a smoothed instance of {\em $k$-\nashcoord} with an arbitrary initial strategy profile, 
% then any execution of BRA where improvements are chosen arbitrarily will converge to a PNE in at most 
% $O(\phi)\cdot (nk)^{O(k\log(nk))}$ steps in expectation.
% \end{thm}

\subsection{Smoothness-Preserving Reduction to 1- and 2-\maxcut}\label{sec:overview-reduction}
Our second set of results analyzes {\em smoothness-preserving} reductions for the $k$-\nashcoord problem, to give alternate algorithms, and to introduce a reduction framework for smoothed problems.
The authors of this paper do not know of any prior notion of smoothness-preserving reduction.
Recall the {\em smoothness-preserving} reductions defined in Section \ref{sec:PrelSmooth}. In this section, we discuss how these reduction may be applied (Theorem \ref{thm:meta-red}), and then give an overview of how to obtain such reductions for the $k$-\nashcoord problem and prove Theorems \ref{thm:kstrat} and \ref{thm:2strat} below.
All the details and proofs are in Section \ref{sec:red}.

First, if problem $\CP$ admits a {\em strong} smoothness-preserving reduction to problem $\CQ$, and $\CQ$ has smoothed polynomial algorithm {\em w.h.p.}, then so does $\CP$. Since such reductions allow use of extra randomness, we need to ensure that it does not affect the {\em high-probability} statement by much. We do this using Markov's inequality and careful interpretation of the extra randomness. 
Second, if problem $\CP$ admits a {\em weak} smoothness-preserving reduction to {\em local-max-cut} on (in)-complete graphs, then $\CP$ has smoothed (quasi)-polynomial complexity. This crucially requires a rank-based analysis like \cite{ER14,A+17,BCC18} for local-max-cut.
These are formalized in the following result: 
\begin{thm}\label{thm:meta-red}
Suppose problem $\CQ$ has (quasi-)polynomial smoothed complexity. Then, if problem $\CP$ admits a strong smoothness-preserving reduction to $\CQ$, then $\CP$ also has (quasi-)polynomial smoothed complexity. If instead, $\CP$ admits a weak smoothness-preserving reduction to local-max-cut on an (arbitrary) complete graph, then $\CP$ again has a (quasi)polynomial smoothed complexity.
%If instead, $\CP$ admits a weak smoothness-preserving reduction to local-max-cut with (in)complete graph, then $\CP$ has a (quasi)polynomial smoothed complexity.
\end{thm}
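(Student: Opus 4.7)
The plan is to handle the two claims separately. For the strong reduction, I would give a straightforward reduction-plus-amplification argument: on input $(I, X)$, the algorithm samples $R$ from $\Omega$ (with each coordinate having density at most $\phi$), forms the instance $(f_1(I), f_2(X, R))$ of $\CQ$, runs the $\CQ$-algorithm, and maps any returned solution back via $f_3$. By the definition of a strong smoothness-preserving reduction, the entries of $f_2(X, R)$ are independent with density at most $\phi' = poly(\phi, |X|, |R|)$, so $(f_1(I), f_2(X,R))$ is a valid smoothed instance of $\CQ$. The $\CQ$-algorithm thus terminates in $(\phi' \cdot |f_1(I)| \cdot |f_2(X, R)|)^c$ steps with probability at least $1 - 1/poly$ over the joint randomness of $(X, R)$. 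To translate this into a statement over $X$ alone, I would apply Markov's inequality to the conditional failure probability: with high probability over $X$, the conditional failure probability over $R$ is small, so either a single run or a few independent repetitions suffice. Since $|R|$ and $\phi$ are polynomially bounded, the inflated density $\phi'$ remains polynomial, and the smoothed-complexity bound is preserved.

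For the weak reduction claim, entries of $f_2(X, R)$ are no longer independent but are linearly independent combinations of the base independent variables $(X, R)$, which rules out direct black-box use of a smoothed algorithm. Instead, I would argue that the rank-based analysis of smoothed local-max-cut from \cite{ER14, A+17, BCC18} goes through essentially unchanged. At its core, that analysis invokes Lemma \ref{lem:main-praobability} on $r$ linearly independent integer vectors $\alpha_1, \ldots, \alpha_r$ formed over the edge weights of the local-max-cut instance, and then takes a union bound over all $\Omega(n)$-length FLIP sequences. The key observation is that composing each $\alpha_i$ with the linear map $(X, R) \mapsto f_2(X, R)$ yields a linear functional on the base randomness $(X, R)$; because the components of $f_2$ are linearly independent as functionals on $(X, R)$, linear independence of the $\alpha_i$'s is preserved under pullback. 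Therefore Lemma \ref{lem:main-praobability} applies at the same rank, but with $\phi$ replaced by the density bound on $(X, R)$, and the remainder of the argument (union bounds and the resulting (quasi)polynomial running time for complete/arbitrary graphs) follows without modification.

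The main obstacle I anticipate is in the weak-reduction bookkeeping: although the rank bound transfers cleanly, one must verify that the integer-coefficient hypothesis of Lemma \ref{lem:main-praobability} is preserved after composition (since the linear combinations defining $f_2$ need not be integer-valued) and that the resulting union-bound cardinality in the local-max-cut analysis remains polynomial after substitution. I would resolve this by restricting $f_2$ to rational combinations with bounded denominator, folding the denominator cleanly into the polynomial factors absorbed by $\phi'$, and invoking a minor rescaling extension of Lemma \ref{lem:main-praobability}. A secondary issue is the Markov conversion in part one: I must ensure that $\Omega$ admits a sampling distribution with density bounded by $\phi$ per coordinate, which is naturally satisfied by taking $\Omega$ to be a bounded subset of $\mathbb{R}^d$ sampled uniformly.
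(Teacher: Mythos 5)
Your proposal is correct and follows essentially the same route as the paper: the strong case is handled by composing the reduction with the algorithm for $\CQ$ and converting the joint $(X,R)$ failure probability into a statement over $X$ alone via Markov's inequality on the conditional failure probability, and the weak case is handled by observing that the local-max-cut analyses of \cite{ER14,A+17} only invoke Lemma~\ref{lem:probability} on linearly independent vectors, so pulling the functionals back through the full-rank linear map $f_2$ (i.e.\ $\ip{\alpha,MX}=\ip{M^{\mathsf T}\alpha,X}$) preserves the rank bound at the cost of a polynomially bounded degradation of the density. Your anticipated bookkeeping issues (non-integer coefficients after composition, and the sampling density of $\Omega$) are resolved in the paper exactly as you suggest, by requiring the nonzero entries of the combination matrix to be bounded away from zero (equivalently, absorbing a $1/\eta$ factor into the density) and by taking the auxiliary randomness to have density at most $\phi$.
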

\vspace*{-1ex}

 %FLIP algorithm for the local-max-cut is based on viewing every step as linear-combination of the input parameters, and the rank argument for these linear-combinations. 
%The proofs of \cite{ER14} and \cite{A+17} mainly show that linear combinations of the edge weights have high rank, and then directly apply Lemma~\ref{lem:praobability} to show that the improvement happens with high probability. Let $M$ be a full rank $d\times d$ matrix, and note that $\ip{\alpha,MX} = \ip{M^{\mathsf T}\alpha, X}$. We conclude then that the input to Local Max Cut need not be independent random variables with bounded density. Instead, it suffices to have the edge weights be full-rank, integer linear combinations of a random variable with bounded density, since we need $M^{\mathsf T}\alpha$ to be an integer vector for all integer $\alpha$. Therefore, a {\em weak} smoothness-preserving reduction to Local Max Cut on an (in)complete graph suffices to show smoothed (quasi-)polynomial complexity.
%\medskip

This allows us to extend smoothed efficient algorithms for one problem to others. 
Ideally we would like to show {\em strong} reductions. However, we manage to show {\em weak} reductions from $k$-\nashcoord to $1$- or $2$-\maxcut. %$(i)$ from $2$-\nashcoord to $1$-\maxcut (local-max-cut), and $(ii)$ from $k$-\nashcoord to $2$-\maxcut. 
Note that the smoothed complexity of $2$-\maxcut is not known yet, but we believe that the $2$-FLIP algorithm may admit a similar rank-based analysis as FLIP. This would imply a smoothed efficient algorithm for $k$-\nashcoord for non-constant $k$. %Next, we discuss the second reduction.

\paragraph{$k$-\nashcoord to $2$-\maxcut.}
We wish to show the following result:
\begin{thm}\label{thm:kstrat}
% There exists a {\em weak smoothness-preserving} reduction from k-\nashcoord to {\em 2-\maxcut}.
$k$-\nashcoord admits a {\em weak} smoothness-preserving reduction to 2-\maxcut.
\end{thm}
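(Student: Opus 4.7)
The plan is to build a reduction that encodes each player's strategy as a \emph{one-hot} configuration on $k$ dedicated cut vertices, so that a better-response strategy change becomes exactly one swap-type $2$-flip, while a large in-player penalty rules out every other $1$-flip or $2$-flip at local optimality. The decoding $f_3$ will simply read off the one-hot strategy of each player.

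For each player $u\in V$ I introduce $k$ vertices $u_1,\ldots,u_k$ together with a single shared anchor vertex $a$, and add three families of edges, with fresh independent auxiliary $\phi$-dense noise $R$ added throughout to ensure linear independence: (i) for each game edge $(u,v)\in E$ and pair $(i,j)\in[k]^2$, an edge $(u_i,v_j)$ of weight $-A_{uv}(i,j)/2+R_{uv,ij}$; (ii) for each $u$ and pair $i\ne l$, an ``in-player'' edge $(u_i,u_l)$ of weight $M+R_{u,il}$; and (iii) for each $u$ and $i$, an anchor edge $(u_i,a)$ of weight $M(k{-}2)+\rho_{u,i}+R_{u,i}$, where $M=\Theta(nk)$ is a large constant and $\rho_{u,i}:=-\tfrac12\sum_{v\sim u}\sum_j A_{uv}(i,j)$ is a ``row-sum cancellation'' term.

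The crux is a one-hot enforcement lemma. Writing the in-player contribution of player $u$ as a function of $k_+:=|\{i:u_i\text{ on the side of }a\}|$, it equals $M(k-k_+)(k_++k-2)$ plus an $O(nk)$ lower-order term, and this quadratic in $k_+$ is uniquely maximized at $k_+=1$, with margin $M$ to $k_+\in\{0,2\}$ and margin $4M$ to $k_+=3$. A short case analysis then shows that every $1$-flip, every within-player ``two-off-to-on'' $2$-flip, and every cross-player $2$-flip changes the cut by $-\Omega(M)+O(nk)$, which is strictly negative once $M=\Theta(nk)$; so every $2$-flip local optimum is one-hot. Meanwhile, the swap $2$-flip turning $u_i$ off and $u_j$ on preserves $k_+=1$, and a direct calculation shows the inter-player edges contribute $\Delta\payoff_u+\rho_{u,j}-\rho_{u,i}$ while the anchor contributes $\rho_{u,i}-\rho_{u,j}$, so the net cut change equals exactly $\Delta\payoff_u$. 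Hence $2$-flip local optima of the reduced cut are in bijection via $f_3$ with pure Nash equilibria of the original game.

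For the weak smoothness-preservation, the total number of cut weights is $O(|E|k^2+nk^2)$, each a short linear combination of $A$ and $R$. The coefficient matrix with rows indexed by cut weights and columns by $(A,R)$-coordinates is lower block triangular with identity diagonal blocks, because inter-player weights inject one-to-one into $R_{uv,ij}$, in-player weights into $R_{u,il}$, and anchor weights into $R_{u,i}$; so the matrix is invertible and the weights are linearly independent combinations of $(A,R)$. Each such combination involves only $O(nk)$ input terms, so its marginal density remains $\mathrm{poly}(\phi,n,k)$ by a standard convolution bound. The main obstacle I expect is the one-hot enforcement step: a pairwise-$M$ gadget alone would produce ``balanced'' bipartitions rather than one-hot for $k\ge 4$, and it is precisely the paired choice of pairwise weight $M$ together with anchor weight $M(k{-}2)$ that shifts the peak of the in-player quadratic to $k_+=1$; verifying that the $\Omega(M)$ penalty survives \emph{every} cross-player $2$-flip for all $k$ is where the bulk of the bookkeeping lies, and the constants in $M$ must be tuned so that $\Omega(M)$ dominates the $O(nk)$-bounded inter-player fluctuations uniformly.
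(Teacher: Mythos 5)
Your construction takes a genuinely different route from the paper's (which uses $nk+2$ nodes with two terminals $s,t$, a ``dummy'' strategy realized by auxiliary low-payoff random variables, and submodularity to force validity of local optima), and the core gadget arithmetic is right: the in-player contribution $M(k-k_+)(k_++k-2)$ is indeed uniquely maximized at $k_+=1$, and the $\rho_{u,i}$ row-sum terms do make the swap $2$-flip compute the deviation gain. However, there is a genuine gap in how you obtain linear independence. You add fresh noise $R_{uv,ij}$ to every game-encoding edge $(u_i,v_j)$, and then assert that the net cut change under a swap equals \emph{exactly} $\Delta\payoff_u$. These two statements are incompatible: with the noise present, the swap changes the cut by $\Delta\payoff_u$ plus a sum of $\Theta(k)$ noise terms of magnitude up to $1/\mathrm{poly}(\phi)$, which is not negligible relative to payoff differences. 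A $2$-flip local optimum then decodes to a pure Nash equilibrium of a \emph{perturbed} game, not of the instance $(I,X)$ you were given, violating the requirement in Definition~\ref{def:SmoothRed} that $f_3$ map locally optimal solutions to solutions of the original instance. The fix is to put no free noise on the $(u_i,v_j)$ edges at all: each $A_{uv}(i,j)$ appears in exactly one such weight with a nonzero coefficient, so those rows are already independent of each other and of the rows carrying $R_{u,il}$ and $R_{u,i}$; this is precisely how the paper arranges its (square, triangular) system, using structured auxiliary variables that cancel in the local-optimality conditions rather than generic noise.

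A secondary, smaller issue: your case analysis never considers flips that move the anchor $a$ itself (alone, or paired with one $u_i$). For $k\ge 3$ these are penalized through the $(k-2)^2M$ margin on anchor edges, but for $k=2$ the anchor-edge coefficient $M(k-2)$ vanishes and flipping $a$ costs only $O(n)$, so it can be improving. This does not break the direction you actually need (anchor flips can only disqualify configurations from being local optima, and the remaining local optima are still one-hot and swap-stable, hence decode to PNE), but it does falsify the claimed \emph{bijection} between $2$-flip local optima and PNE, and the write-up should restrict itself to the one-directional statement that the reduction requires.
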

\vspace*{-1ex}

%Though this result does not immediately provide an algorithm to solve $k$-\nashcoord, we show that it does raise the following implication:
%If we can show that the 2-FLIP algorithm solves smoothed instances of \mbox{2-\maxcut} in (quasi-)polynomial time through rank-based analysis similar to FLIP, then \nashcoord has (quasi-)polynomial smoothed complexity for $k$ in the input, rather than fixed $k$.
%We leave the smoothed analysis of 2-\maxcut as an open problem.

The idea for the reduction is, given an instance of a Network Coordination Game, construct a graph whose locally optimal cuts can be mapped to strategy profiles, and cut values can be interpreted as the total payoff of the network game.
To do this, we construct a graph with $nk+2$ nodes, including two terminals $s$ and $t$, and $nk$ nodes indexed by player-strategy pairs $(u,i)$. 
All nodes are connected to $s$ and $t$, any two $(u,*)$ nodes are connected, and any two $(u,*),\,(v,*)$ nodes are connected if $u$ and $v$ share an edge in the game graph.
Thus, the cut graph is complete if and only if the game graph is.
Call a cut $S,T$ valid if $s\in S$, $t\in T$, and $S$ contains at most one $(u,*)$ for each $u$.
Any such cut naturally maps to a strategy profile as follows:
If player $u$ appears in $S$ paired with strategy $i$, then set $\sigma_u=i$. 
Otherwise, set $\sigma_u=0$, a ``dummy'' strategy with bad payoff. Call this profile $\sss(S)$.
We select edge weights such that all local-max cuts are valid, by ensuring that any non-valid cut can always be improved by removing any redundant node.
We also get that for any valid cut $S$, the total cut value is equal to $\payoff(\sss(S))$.
Furthermore, updating a player's strategy in a valid cut amounts to removing one $(u,i)$ vertex from $S$, and adding some $(u,i')$. Therefore, a unilateral deviation is equivalent to a 2-FLIP step.

Since any deviation amounts to one step of 2-FLIP, it follows that $\sss(S)$ is a PNE if and only if $S$ is a local-max-cut up to 2 flips. We also show that edge-weights are linearly independent combinations of the random inputs, and that if the network coordination game instance satisfies the smoothed-inputs condition, then the edge weights do too, proving Theorem \ref{thm:kstrat}. %achieve a weak reduction. 

\paragraph{\bf $2$-\nashcoord to $1$-\maxcut.} Here, we take a slightly different reduction, where there are $n+2$ nodes in the graph, and any $s$-$t$ cut is interpreted as follows: if $u$ is on the same side as $s$ of the cut, $\sigma_u=1$, otherwise, $\sigma_u=2$. The same analysis goes through, but now, locally max cuts up to one flip are Nash Equilibria, which provides a weak smoothness-preserving reduction from 2-\nashcoord to 1-\maxcut, showing the following result:
\begin{thm}\label{thm:2strat}
% There exists a {\em weak} smoothness-preserving reduction from 2-\nashcoord to $1$-\maxcut. 
2-\nashcoord admits a {\em weak} smoothness-preserving reduction to 1-\maxcut.
\end{thm}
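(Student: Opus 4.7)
My plan is to construct a weighted graph $H$ on $|V|+2$ vertices consisting of the player set $V$ together with two terminals $s,t$. The edge set contains a copy $uv$ of every game edge $uv\in E$, two terminal edges $us$ and $ut$ for each player $u$, and a single edge $st$. An $s$-$t$ cut $(S,T)$ with $s\in S$, $t\in T$ is identified with the strategy profile $\sss(S)$ in which $\sigma_u = 1$ if $u\in S$ and $\sigma_u = 2$ if $u\in T$; since each player sits on a unique vertex, a 1-FLIP in $H$ is exactly a unilateral strategy change in the game. The whole reduction therefore reduces to choosing edge weights so that $\mathrm{cut}(S,T) = \payoff(\sss(S)) + C$ for a constant $C$ independent of $(S,T)$, whereupon the 1-FLIP-optimal cuts coincide with the PNE and supply the solution map $f_3$ of Definition~\ref{def:SmoothRed}.

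Writing $x_u := \mathbf{1}[u\in T]$, each payoff $A_{uv}(\sigma_u,\sigma_v)$ expands as a multilinear quadratic polynomial in $x_u,x_v$ whose four coefficients are read off the four entries of $A_{uv}$, while the cut value expands as $w_{st} + \sum_u\bigl(w_{us}x_u + w_{ut}(1-x_u)\bigr) + \sum_{uv\in E} w_{uv}(x_u+x_v-2x_ux_v)$. Matching coefficients monomial by monomial pins $w_{uv}$ to a fixed $\pm\tfrac12$-combination of the entries of $A_{uv}$, pins the difference $w_{us}-w_{ut}$ to a specific linear combination $c_u$ of payoffs on edges incident to $u$, and absorbs everything else into $w_{st}$. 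To promote this to a reduction with linearly independent combinations of the inputs, I would use the extra randomness allowed in Definition~\ref{def:SmoothRed}: for each player $u$, draw an independent random $r_u$ with density at most $\phi$ and set $w_{us}:=r_u$, $w_{ut}:=r_u-c_u$. The constraint $w_{us}-w_{ut}=c_u$ is preserved, and the common shift contributes only a $(S,T)$-independent constant $\sum_u r_u$ to the cut value, so the identity $\mathrm{cut} = \payoff + C$ still holds.

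For smoothness preservation, linear independence follows from a triangular structure: each $w_{uv}$ is supported only on the four coordinates of $A_{uv}$ and the fresh coordinate $r_u$ appears with coefficient $1$ in exactly $w_{us}$ and $w_{ut}$ and in no other output, so peeling off the $r_u$-columns in turn reduces to the already-independent $w_{uv}$-family. Every output weight is a linear combination of independent bounded-density variables with at least one coefficient of magnitude $\Theta(1)$, so the standard convolution bound gives density $O(\phi)$ for each weight. The hard part, and what distinguishes the honest reduction from a naive weight assignment, is not the monomial matching (which is mechanical for degree-two multilinear forms) but ensuring linear independence after the matching constraints have been imposed: those constraints pin $w_{uv}$ and $w_{us}-w_{ut}$ entirely, leaving only a one-parameter family for each terminal pair which could otherwise collapse onto already-used relations among the $c_u$'s. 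The symmetric $r_u$ shift is the clean fix, and combining it with the cut/potential identity completes the weak smoothness-preserving reduction of Theorem~\ref{thm:2strat}.
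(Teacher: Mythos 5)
Your construction is essentially the paper's own reduction for $k=2$: an $n+2$-vertex graph with terminals $s,t$, the side of the cut encoding strategy $1$ versus $2$, the game-edge weight pinned to the combination $A_{uv}(1,2)+A_{uv}(2,1)-A_{uv}(1,1)-A_{uv}(2,2)$ (the paper doubles the potential to keep this integral where you carry a factor $\tfrac12$), and fresh random variables on the $s$-edges with the $t$-edges derived from the coefficient-matching constraint. So the approach is right, and the coefficient-matching plus fresh-randomness scheme is exactly what the paper does.

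There are, however, two points you leave unaddressed. First, your independence argument is incomplete where it matters most: peeling off the $r_u$ coordinate identifies $w_{us}$, but $w_{ut}-w_{us}=-c_u$ is a \emph{new} vector supported on the payoff coordinates, and you must check that the family $\{c_u\}_u\cup\{w_{uv}\}_{uv\in E}$ is linearly independent --- it is not automatic from the disjoint supports of the $w_{uv}$. It does hold: restricted to the four coordinates of a single edge $uv$ (ordered $11,12,21,22$), the vectors $c_u|_{uv}\propto(-1,-1,1,1)$, $c_v|_{uv}\propto(-1,1,-1,1)$, and $w_{uv}\propto(-1,1,1,-1)$ are independent, so any dependency forces all coefficients to vanish edge by edge; but this check (or the paper's explicit triangular system) is needed. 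Second, and more seriously, Definition~\ref{def:SmoothRed} requires that \emph{every} locally optimal cut map under $f_3$ to a solution, yet a $1$-flip-optimal cut of your graph need not separate $s$ from $t$, and your $f_3$ is undefined on such cuts. The paper devotes part of Claim~\ref{claim3} to exactly this, forcing $s$-$t$ separation via sign conditions on the shifted payoffs and auxiliary variables; in your setup an easy fix is to add a large deterministic offset to $w_{st}$ (translation does not affect density or linear independence) so that flipping $t$ always improves any cut with $s,t$ on the same side. With those two repairs your argument goes through.
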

\vspace*{-1ex}
The FLIP algorithm for $1$-\maxcut has smoothed quasi-polynomial running time in general \cite{ER14}, and smoothed polynomial running time if $G$ is a complete graph \cite{A+17}. A recent result \cite{BCC18} has improved the running time of the latter, so this reduction allows us to provide better bounds on the performance of BRA in the case $k=2$. These local-max-cut results, together with Theorems \ref{thm:meta-red} and \ref{thm:2strat}, give an alternate smoothed efficient algorithm for the $2$-\nashcoord problem.

\paragraph{Acknowledgment.} We would like to thank Pravesh Kothari for the insightful discussions in the initial stages of this work.
%\newpage 

% \section*{Concluding Remarks}

% \note{Write!}

\addreferencesection
\bibliographystyle{ACM-Reference-Format}
\bibliography{nashSmoothed}

\appendix
\section{Smoothed Analysis of the BR Algorithm}\label{sec:BRA}
%A reader familiar with the works of \cite{ER14} and \cite{A+17} will find most of the analysis familiar. However we draw attention to the definitions \ref{def:transition-blocks} and \ref{def:critical-blocks}, \note{and to lemmas \_\_}, as these notions are our main contributions, within the framework of the previously known results.

In this section we formally show that the {\em better-response} (BR) algorithm finds a pure NE of a $k$-strategy \netco efficiently under the standard smoothness model (defined in Section \ref{sec:PrelSmooth}). 
We show efficiency both with-high-probability ($w.h.p.$) and in expectation. 
We begin by restating the definition of the problem:

\paragraph{Smoothed $k$-\nashcoord Problem.} 
Given a $k$-strategy \netco, defined by an undirected graph $G=(V,E)$ and an $(|E|k^2)$-dimensional payoff vector $A$, where each coordinate of $A$ is an independent random variable supported on range $[-1,1]$ with density at most~$\phi$, find a pure NE (PNE) of the game as defined in Section~\ref{sec:PrelGame}. 

Since the real-valued input to the $k$-\nashcoord problem is stochastic, the running time of any algorithm will be stochastic, and the efficiency guarantees will be $w.h.p.$ or in expectation.
We define below our notation for the better-response algorithm (BRA) and its properties.
\begin{defn}
	A {\em better-response (BR) sequence} is denoted as a sequence of (player, strategy) pairs ${S=(u_1,i_1),(u_2,i_2),\dotsc}$.
    We interpret $S$ as a sequence of player moves, where on the $t$\ts{th} move, player $u_t$ changes their strategy to play strategy $i_t$.
	It is assumed that they were not already playing $i_t$.

\paragraph{Notation.} Throughout a BR sequence, the strategy profile of the players is changing. We recall the~$\sss$ notation introduced in Section~\ref{sec:notation}: denote as $\sss^0\in\kset^n$ the initial strategy profile before the sequence $S$, and let $\sss^t$ be the profile after the $t$\ts{th} move.
Hence, $\sss^t$ differs from $\sss^{t-1}$ only in the entry for $\sigma_{u_t}$, and $\sigma^t_{u_t}=i_t$.\end{defn}

To better analyze better-response sequences, we must define some parameters as follows:

\begin{defn}[Active, Inactive, Repeating, and Non-Repeating players.]
	Let $S$ be a BR sequence, then player $u$ is said to be {\em active} if it appears 
    in the sequence, and otherwise, it is termed {\em inactive}. An active player $u$ is said to be {\em repeating} if there exists 
    some strategy $i$ such that $(u,i)$ appears at least twice in $S$, or if $(u,\sigma^0_i)$ appears in $S$ at all.
    An active player which is not repeating is said to be {\em non-repeating}.

\paragraph{Notation.} We denote as $p(S)$ the total number of active players in the BR sequence $S$, $p_1(S)$ as the number of non-repeating players, and $p_2(S)$ as the number of repeating players.
When the sequence in question is clear from context, we omit the $S$ and use $p,p_1$ and $p_2$, respectively.
Furthermore, let $d(S)$ denote the number of distinct (player,strategy) moves of $S$, and let $q_0(S)$ denote the number of $(u,\sss^0_u)$ moves in $S$, that is, moves where players return to the original strategy from the start of the sequence.\end{defn}

Our proof follows the {\em common framework} outlined in Section~\ref{sec:common}.
The following lemma is the key probability bound from this framework.

\begin{lem}[\cite{Rog08}]\label{lem:probability} Let $X\in \R^d$ be a vector of $d$ independent random variables where each $X_i$ has density bounded by $\phi$. Let $\alpha_1,\,\dotsc,\,\alpha_r$ be $r$ linearly independent vectors in $\mathbb{Z}^d$.
then the joint density of $(\ip{\alpha_i,X})_{i\in [r]}$ is bounded by $\phi^r$, and for any given $b_1,b_2,\dotsc \in \mathbb{R}$ and $\epsilon>0$, %In particular, if sets $J_i\subset \mathbb{R}$ have measure $\epsilon$, then
\begin{equation*}
\Pr\Big[\textstyle\bigwedge_{i=1}^r\ip{\alpha_i,X}\in [b_i,b_i+\epsilon]\Big] \leq (\phi\epsilon)^r
\end{equation*}
\end{lem}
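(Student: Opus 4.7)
The plan is to derive both claims from a single change-of-variables computation. The probability bound is an immediate consequence of the joint density bound: integrating the density of $Y := (\langle \alpha_i, X\rangle)_{i \in [r]}$ over the box $\prod_{i=1}^r [b_i, b_i+\epsilon]$ of volume $\epsilon^r$ gives at most $\phi^r \cdot \epsilon^r = (\phi\epsilon)^r$. So all the real work lies in bounding the joint density of $Y$ by $\phi^r$.

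The first step will be to pick the right coordinates. Since $\alpha_1, \dotsc, \alpha_r$ are linearly independent, the $r \times d$ matrix $M$ with rows $\alpha_i$ has rank $r$, so there is an $r$-subset $J \subseteq [d]$ for which the $r \times r$ submatrix $M_J$ formed by the columns of $M$ indexed by $J$ is nonsingular. Because $\alpha_i \in \mathbb{Z}^d$, the quantity $\det M_J$ is a nonzero integer, and hence $|\det M_J| \geq 1$. This integrality step is the crucial — and essentially only nontrivial — use of the hypothesis $\alpha_i \in \mathbb{Z}^d$; without integrality, $|\det M_J|$ could be arbitrarily small and the bound would fail outright.

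Next, I would condition on $X_{[d] \setminus J}$. By independence, the conditional joint density of $X_J$ is the product of its marginals and hence bounded by $\phi^r$. With the other coordinates fixed, the map $X_J \mapsto Y$ is affine linear with Jacobian $M_J$, so the change-of-variables formula gives a conditional density for $Y$ bounded by $\phi^r/|\det M_J| \leq \phi^r$. Integrating over the conditioning variables via Fubini then yields the unconditional joint density bound $\phi^r$ on $Y$, completing the argument.

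The main — and essentially only — obstacle is the integrality step above; everything else is standard change of variables and straightforward manipulation of product densities, so I anticipate no further complications. Note also that the argument naturally produces the stronger density bound before the probability bound, which is exactly the form in which the paper will later want to apply the lemma inside its union-bound analyses.
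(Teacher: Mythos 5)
Your argument is correct and complete: the reduction to an $r\times r$ nonsingular integer submatrix $M_J$ with $|\det M_J|\ge 1$, conditioning on $X_{[d]\setminus J}$, and the change of variables $X_J\mapsto Y$ is exactly the standard proof of this lemma. The paper does not prove this statement itself but imports it by citation from \cite{Rog08}; your write-up reproduces that reference's argument, correctly identifying the integrality of the $\alpha_i$ (hence $|\det M_J|\ge 1$) as the one load-bearing step.
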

% In all that follows, $\sigma_0$ is the original strategies of the players before the BR sequence,
% and $\sigma_i$ is the strategies after the first $i$ moves have been made.

\vspace*{3pt}
As outlined in Section~\ref{sec:common}, the goal is to model the improvement in potential at each step by some linear combination of the random payoff values, and show that it is unlikely that {\em every} step has improvement in $(0,\epsilon)$. If  some step has improvement either larger than $\epsilon$, or less than 0, then the sequence is either not an improving sequence, or has good improvement.
To this end, we define the notion of a {\em transformation set} below.
Not only do we want large improvement for any improving sequence, we also want this to hold simultaneously for all (by taking a union bound): 
if {\em every} $BR$ sequence of length $2nk$, for arbitrary starting configuration of players, is either not improving, or has a step of size greater than $\epsilon$, then the BRA must terminate in at most 
\(
	2nk\cdot n^2/\epsilon
\)
steps, since the potential function lies between $-\binom n2$ and $\binom n2$.
We seek to show sufficiently good probability bounds such that this holds for $\epsilon = 1/(nk\phi)^{O(k)}$ for complete game graphs, and $1/(nk\phi)^{O(k\log(nk))}$ for general game graphs.
Thus, we would have that the BRA terminates in at most $(nk\phi)^{O(k)}$ steps for complete graphs, and $(nk\phi)^{O(k\log(nk))}$ in general.
We formalize this as follows:
\begin{defn}[Minimum Improvement]\label{def:Delta}
    For a fixed sequence of moves, say $S$, the change in potential as time progresses is a random variable, since the values being added and subtracted (the $A$ payoff values) are random. 
    Therefore, any sequence of moves $S$ has some probability of being a BR sequence, $i.e.$ a sequence of moves every one of which is an increase in the potential.
    We define the random variable $\Delta_N$, which captures the total increase in potential of the worst BR sequence of length exactly $N$ moves. 
    Therefore, for {\em any} fixed sequence of moves $S$ of length $N$,
    and any arbitrary initial profile $\sss^0$, either $S$ is not a BR sequence under random payoffs $A$, or performing $S$ increases the potential function by at least $\Delta_N$.
\end{defn}
As described above, we wish to show that for $N=2nk$, the probability that $\Delta_N\leq 1/(nk\phi)^{O(k)}$ is vanishingly small if the game graph is complete, and the probability that $\Delta_N\leq 1/(nk\phi)^{O(k\log(nk))}$ is vanishingly small in general, which allows us to bound the total duration of a BR sequence.

%----------------------
\begin{defn}[Transformation Set]
For any fixed BR sequence of length $\ell$, we define the set of vectors 
$\Lcal=\{L_1,\,L_2,\,\dotsc,\,L_\ell\}$, where $L_t\in \{-1,0,1\}^{(|E|\times k^2)},\forall t$.
The entries of $L_t$ are indexed by indices of payoff matrices, denoted $((u,i)(v,j))$.
The values of its entries are chosen as follows:

\begin{equation*}
L_t((u,i)(v,j))=\left\{\begin{array}{llll}
1 & \text{if: ~~}u_t\in \{u,v\}&\text{and ~~}\sigma^t_u=i&\text{and ~~}\sigma^t_v=j.\\
-1 & \text{if: ~~}u_t\in \{u,v\}&\text{and ~~}\sigma^{t-1}_u=i&\text{and ~~}\sigma^{t-1}_v=j.\\
0 & \text{otherwise.} 
\end{array} \right.
\end{equation*}
That is, every entry signifies if the corresponding payoff value remains unchanged $(0)$, or gets added $(+1)$ or removed $(-1)$ from the potential function. 
We denote this set as the \textit{\Lset} of a sequence, and each vector $L_i$ as the \textit{\Li} of the corresponding move.  
\end{defn}
The inner product $\ip{L_t,A}$ gives the change in the total payoff of the player who makes a move at time $t$, hence also gives the change in the sum of payoffs due to the $t$\ts{th} move of the entire game.
The rest of the analysis is dedicated to ensuring that the \Lset has large rank, large enough to counter a union bound over all sequences.

\subsection{Smoothed Polynomial Complexity for Complete Game Graphs}
Our analysis begins with the case where the game graph is complete.
The completeness of the game graph will allow us to single out a vertex, and use edges to that vertex to show large rank.
We wish to show that the BRA algorithm will terminate in $poly(n^k,k,\phi)$ steps with high probability.
We begin by handling a simple case: 

\medskip
\noindent\textbf{Informal Lemma.} \textit{Suppose we could guarantee that every $2nk$-length sequence had all players active, then any execution of BRA must terminate in $(nk\phi)^{O(k)}$ steps with probability $1-1/(nk)^{O(1)}$.}
\begin{proof}
	From Corollary~\ref{cor:all-active}, which we will show below, if $S$ has all $n$ players active, then the transformation set $\Lcal$ of $S$ has rank at least $\big(1-\tfrac 1n\big)\big(d(S)-q_0(S)\big)\geq n-1$. Thus, any such sequence $S$ is either $\epsilon$-improving (has improvement at least $\epsilon$) or non-improving (non-positive improvement) with probability $1-(\phi \epsilon)^{n-1}$. 
    However, we want this to hold for every sequence.
    For any fixed $\ell$, there are $(nk)^\ell$ BR sequences of length $\ell$, and $k^n$ possible initial configurations. Thus, the probability that {\em every} sequence of length $2nk$ is neither non-improving nor $\epsilon$-improving, for any initial configuration, is at most $k^n(nk)^{2nk}(\phi\epsilon)^{n-1}$.
    For this value to be vanishingly small, {\em i.e.} $1/(nk)^{O(1)}$, it suffices to set $\epsilon= 1/\phi(nk)^{O(k)}$.
    As discussed above, this implies that with probability $1-1/poly(n,k)$, the BRA will terminate in time $2n^3k/\epsilon = (nk\phi)^{O(k)}$.
\end{proof}

The above lemma, however, relies on a condition which cannot be guaranteed: not every sufficiently long sequence has $n$ active players.
In what follows, we lower bound the rank of the \Lset for sufficiently long BR sequences, and pair them with nontrivial union bounds, to get our desired results.

\subsubsection{Case I: Mostly Non-Repeating Players}

We will be showing rank bounds which depend on the $p_1$ and $p_2$ parameters defined above. 
Splitting this analysis into the cases $p_1\geq p_2$ and $p_2\geq p_1$ allows us to combine these and get bounds in terms of $p$.
We consider first the case $p_1\geq p_2$, and the following definition:

\begin{defn}[Separated Blocks]\label{def:separated}
	Let $P_1(S)$ be the set of non-repeating players in a BR sequence, and for any $u\in P_1$, let $T_u$ be the set of indices where the moving player is $u$. 
    Let $T=\bigcup_{u\in P_1}T_u$, the set of indices of all non-repeating-player moves, and suppose $T=\{t_1<t_2<\dotsm<t_m\}$
    We will show below how the $t_i$'s ``separate'' the sequence $S$. To this end, let $S_i$ for $i=0,\,1,\,\dotsc,\,m$ be the subsequences of $S$ from time $t_i$ to $t_{i+1}$ excluding boundaries, respectively, where $t_0=0$ and $t_{m+1}=|S|$. 
    Then these $S_i$'s are the {\em separated blocks} of $S$. 
    Denote their collection as $\mathbb S = \{S_0,\,S_1,\,\dotsc,\,S_m\}$.
    Furthermore, denote $|T|$ as $d_1(S)$.
\end{defn}

The following lemma allows us to take advantage of this notion of separated block, to break up the rank bounds into simpler subproblems.

\begin{lem}\label{lem:rank-d-q}
Let $S$ be a BR sequence with at least one inactive player, and let $\Lcal=\{L_1,\,L_2,\,\dotsc\}$ be its \Lset.
Then $\Lcal$ contains at least $d_1(S)+\sum_{S'\in \mathbb S}d(S')-q_0(S')$ linearly independent vectors, where $\mathbb S$ is the collection of separated blocks of $S$. 
\end{lem}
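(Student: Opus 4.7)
The plan is to exhibit the required linearly independent vectors inside $\Lcal$ by decomposing the chosen rows and columns according to the separated-block structure and combining via a block upper-triangular argument. Fix an inactive player $w$, which plays the constant strategy $\sigma_w$ throughout $S$, and for each separator index $i = 1, \dotsc, m$ let $v_i$ and $\sigma^i := \sigma^{t_i}_{v_i}$ denote the mover and target of the separator move $L_{t_i}$. I will work with the disjoint column sets $C_0 := \{L_t : t \in S_0\}$ and $C_i := \{L_{t_i}\} \cup \{L_t : t \in S_i\}$ for $i \ge 1$, and the row sets $R_0 := \{((u,j)(w,\sigma_w)) : u \text{ moves in } S_0\}$ and
\[
    R_i \;:=\; \{((v_i, \sigma^i)(w, \sigma_w))\} \;\cup\; \{((u, j)(v_i, \sigma^i)) : u \text{ moves in } S_i,\ u \ne v_i\}\qquad(i \ge 1).
\]
The singleton row will be used to detect the separator; the remaining rows carry the block contribution.

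Two bookkeeping checks come first. (a) The rows $R_0, \dotsc, R_m$ are pairwise disjoint: the pairs $(v_i, \sigma^i)$ are distinct across $i$ since a non-repeating player never revisits a strategy; no non-repeating player $v_i$ can occur as the first coordinate of some other $R_j$'s block rows, because such $u$ must be a mover in $S_j$ and therefore repeating; and the singleton of $R_i$ cannot lie in $R_0$ because $v_i$ does not move in $S_0$. (b) The system is block upper-triangular: for $t \in C_i$ and $j > i$, every row in $R_j$ requires $v_j$ to be playing $\sigma^j$ at time $t$, but $v_j$ first plays $\sigma^j$ at $t_j > t$, so $L_t$ vanishes on all of $R_j$.

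The technical core is a ``basic rank bound'' that I would prove first and reuse: if $S'$ is any BR sub-sequence and $y$ is a player whose strategy stays constant at $\sigma_y$ throughout $S'$, then the columns $\{L_t\}_{t \in S'}$ restricted to rows $((u, j)(y, \sigma_y))$ have rank at least $d(S') - q_0(S')$. To prove it, fix a player $u$ and observe that $u$'s moves, restricted to rows $((u, *)(y, \sigma_y))$, produce path-difference vectors $e_{(u, \sigma^t_u)} - e_{(u, \sigma^{t-1}_u)}$ whose span has dimension (number of distinct strategies $u$ visits in $S'$) $-\,1$; summing contributes $d_u$ if $u$ never returns to its $S'$-initial strategy and $d_u - 1$ if it does, totalling $d(S') - q_0(S')$. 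Applying this bound with $y = w$ to $S_0$ yields $\mathrm{rank}(M_0) \ge d(S_0) - q_0(S_0)$, where $M_i := \Lcal[R_i, C_i]$. For each $i \ge 1$, applying it with $y = v_i$ to $S_i$ (valid because $v_i$ does not move inside $S_i$) gives rank at least $d(S_i) - q_0(S_i)$ from the block rows of $R_i$. The singleton row of $R_i$ contributes one additional rank: $L_{t_i}$ has a $+1$ there (since $v_i$ and $w$ are adjacent in the complete game graph and play $\sigma^i$ and $\sigma_w$ after the move), while every $L_t$ with $t \in S_i$ is zero on that row (neither $v_i$ nor $w$ moves). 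Summing over all blocks gives
\[
    \mathrm{rank}(\Lcal) \;\ge\; (d(S_0) - q_0(S_0)) + \sum_{i=1}^m \bigl(1 + d(S_i) - q_0(S_i)\bigr) \;=\; d_1(S) + \sum_{S' \in \mathbb S}\bigl(d(S') - q_0(S')\bigr).
\]

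The step I expect to be most delicate is the block upper-triangular bookkeeping: since a non-repeating player $v_i$ keeps playing $\sigma^i$ through many later blocks, the rows of $R_i$ remain ``live'' in columns of $C_j$ for $j > i$, so one might worry that the diagonal blocks interact. The saving asymmetry is that a row $((\cdot)(v_j, \sigma^j))$ is zero before $v_j$ first plays $\sigma^j$, so $C_i$ with $i < j$ is untouched by $R_j$-rows. Choosing $(v_i, \sigma^i)$ as the second coordinate of $R_i$'s block rows -- rather than the first -- is what aligns the triangular direction correctly, and carefully verifying this asymmetry is the crux of the argument.
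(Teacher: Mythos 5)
Your proof is correct and takes essentially the same route as the paper's: you anchor the rows of each separated block on the fresh pair $(v_i,\sigma^i)$ (and on the inactive player $w$ for $S_0$ and for the separator columns), and exploit the fact that a non-repeating $v_i$ first plays $\sigma^i$ at time $t_i$ to obtain the triangular structure. The only difference is cosmetic: the paper builds one globally triangular submatrix by selecting the first occurrence of each non-return move in each block, whereas you obtain the within-block count $d(S_i)-q_0(S_i)$ from the span of the walk-difference vectors $e_{\sigma^t_u}-e_{\sigma^{t-1}_u}$ on the rows $((u,\cdot)(y,\sigma_y))$; the two counts coincide.
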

\begin{proof}
Let $w$ be some inactive player, which we have assumed exists.
Let $T=\{t_1<t_2<\dotsm<t_m\}$ be the endpoints of the separated blocks, as in Definition~\ref{def:separated} above.
For $i=0,\,1,\,\dotsc,\,m$,
let $D_i$ be the set of distinct (player,strategy) moves which occur in $S_i$, which are not {\em return moves} of $S_i$, {\em i.e.} $(u,\sss^{t_i}_u)$ moves.

For all $i$, the move at $t_i$ must be some non-repeating player of $S$, which we denote $v_i$, and call the strategy it moves to as $\sigma^i$. (Take $v_0=w$, $\sigma^0=\sss^0_w$).
For all $(u,\sigma)\in D_i$, let $\tau^i_{(u,\sigma)}$ be the time of the {\em first} occurrence of $(u,\sigma)$ in the subsequence~$S_i$, and let $H_i=\{\tau^i_{(u,\sigma)}:(u,\sigma)\in D_i\}$. 
Let $H=\bigcup_{i=0}^{|\mathbb S|-1} H_i \cup \{t_1,\,\dotsc,\,t_{|\mathbb S|-1}\}$. 
For each $t\in H$, if $t=\tau^i_{(u,\sigma)}\in H_i$ for some $i,\,u,\,\sigma$, then {\em associate} to $L_t$ the row $((u,\sigma)(v_i,\sigma^i))$. If, instead, $t=t_i$ for some $i$, then {\em associate} to $L_t$ the row $((v_i,\sigma^i)(w,\sigma^0))$.

Consider the submatrix of $\Lcal$ consisting of all columns $\{L_{t}:t\in H\}$,
sorted in ``chronological'' order, and all of their associated rows, in the same order as their respectively associated columns. We claim that this matrix is upper-triangular, and its diagonal entries are non-zero. For each column $L_t$, the diagonal entry in the submatrix is the entry for the associated row, which we have chosen to be nonzero. 
Furthermore, if $t=t_i\in H$, then $v_i$ no $((v_i,\sigma^i)(*,*))$ entry can have been non-zero, since $v_i$ is non-repeating. Thus, $L_{t_i}$ is the first column where the associated row has a nonzero entry.
If, instead, $t=\tau^i_{(u,\sigma)}\in H_i$, then the associated row $((u,\sigma)(v_i,\sigma^i))$ must have been 0 up until column $L_{t_i}$ as described above.
Furthermore, since $\tau^i_{(u,\sigma)}$ is the first occurrence of $(u,\sigma\neq \sss^{t_i}_u)$ after time $t_i$, we must have had the row $((u,\sigma)(v_i,\sigma^i))$ be 0 before the $\tau^i_{(u,\sigma)}$-th column. 

These observations imply that our $|H|\times|H|$ submatrix, with the given row-ordering, must be upper-triangular with nonzero diagonal terms. Therefore, it must be full-rank. 
Since $|H_i| = d(S_i)-q_0(S_i)$, then $|H|=d_1(S)+\sum_{S'\in \mathbb S} d(S')-q_0(S')$, and we conclude the desired bound.
\end{proof}

\begin{cor}\label{cor:all-active}
Let $S$ be a BR sequence where all players are active, and let $\Lcal=\{L_1,\,L_2,\,\dotsc\}$ be its \Lset.
Then $\Lcal$ contains at least $\left(1-\tfrac 1n\right)(d(S)-q_0(S))$ linearly independent vectors.
\end{cor}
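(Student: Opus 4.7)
The plan is to mimic the triangular construction of Lemma~\ref{lem:rank-d-q}, but with a carefully chosen active player $u^*$ playing the role of the inactive vertex $w$, letting $u^*$'s current strategy $\sigma^t_{u^*}$ replace the (now moving) ``constant'' strategy $\sigma^0_w$ at each step. The pigeonhole saving comes from selecting $u^*$ to be the active player contributing the fewest distinct non-return moves.

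Concretely, let $d_u(S)$ denote the number of distinct non-return moves by player $u$, so that $\sum_u d_u(S) = d(S)-q_0(S)$. Since all $n$ players are active, averaging yields some $u^*$ with $d_{u^*}(S) \le (d(S)-q_0(S))/n$. It then suffices to exhibit a triangular square submatrix of $\Lcal$ of size $d(S)-q_0(S)-d_{u^*}(S)$, since this already certifies rank at least $(1-1/n)(d(S)-q_0(S))$.

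For each distinct non-return move $(u,\sigma)$ with $u\neq u^*$, let $\tau_{(u,\sigma)}$ be the first time the pair $(u,\sigma)$ occurs as a move in $S$, and associate to column $L_{\tau_{(u,\sigma)}}$ the row $((u,\sigma)(u^*,\sigma^{\tau_{(u,\sigma)}}_{u^*}))$. Order both rows and columns by increasing $\tau_{(u,\sigma)}$. At its diagonal column, the entry of row $((u,\sigma)(u^*,\sigma^{\tau_{(u,\sigma)}}_{u^*}))$ equals $+1$, since $u_\tau = u$, $\sigma^\tau_u = \sigma$, and the $u^*$-coordinate matches by construction. For any earlier column $L_{t'}$ with $t'<\tau_{(u,\sigma)}$, the entry vanishes automatically unless $u_{t'}\in\{u,u^*\}$, and in that case a nonzero value would force $\sigma^{t'}_u=\sigma$ or $\sigma^{t'-1}_u=\sigma$; but $\sigma^0_u\neq\sigma$ (non-return) and $\tau_{(u,\sigma)}$ is the first appearance of $(u,\sigma)$, so $u$ never has strategy $\sigma$ before time $\tau_{(u,\sigma)}$.

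The principal obstacle compared with Lemma~\ref{lem:rank-d-q} is that $u^*$'s strategy is no longer constant, so one might fear that the chosen second coordinate $\sigma^{\tau_{(u,\sigma)}}_{u^*}$ could align spuriously with $\sigma^{t'}_{u^*}$ at an earlier column and create a nonzero entry. The key observation is that nonzero entries are ruled out purely by the first coordinate $(u,\sigma)$, so the time-dependence of $u^*$'s strategy is harmless. The constructed square submatrix is therefore upper-triangular with nonzero diagonal, giving rank at least $d(S)-q_0(S)-d_{u^*}(S)\geq(1-1/n)(d(S)-q_0(S))$, as desired.
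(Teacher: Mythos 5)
Your proof is correct and follows essentially the same route as the paper's: both adapt the triangular construction of Lemma~\ref{lem:rank-d-q} by letting a pigeonhole-chosen active player $u^*$ (the one contributing the fewest distinct non-return moves, hence at most a $\tfrac1n$-fraction of them) stand in for the inactive vertex, and pairing the first occurrence of each remaining non-return move $(u,\sigma)$ with the row $\big((u,\sigma)(u^*,\sigma^{\tau}_{u^*})\big)$. Your explicit observation that the vanishing of the below-diagonal entries is forced by the first coordinate alone---so the time-varying strategy of $u^*$ is harmless---is precisely the (unstated) reason the paper's substitution remains valid.
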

\begin{proof}
Consider the above proof method with $|T|=0$, and $S_0=S$. Note that now, $H=D_0$.
It is still correct if some arbitrary player is chosen to be the $w$ player, and all 
$((u,\sigma)(v_0,\sigma^0))$ terms are replaced with $((u,\sigma)(w,\sss^{\tau^0_{(u,\sigma)}}))$ terms.
We must further restrict $H$ not to contain any moves of player $w$.

Suppose we choose, as our $w$ player, the player which appears the least number of times in $H$, then we suffer a $\left(\tfrac 1n\right)$-fraction loss in the size of $H$, concluding the proof.
\end{proof}

Now, the rank bound of Lemma~\ref{lem:rank-d-q} is not in a very usable form, as it requires too much structural information about the sequence $S$ to use. 
To this end, we turn to the definition of {\em critical subsequence}, closely based on the definition of a critical block in~\cite{A+17}.

\begin{defn}[Critical Subsequence]\label{def:critical-blocks} For every contiguous subsequence $B$ of $S$, let $\ell(B)$, $d(B)$, and $q_0(B)$ be length, number of distinct pairs, and number of return moves, in $B$, respectively. 
Such a subsequence is termed {\em critical} if $\ell(B)\geq 2\big(d(B)-q_0(B)\big)$, but for every $B'\subseteq B$, 
$\ell(B')< 2\big(d(B')-q_0(B')\big)$.
\end{defn}

Note that a return move for a subsequence $B$ which starts at time $t_B$ is a move $(u,\sss^{t_B}_u)$, as opposed to a $(u,\sss^0_u)$ move. 
We show here that critical subsequences always exist.

\begin{claim}\label{lem:crit}
A critical subsequence always exists in any sequence $S$ of length $2nk$.
Furthermore, if $B$ is a critical subsequence, then $\ell(B) = 2(d(B)-q_0(B))$.
\end{claim}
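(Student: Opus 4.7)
The statement has two parts: the existence of a critical subsequence in any $S$ with $|S|\geq 2nk$, and the length equality $\ell(B)=2(d(B)-q_0(B))$ for any critical $B$. For \emph{existence}, I would first observe that $S$ itself satisfies $\ell(S)\geq 2(d(S)-q_0(S))$: there are only $nk$ distinct (player,strategy) pairs, so $d(S)\leq nk$ while $\ell(S)=2nk$ and $q_0(S)\geq 0$. Hence the family of contiguous subsequences $B\subseteq S$ with $\ell(B)\geq 2(d(B)-q_0(B))$ is non-empty, and any minimum-length element in this family is critical by definition.

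For the \emph{equality}, let $B$ be critical and set $f(B):=\ell(B)-2(d(B)-q_0(B))\geq 0$. I would argue by contradiction, assuming $f(B)\geq 1$. Let $m_\ell=(u,\sigma)$ be the last move of $B$ and $B'=B_{1:\ell-1}$; by criticality $f(B')\leq -1$. A case analysis on whether $(u,\sigma)$ is a repeated pair in $B$ and whether $\sigma=\sss^{t_B}_u$ (i.e., $m_\ell$ is a return) computes $d(B')$ and $q_0(B')$ in terms of $d(B),q_0(B)$, giving:
\begin{itemize}
\item unique pair and non-return: $f(B')=f(B)+1$, hence $f(B)\leq -2$;
\item unique-return or repeat-non-return: $f(B')=f(B)-1$, hence $f(B)\leq 0$;
\item repeated return: $f(B')=f(B)-3$, yielding only $f(B)\leq 2$.
\end{itemize}
The first two options contradict $f(B)\geq 1$ immediately, so the last move must be a repeated return.

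For this remaining case, let $\tau<\ell$ be the earliest index in $B$ where $(u,\sigma)$ appears; note $\tau\geq 2$, because $u$ starts at $\sigma$ and must first move away to make $m_\tau$ legal. The prefix $B_{1:\tau}$ shares its starting profile with $B$, so $m_\tau$ is a return for $B_{1:\tau}$, and since $\tau$ is the first occurrence of $(u,\sigma)$ it is also unique within $B_{1:\tau}$; this places $B_{1:\tau}$ in the unique-return case, giving $f(B_{1:\tau})=f(B_{1:\tau-1})+1$. I would then study $g(i):=f(B_{1:i})$ as a walk on $\mathbb Z$ starting at $g(0)=0$, ending at $g(\ell)\geq 1$, with step sizes in $\{-1,+1,+3\}$ and bounded above by $-1$ at every internal index $0<i<\ell$ by criticality. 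The identity $g(\tau)=g(\tau-1)+1$ forces $g(\tau-1)\leq -2$; iterating this reasoning at the first occurrences of subsequent ``repeat-return'' moves, or equivalently counting how many steps of each type are compatible with a walk that climbs from $0$ to $\geq 1$ while staying $\leq -1$ internally, should produce the contradiction.

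\textbf{Main obstacle.} The crux is precisely the repeated-return case for the last move: a single-sided trim from the right only yields $f(B)\leq 2$, and one cannot symmetrise by trimming from the left because doing so changes the reference profile $\sss^{t_B}$ on which $q_0$ depends, redefining which moves count as returns. The prefix-walk iteration above, keyed to the first occurrence of the offending pair $(u,\sigma)$, is the technical step I expect to require the most care.
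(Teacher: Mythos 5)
Your existence argument is the same as the paper's and is fine. The equality argument, however, has a genuine gap, and it stems from a miscount of $q_0$ in your ``repeated return'' case. In this paper $q_0(B)$ counts the number of \emph{players} who play a return move (equivalently, the number of distinct pairs $(u,\sss^{t_B}_u)$ occurring as moves in $B$), not the number of return-move \emph{occurrences}; this is what makes the paper's stated bounds $q_0\leq p_2$ and $q_0\leq d/2$ true, and it is what the paper's own proof of this claim uses. Under this reading, if the last move $(u,\sigma)$ with $\sigma=\sss^{t_B}_u$ also occurs earlier in $B$, then deleting it changes neither $d$ nor $q_0$ (the pair, hence the returning player, still appears in $B'$), so $f(B')=f(B)-1$ exactly as in your repeat-non-return case, and the contradiction $f(B)\leq 0$ is immediate. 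Your value $f(B')=f(B)-3$ would require $q_0$ to drop while $d$ stays fixed, which cannot happen when $q_0$ counts distinct pairs. The paper's proof is precisely the one-step trim you begin with: it verifies $d(B')-q_0(B')\leq d(B)-q_0(B)$ in every case (the only way $q_0$ can drop is if $d$ drops with it), whence $\ell(B')=\ell(B)-1\geq 2(d(B')-q_0(B'))$, contradicting criticality of $B$.

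Even taken on its own terms, the ``prefix-walk'' machinery you introduce for the residual case is not a proof: it ends with the assertion that iterating the reasoning at first occurrences of repeat-return moves ``should produce the contradiction,'' and you yourself flag this as the step needing the most care. So as written, the hardest case of your argument is both unnecessary --- it evaporates under the paper's definition of $q_0$ --- and unresolved. Once the count in that case is corrected, your argument collapses to the paper's.
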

\begin{proof}
As there are at most $nk$ distinct player-strategy pairs possible, 
the entire sequence $S$ satisfies the relation $\ell(S)\geq 2d(S)\geq 2(d(S)-q_0(S))$. 
Conversely, for every subsequence $B$ of length 1 ($i.e.$ a single move), $d(B)=1,q_0(B)=0 \Rightarrow 1=\ell(B)<2(d(B)-q_0(B))=2$.
Thus, it suffices to take an inclusion-minimal subsequence which satisfies $\ell(B)\geq 2(d(B)-q_0(B))$ and obtain a critical subsequence. 

It remains to show that for $B$ critical $\ell(B) = 2d(B)-2q_0(B)$. Suppose not, then it is strictly larger. Let $B'$ be obtained from $B$ by dropping the last column.
Then, 
\[
	\ell(B') \ = \ 
    \ell(B)-1 \ \geq \ 
    2d(B) - 2 q_0(B) + 1 - 1
\]
Now, we claim $d(B)-q_0(B)\geq d(B')-q_0(B')$. Clearly $d(B)-1\leq d(B')\leq d(B)$, and $q_0(B)-1\leq q_0(B') \leq q_0(B)$. However, if $q_0(B')= q_0(B)-1$, then we must also have $d(B')=d(B)-1$. Thus, in all cases, $d(B)-q_0(B)\geq d(B')-q_0(B')$. 
This implies $\ell(B')\geq 2(d(B')-q_0(B'))$, contradicting the criticality of $B$.
\end{proof}

The tight bound $\ell(B)=2d(B)-q_0(B)$ is the final key in proving the main rank lemma from this section, below. Since we prove that any critical subsequence has good rank, and any sequence has a critical subsequence, then this shows that any length-$2nk$ sequence has a high-rank subsequence.

\begin{lem}\label{lem:crit-rank}
Let $S$ be a BR sequence of length $2nk$ which has at least one inactive player.
Let $B$ be some critical subsequence of $S$ and let $\Lcal$ be $B$'s \Lset. Then $\Lcal$ contains at least $\tfrac12d_1(B) + d(B)-q_0(B)$ linearly independent vectors.
\end{lem}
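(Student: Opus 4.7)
The plan is to apply Lemma~\ref{lem:rank-d-q} to the critical subsequence $B$ and then use the defining property of criticality to convert the separated-block contribution into a bound in terms of $d_1(B)$. First observe that $B$ inherits any inactive player of $S$, so the hypothesis of Lemma~\ref{lem:rank-d-q} is satisfied by $B$, yielding
\[
    \mathrm{rank}(\Lcal) \;\geq\; d_1(B) \;+\; \sum_{S'\in\mathbb{S}} \bigl(d(S')-q_0(S')\bigr),
\]
where $\mathbb{S}$ denotes the collection of separated blocks of $B$.

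Next I would bound the summand by exploiting criticality. When $d_1(B)\geq 1$, every separated block $S'\in\mathbb{S}$ is a \emph{proper} contiguous subsequence of $B$, since each of the $d_1(B)$ boundary moves made by a non-repeating player is excluded from the blocks (see Definition~\ref{def:separated}). Definition~\ref{def:critical-blocks} then forces $\ell(S') < 2\bigl(d(S')-q_0(S')\bigr)$, i.e.\ $d(S')-q_0(S') > \ell(S')/2$. Using the length identity $\ell(B) = d_1(B) + \sum_{S'\in\mathbb{S}} \ell(S')$ and summing gives
\[
    \sum_{S'\in\mathbb{S}} \bigl(d(S')-q_0(S')\bigr) \;>\; \frac{\ell(B)-d_1(B)}{2}.
\]
Combining with the previous rank inequality and invoking Claim~\ref{lem:crit} to substitute $\ell(B) = 2\bigl(d(B)-q_0(B)\bigr)$, I obtain
\[
    \mathrm{rank}(\Lcal) \;>\; d_1(B) + \frac{\ell(B)-d_1(B)}{2} \;=\; \frac{d_1(B)}{2} + d(B) - q_0(B),
\]
which, as ranks are integers, implies the claimed bound.

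For the degenerate case $d_1(B)=0$ the separator-block decomposition is simply $\mathbb{S}=\{B\}$, and Lemma~\ref{lem:rank-d-q} directly yields $\mathrm{rank}(\Lcal)\geq d(B)-q_0(B) = \tfrac12 d_1(B) + d(B)-q_0(B)$, requiring no criticality argument.

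The main subtlety I expect is in justifying the strictness of the inequality from criticality: Definition~\ref{def:critical-blocks} supplies a strict inequality only for \emph{proper} sub-subsequences, and this is precisely why the separator set in Definition~\ref{def:separated} must consist of moves (not merely of players) that actually punch holes in the block decomposition. Verifying that $S'\subsetneq B$ for every $S'\in\mathbb{S}$ whenever $d_1(B)\geq 1$ is the crux; once that is confirmed, the remainder of the argument is a short arithmetic assembly of Lemma~\ref{lem:rank-d-q}, Definition~\ref{def:critical-blocks}, and Claim~\ref{lem:crit}.
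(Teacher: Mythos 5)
Your proposal is correct and follows essentially the same route as the paper: apply Lemma~\ref{lem:rank-d-q} to the critical block $B$, use criticality to get the strict bound $d(S')-q_0(S')>\ell(S')/2$ on each separated block, combine via the length identity $\ell(B)=d_1(B)+\sum_{S'}\ell(S')$, and finish with Claim~\ref{lem:crit}; the paper likewise disposes of the $d_1(B)=0$ case separately. Your explicit check that each separated block is a \emph{proper} subsequence when $d_1(B)\geq 1$ makes precise a point the paper leaves implicit, but the argument is the same.
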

\begin{proof}
	Since $S$ has an inactive player, then so must $B$. 
	Therefore, Lemma~\ref{lem:rank-d-q} applies. 
    Recall, Lemma~\ref{lem:rank-d-q} shows that $\Lcal$ contains at least $d_1(B)+\sum_{S'\in \mathbb S(B)} d(S')-q_0(S')$ linearly independent vectors.
    If $p_1(B)=d_1(B)=0$, then we are done.
    Otherwise, since $B$ is critical, then for all $S'\in \mathbb S(B)$, $\ell(S')<2(d(S')-q_0(S'))$.
   Hence,
   \[
   	rank(\Lcal) \ \geq\  d_1(B)+\sum_{S'\in \mathbb S(B)}d(S')-q_0(S') \ > \ 
    \tfrac12 d_1(B) + \tfrac12 d_1(B) + \sum_{S'\in \mathbb S(B)} \tfrac12 \ell(S')
   \]
   However, $\ell(B) = d_1(B) + \sum_{S'\in \mathbb S(B)} \ell(S')$, and so this implies $rank(\Lcal)\geq \tfrac12 d_1(B)+\tfrac12 \ell(B)$. 
   By criticality and Claim~\ref{lem:crit}, $\ell(B)\geq 2(d(B)-q_0(B))$, giving us our desired bound.
\end{proof}

Since we have bounded the total increase on critical subsequences, though, rather than long sequences, we introduce the following notation:
\begin{defn}\label{defn:Delta-p}
Recall the notation $\Delta(\ell)$ from Definition \ref{def:Delta-l}. We similarly define $\Delta'(p)$ as the minimum total increase due to any critical subsequence with exactly $p$ active players, where the initial strategy profile is arbitrary. 
We also denote as $\overline{\Delta}(p)$ the minimum total increase taken over critical subsequences where $p_1\geq p_2$, and $\underline \Delta(p)$, the converse. Thus, $\Delta'(p) = \min\{\overline \Delta(p),\,\underline\Delta(p)\}$.
\end{defn}

Our first main result is to show that the probability of $\overline\Delta(p)$ being small is vanishing. 
It does not suffice, however, to follow the same structure as the ``informal lemma'' above: critical subsequences may be very short, and for any sequence $S$, its rank can not be more than $\ell(S)$. 
A probability bound of the form $k^n(nk)^\ell(\phi\epsilon)^\ell$
will require $\epsilon$ far too small if $\ell\ll n$, since the $k^n$ term will dominate.
The following proof technique illustrates that no $k^n$ term is needed in a union bound when $p$ is small.

\begin{thm}\label{thm:prob-p1}
	\(
\Pr\big[\:\overline \Delta(p)\in (0,\epsilon)\big] \leq \left(\left(20\phi^2n^3k^3\right)^k\epsilon^{1/4}\right)^p.
\)
\end{thm}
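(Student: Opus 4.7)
The proof is a union-bound computation that combines the rank bound of Lemma~\ref{lem:crit-rank} with a carefully sized enumeration of candidate critical subsequences. The main challenge is to avoid a $k^{n-p}$ factor from enumerating over inactive-player initial strategies, which is handled via the bucketing idea discussed in Section~\ref{sec:smoothed-poly-overview}.

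Fix structural parameters $\ell$, $d$, $q_0$ of a critical subsequence $B$ with exactly $p$ active players and $p_1(B) \geq p_2(B)$; by Claim~\ref{lem:crit}, $\ell = 2(d-q_0)$. I would count candidate tuples as follows: $(nk)^\ell$ choices for the length-$\ell$ move sequence, $k^p$ choices for the initial strategies of the active players, and, in place of $k^{n-p}$ choices for inactive-player initials, $(4n/\epsilon)^{d-q_0}$ rounded values for the shifted constants $\widetilde C_{u,i} = \sum_{w \in P_0} \widetilde A_{uw}(i, \sigma^0_w)$, where $\widetilde A_{uv}(\sigma, \sigma') = A_{uv}(\sigma, \sigma') - A_{uv}(\sigma^0_u, \sigma^0_v)$. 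The shift ensures $\widetilde C_{u, \sigma^0_u} = 0$, so the $q_0$ return-player constants vanish and only $d - q_0$ remain, each of magnitude at most $2n$.

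For each fixed tuple, the event ``the actual $\widetilde C$'s realize the specified bucket AND every move improves the potential by at most $\epsilon$'' is a conjunction of $(d-q_0) + \ell$ events of the form ``a fixed integer linear combination of the entries of $A$ lies in an interval of width $O(\epsilon)$''. By Lemma~\ref{lem:probability}, this conjunction has probability at most $(\phi\epsilon)^R$, where $R$ is the rank of the combined constraint vectors. Since adding the bucket vectors cannot decrease rank, $R \geq \operatorname{rank}\{L_t\}_{t=1}^{\ell} \geq d_1/2 + (d-q_0)$ by Lemma~\ref{lem:crit-rank}; together with the hypothesis $p_1 \geq p_2$ (hence $d_1 \geq p_1 \geq p/2$), this yields $R \geq p/4 + (d-q_0)$. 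Multiplying the three counts by the probability bound yields, for fixed $(\ell, d, q_0)$,
\[
    (nk)^\ell \cdot k^p \cdot (4n/\epsilon)^{d-q_0} \cdot (\phi \epsilon)^{p/4 + (d-q_0)}
    \;=\; k^p\,(nk)^\ell\,(4n\phi)^{d-q_0}\,\phi^{p/4}\,\epsilon^{p/4}.
\]
Using $\ell \leq 2kp$ and $d-q_0 \leq kp$ (each active player contributes at most $k$ distinct strategies to $d$), this is at most $k^p(nk)^{2kp}(4n\phi)^{kp}\phi^{p/4}\epsilon^{p/4}$, which for $k \geq 1$ and $\phi \geq 1$ is dominated by $\bigl((20\phi^2 n^3 k^3)^k\bigr)^p \epsilon^{p/4}$. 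A final sum over the at most $\operatorname{poly}(n,k)$ choices of $(\ell, d, q_0)$ is absorbed into the constants, giving the stated bound.

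The main delicate step is justifying the bucketing argument: one must verify that conditioning the inactive-contribution constants to prescribed $\epsilon$-cells can be phrased as linear-width-$\epsilon$ events, so that Lemma~\ref{lem:probability} applies jointly with the $\ell$ improvement events. Concretely, one rewrites $\ip{L_t, A} = I_t + \widetilde C_{u_t, \sigma^t_{u_t}} - \widetilde C_{u_t, \sigma^{t-1}_{u_t}}$, where $I_t$ depends only on active--active entries of $A$ and the $\widetilde C$'s depend on active--inactive entries; some care is needed to check that the $d-q_0$ bucket vectors are themselves linearly independent (each has a unique leading nonzero entry on its $((u,i)(w,\sigma^0_w))$ row) and combine with $\{L_t\}$ to yield the promised rank. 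The rest of the argument is routine arithmetic on exponents.
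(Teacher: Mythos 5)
Your overall plan is the paper's: shift the potential so that the $q_0$ return-move constants vanish, replace the $k^{n-p}$ enumeration of inactive initial strategies by an $\epsilon$-bucketing of the inactive-contribution constants $C(u,i)$, invoke the critical-subsequence rank bound of Lemma~\ref{lem:crit-rank} together with $p_1\geq p_2$ to get rank at least $(d-q_0)+p/4$, and finish with the same exponent arithmetic. The one place you deviate is exactly the place you flag as delicate, and there is a genuine gap there. The events you assign to each enumerated tuple --- ``the actual $\widetilde C$'s realize the specified bucket AND every $\ip{L_t,A}\in(0,\epsilon)$'' --- are not determined by the tuple: both the vectors $\beta_{u,i}$ defining $\widetilde C_{u,i}=\sum_{w\in P_0}\widetilde A_{uw}(i,\sigma^0_w)$ and the transformation vectors $L_t$ themselves have support on rows $((\cdot,\cdot)(w,\sigma^0_w))$, so writing them down requires knowing the inactive players' initial strategies, which is precisely the information your enumeration omits. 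As stated, the union over tuples therefore does not cover the bad event by well-defined events of the claimed probability; making it a cover forces you either to reinstate the union over inactive profiles (reintroducing $k^{n-p}$) or to pass, as the paper does, to the \emph{approximate} potential in which the rounded bucket values are frozen into constants, so that the per-tuple event becomes a condition on the active--active part $I_t$ alone (each $I_t$ lying in a bucket-determined interval of width $3\epsilon$).

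Once that repair is made, your rank claim ``adding the bucket vectors cannot decrease rank, so $R\geq\operatorname{rank}\{L_t\}$'' no longer applies: the constraint vectors actually fed to Lemma~\ref{lem:probability} are then the $I_t$'s (the $L_t$'s with their inactive-player rows and the bucket vectors stripped out), not the union $\{L_t\}\cup\{\beta_{u,i}\}$. This matters because the proof of the rank bound behind Lemma~\ref{lem:crit-rank} charges the $d_1$ separator columns to rows of the form $((v_i,\sigma^i)(w,\sigma^0_w))$ with $w$ inactive, so one must either re-derive the $d_1/2$ contribution using only active--active rows or otherwise justify that the rank survives the restriction. Your instinct to keep the bucket-membership constraints in the conjunction is an attempt to preserve those rows as usable randomness, but it cannot be executed at the level of tuple-indexed events; this is the missing idea rather than ``routine arithmetic.''
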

\begin{proof}
	We have shown in Lemma~\ref{lem:crit-rank} that a critical subsequence has high rank. However, as discussed above, this does not suffice to counteract simple union bounds.    
    Recall the potential function $\payoff(\bm \sigma)$ from \eqref{eq:pot}, which represents the sum of the payoffs on all game edges at strategy $\bm \sigma$. 
    For simplicity of notation, let $H(t)$ denote $\payoff(\sss^t)$. 
    Note that analyzing the change in $H(t)$ through any BR sequence is unaffected by shifting $H$ by a constant. We will the initial payoff as this constant: let $H'(t):= H(t)-H(0)$.
    
    Our goal is to bound the rate of change of $H'(t)$ over all possible BR sequence which form critical subsequences. 
    Let $P$ be the set of all active players, and $Q$, the set of inactive players. 
    Note that 
    \begin{align*}
	H'(t)&= \sum_{u=1}^{n-1} \sum_{v=u+1}^n A((u,\sss^t_u)(v,\sss^t_v))-A((u,\sss^0_u)(v,\sss^0_v))
    \end{align*}
    For simplicity of notation, denote $\widetilde A((u,\sigma_u)(v,\sigma_v)) = A((u,\sigma_u)(v,\sigma_v))-A((u,\sss^0_u)(v,\sss^0_v))$. Then
    \begin{align*}
    H'(t):= \sum_{u,v\in P} \widetilde A((u,\sss^t_u)(v,\sss^t_v)) + \sum_{w,w'\in Q} \widetilde A((w,\sss^t_w)(w',\sss^t_{w'})) + \sum_{u\in P} \sum_{w\in Q} \widetilde A((u,\sss^t_u)(w,\sss^t_{w})) 
    \end{align*}
   Now, for $w\in Q$, $\sss^t_w=\sss^0_w$, so the second term is 0.
   Furthermore, the inner-sum of the 3rd terms are in fact constants which depend on the strategy of the active player. Thus, define $C(u,\sigma):= \sum_{w\in Q} \widetilde A((u,\sigma)(w,\sss^0_w))$. Then the above sum can be expressed as
   \begin{align*}
    H'(t):= \sum_{u,v\in P} \widetilde A((u,\sss^t_u)(v,\sss^t_v)) + 0 + \sum_{u\in P} C(u,\sss^t_u) 
    \end{align*}
    Also, note that $C(u,\sss^0_u)=0$, since the $\widetilde A$ terms cancel.
    This reduction to $C$ terms is the key to our analysis: 
    To bound $\overline\Delta$, we must bound $H'(t)-H'(t-1)$ for all $t\geq 1$,
    and these values, in turn, depend only on the vector $A$, the initial strategies of the active players, the improving sequence, and the $C(u,\sigma)$ values for each of the pairs $(u,\sigma)$ which appear in the sequence. As noted, $C(u,\sss^0_u)=0$ for all $u$, so we need not consider them for initial strategies. 
    
    We can not, however, enumerate all the possible $C$ values in our union bound. 
    Note, however, that approximating the $C$ values simply approximated the $H$ values. Thus, we may round the $C$ values to the nearest multiple of $\epsilon$, as was first introduced in~\cite{A+17}. 
    Let $C'(u,\sigma)$ be the nearest multiple of $\epsilon$ to $C(u,\sigma)$. 
    Since $C(u,\sigma)\in [-n,n]$ for all $u$ and $\sigma$, then there are $2n/\epsilon$ possible choices for $C'(u,\sigma)$. 
    Furthermore, for any time $t$, note that $H(t)-H(t-1)$ depends only on two $C$ terms, namely $C(u,\sss^{t}_u)$ and $C(u,\sss^{t-1}_u)$. Thus, replacing these with $C'$ terms modifies the value of $H(t)-H'(t)$ by at most $2\epsilon/2$, and so the event $H'(t)-H'(t-1)\in (0,\epsilon)$ is less likely than $\widetilde H'(t)-\widetilde H'(t-1)\in (-\epsilon,2\epsilon)$, where $\widetilde H'$ is simply the approximation due to replacing $C$ with $C'$.
    
    It remains, then, to apply Lemma~\ref{lem:probability}: 
    for any one critical subsequence $S$ on $p$ players, if $p_1\geq p_2$, then by Lemma~\ref{lem:crit-rank}, the improvement of each step of the {\em approximate} potential along the sequence will lie in $(-\epsilon,2\epsilon)$ with probability $(3\phi\epsilon)^{d(S)-q_0(S)+p(S)/4}$. 
    Taking a union bound over {\em all} approximated sequences, this event holds with probability $k^{p(S)}(nk)^{\ell(S)}(2n/\epsilon)^{d(S)-q_0(S)}(3\phi\epsilon)^{d(S)-q_0(S)+p(S)/4}$, since there are only $d(S)-q_0(S)$ $C$ values to approximate, and the rest of the value depends only on the initial configurations of the active players.
    Thus, noting that $d(S)-q_0(S)\leq k\cdot p(S)$, and since $S$ is critical, $\ell(S)\leq 2d(S)-2q_0(S)\leq 2kp(S)$, we have
    \begin{align*}
    	\Pr\big[\:\overline\Delta \in (0,\epsilon)\big]&\leq k^{p(S)}(nk)^{\ell(S)}(2n/\epsilon)^{d(S)-q_0(S)}(3\phi\epsilon)^{d(S)-q_0(S)+p(S)/4}\\
        &\leq 20^{k\cdot p(S)}(nk\phi)^{2k\cdot p(S)}(nk)^{k\cdot p(S)} \epsilon^{p(S)/4}\\
        &= \left((20 n^3k^3\phi^2)^{k}\epsilon^{1/4}\right)^{p(S)}
   \end{align*}
   as desired.
\end{proof}

\subsubsection{Case II: Mostly Repeating Players}\label{sec:poly-case-p2}
We have shown in the previous section how to bound the probability of $\overline \Delta$ being small. In this case, we handle $\underline \Delta$, that is, the minimum improvement of critical subsequences when $p_2\geq p_1$.
The proof method in this case is very different from the converse case, but as above, we must still use work around the $k^n$ term in the standard union bound. This is done, in this case, by combining the columns of $\Lcal$ into vectors which have zero entries for all payoff values with inactive players, and bounding their rank instead.

%Recall that, for any BR sequence, we define a set of vectors $\mathcal{L}=\{L_1,L_2,..\}$ that capture the change in the potential function with every move. These vectors are linear combinations of the $\mathcal A$ payoffs, and formally defined in Section \ref{overview:FLIP}.

%\note{want the entire definition of $\mathcal{L}$ repeated or just recall? if just recall, include the comment above and remove the definition}
%-----------------------------------

\begin{lem}\label{lem:rank-p2}
Let $S$ be a BR sequence, and let $\Lcal=\{L_1,\,L_2,\,\dotsc\}$ be its \Lset.
Then the span of $\Lcal$ contains at least $p_2(S)/2$ linearly independent vectors $V_1,\,\dotsc,\,V_{p_2/2}$,
such that for all~$j$,
\begin{enumerate} 
\item[(i)] the vector $V_j$ is a 0-1 combination of the $L_i$'s, and
\item[(ii)] the value $\ip{V_j,A}$ does not depend on the strategies of the inactive players. 
\end{enumerate}
\end{lem}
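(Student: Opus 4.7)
The plan is to construct, for each repeating player $u$, a canonical cyclic sum $V_u \in \mathrm{span}(\Lcal)$, and then show that the collection contains $p_2/2$ linearly independent vectors. Following the recipe of Section~\ref{sec:cyclic}, pick a strategy $i$ for which $(u,i)$ occurs at least twice in $S$ (treating $\sigma^0_u$ as an implicit first occurrence when $\sigma^0_u = i$). Let $\tau_0$ denote the first occurrence (with $\tau_0 = 0$ if $\sigma^0_u = i$), $\tau_m$ the next occurrence of $(u,i)$ after $\tau_0$, and $\tau_1 < \cdots < \tau_{m-1}$ the intermediate moves of $u$. Set $V_u := L_{\tau_1} + L_{\tau_2} + \cdots + L_{\tau_m}$, which is a 0-1 combination of $\Lcal$ by construction, yielding property (i).

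For property (ii), fix an inactive player $w$, whose strategy is pinned at $\sigma^0_w$. Writing $i_\tau$ for the strategy $u$ adopts at step $\tau$, the term $L_{\tau_j}$ contributes $+1$ at $((u, i_{\tau_j})(w, \sigma^0_w))$ and $-1$ at $((u, i_{\tau_{j-1}})(w, \sigma^0_w))$. Because $i_{\tau_0} = i_{\tau_m} = i$, the multisets $\{i_{\tau_1}, \ldots, i_{\tau_m}\}$ and $\{i_{\tau_0}, \ldots, i_{\tau_{m-1}}\}$ of added and subtracted strategies coincide, so every inactive-player entry of $V_u$ cancels. Non-zero-ness of $V_u$ as a combinatorial vector is immediate: each step of an improving sequence satisfies $\ip{L_{\tau_j}, A} > 0$, forcing $\ip{V_u, A} > 0$ and hence $V_u \ne \mathbf{0}$.

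To select $p_2/2$ independent vectors, build an auxiliary digraph $D$ on the active players with $u \to v$ whenever $V_u$ has a nonzero entry of the form $((u, \cdot)(v, \cdot))$; by the above, each repeating $u$ has out-degree $\ge 1$ in $D$ (with $v \ne u$, since the game graph is simple). I would then invoke the combinatorial fact that such a $D$ admits a subset $X$ of repeating players with $|X| \ge p_2/2$ in which every $u \in X$ has an out-neighbor outside $X$. For each $u \in X$, choose a pivot entry $((u, i_u)(v_u, j_u))$ with $v_u \notin X$. The resulting $|X| \times |X|$ submatrix of $\{V_{u'} : u' \in X\}$ indexed by these pivots is in fact \emph{diagonal}: only $V_u$ and $V_{v_u}$ can be nonzero in column $((u, i_u)(v_u, j_u))$, but $v_u \notin X$, so within $X$ only $V_u$ contributes, and by choice of the pivot that contribution is nonzero. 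Linear independence follows immediately.

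The main obstacle is the combinatorial bipartition claim. A uniformly random $X$ only guarantees $p_2/4$ good vertices in expectation, since the guaranteed out-degree is just $1$. I would push past this by fixing one out-edge per repeating player to obtain a functional sub-digraph of $D$, whose components are $\rho$-shaped with cycles of length $\ge 2$ (no self-loops). Two-coloring each cycle alternately and then propagating opposite colors along the tree branches yields at least $p_2 - (\text{\# odd cycles})$ ``good'' vertices $u$ whose chosen out-neighbor has the opposite color. Treating the non-repeating players as free elements of $V(D) \setminus X$ that impose no constraint, and possibly choosing the larger of the two color classes while absorbing odd-cycle losses, should recover the $p_2/2$ bound; extracting the clean constant is the technical hurdle.
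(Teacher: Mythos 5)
Your construction of the cyclic sums, the telescoping cancellation that gives property (ii), the non-vanishing argument via strict improvement, the auxiliary digraph, and the diagonal-submatrix conclusion all match the paper's proof. The gap is exactly where you flag it: the combinatorial selection of $X\subseteq P_2$ with $|X|\geq p_2/2$ such that every $u\in X$ has an out-neighbor outside $X$. This is not a mere ``technical hurdle'' of extracting a constant --- the fact you want to invoke is \emph{false} for general digraphs in which every vertex of $P_2$ has out-degree at least one. Take $D$ to be a disjoint union of $t$ directed triangles (each vertex with out-degree exactly $1$): any $X$ containing two vertices of a triangle traps one of them with its unique out-neighbor inside $X$, so $|X|\leq t = p_2/3 < p_2/2$. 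Consistently, your own functional-subdigraph/cycle-two-coloring route yields only $(p_2-\#\text{odd cycles})/2 \geq p_2/3$ in the worst case, so it cannot recover the stated constant $1/2$ without exploiting structure of $D$ beyond out-degree $\geq 1$. (A $p_2/3$ bound would still suffice for the downstream probability bounds, only worsening constants in the exponent, but it does not prove the lemma as stated.)

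The paper takes a different route for this step: it repeatedly extracts BFS arborescences rooted at repeating players, partitions each arborescence's vertices by the parity of their depth, and takes the larger parity class (intersected with $P_2$) from each; the union $P_2'$ has size at least $p_2/2$, and each selected vertex is argued to have an out-neighbor either one level deeper in its own arborescence or in a previously extracted arborescence, hence outside $P_2'$. You would need to replace your cycle-coloring argument with something of this form (or otherwise use the specific structure of the influence digraph) to close the proof; as written, the proposal establishes properties (i) and (ii) and the independence mechanism, but not the claimed count of $p_2/2$ independent vectors.
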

\begin{proof}
	Fix a repeating player $u$, and denote one of its repeating strategies as $\sigma$. 
    Let $t_0$ be the index of the first occurrence of $(u,\sigma)$ in $S$, setting $t_0=0$ if $i=\sss^0_u$.
    Let $t_0<t_1<t_2<\dotsm$ be all occurrences of player $u$ in the sequence $S$ starting with $t_0$, and suppose $t_\sigma$ is the index of the second occurrence of $(u,\sigma)$.
    Formally, if $S=(u_1,i_1),(u_2,i_2),\dotsm$, then $u_{t_j}=u$ for all $j$,
    $i_{t_0}=i_{t_s}=\sigma$, and $i_{t_j}\neq \sigma$ for all $0< j<s$. Consider, then, the vector $V(u) = \sum_{j=1}^s L_{t_j}$. 
    Vector $V(u)$ satisfies condition $(i)$ by construction. We will show that it satisfies $(ii)$, and that
    at least $p_2/2$ of them must be linearly independent.
    
     Let $w$ be any inactive player. For simplicity of notation, denote the strategy $i_{t_j}$ as $i_j$.
    Consider the inner product $\ip{V(u),A}$ when restricted to the rows indexed by $((u,*)(w,*))$:\\[-20pt]
    \begin{align*}
    	\ip{V(u),A}_{\big|((u,*)(w,*))}&=\sum_{j=1}^s A((u,{i_j})(w,\sss^0_w)) - A((u,{i_{j-1}})(w,\sss^0_w))\\
        &= A((u,{i_s})(w,\sss^0_w))-A((u,{i_0})(w,\sss^0_w))\\
        &=0
    \end{align*}
    Therefore, the inner product $\ip{V(u),A}$ is independent of the value of $\sss^0_w$.
    Since this holds for all inactive $w$, we have proved part $(ii)$. 
    
    Now, it suffices to argue that some collection of $p_2/2$ many $V$ vectors are linearly independent.
    We begin by constructing an auxiliary directed graph $G'=(V, E')$,
    where $V$ is the set of players, and $E'$ will be defined as follows:
    let $u$ be some repeating player, and define $V(u)$ as above, for the repeat move $(u,\sigma)$. 
    The $V(u)$ vector can not be entirely 0, as this would imply that $\sss^{t_s}$
    and $\sss^{t_0}$ are the same, and so the sequence could not have been strictly improving.
    Then, for {\em every} player $w$ such that $V(u)$ for $(u,\sigma)$ has a non-zero $((u,\sigma)(w,*))$ entry, add the edge $(u,w)$ to $E'$.
    
    Consider the following procedure: pick an arbitrary vertex $r_1\in P_2$, and let $T_1$ be the BFS arborescence rooted at $r_1$ which spans all nodes reachable from $r_1$. 
    Then delete $V(T_1)$ from $G'$ and repeat, picking an arbitrary root vertex $r_2\in P_2\setminus V(T_1)$, and get the arborescence $T_2$ on everything reachable from $r_2$. 
%     {\em If, when picking roots, there is a node with in-degree zero (source) and non-zero out-degree, then choose that node as the root.}
    We may continue this until every vertex of $P_2$ is covered by some arborescence. 
%     If at any point, there is some node $r_i\in P_2$ from the remaining graph which is a source of the remaining graph, then it must have been a source of the original graph, since it can not have been reachable from any removed nodes.
%     Therefore, if we are picking a root when there are no sources left, and it happens to have out-degree zero in the remaining nodes, we may, instead, choose one of its in-neighbours as the root. 
    For each $i=1,2,\dotsc$, let $T_i^0$ and $T_i^1$ be the set of nodes of $T_i$ which are of even or odd distance from $r$ along $T_i$, respectively. 
    Let $P'_i$ be the larger of $V(T_i^0)\cap P_2$ and $V(T_i^1)\cap P_2$, and $P_2':= \bigcup_{i=1}^\infty P_i'$.
    
    We must have that $|P_2'|\geq |P_2|/2  = p_2/2$. We wish to show that the collection $\mathcal V:=\{V(u):u\in P_2'\}$ is independent. 
    Every $u\in P_2'$ must have some out-neighbour $w$. If $u$ was not a leaf of the arborescence it was selected in, then it must have some out-neighbour along the arborescence, and we may choose this neighbour. This out-neighbour can not also be in $P_2'$. In this case, $V(u)$ will be the only vector from $\mathcal V$ to contain a non-zero $((u,*)(w,*))$ entry, since $w$ was not taken in $P_2'$.
    If, instead, $u$ was a leaf of its arborescence, then its out-neighbours must be in previously constructed arborescences. Let $w$ be any such neigbour, then $V(w)$ can not contain a non-zero $((u,*)(w,*))$ entry, as otherwise $u$ would have been in the other arborescence. Therefore, $V(u)$ is the only vector in $\mathcal V$ to contain a nonzero $((u,*)(w,*))$ entry.
Thus, $\mathcal V$ must contain a $|\mathcal V|\times |\mathcal V|$ diagonal submatrix, and therefore has rank at least $|\mathcal V|\geq p_2/2$, as desired.
\end{proof}

Therefore, we have shown that the transformation set of any sequence must have large rank if $p_2\geq p_1$. However, the vectors of the transformation set depend on the strategies of inactive players, whereas these $V$ vectors do not. 
We will show that this issue can be circumvented, and prove our second main result.
Recall that $\underline\Delta(p)$ is the (random) minimum improvement of any critical subsequence $S$ with $p$ players and $p_2(S)\geq p_1(S)$.

% In Section \ref{sec:PrelGame} we argued that a BR sequence improves the potential $\payoff$ function \eqref{eq:pot} of the game in every step. The next lemma bounds the probability of this improvement being very small. 

\medskip
\begin{thm}\label{thm:prob-p2}
$\Pr\big[\underline\Delta(p)\in(0,\epsilon)]\leq\left(2(nk)^{2k}k^{5/4}(n\phi\epsilon)^{1/4}\right)^p$.
% The probability that a BR sequence $S$ of length $\ell$ has a total improvement in potential at most $\epsilon$ is at most $(\ell\phi\epsilon)^{p_2(S)/2}$, where $p_2(S)$ is the number of repeating players in $S$.
\end{thm}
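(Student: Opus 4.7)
The plan is to fix an arbitrary critical subsequence $S$ with $p$ active players and $p_2(S) \geq p_1(S)$, apply Lemma~\ref{lem:rank-p2} to obtain a high-rank collection of cyclic sums, invoke Lemma~\ref{lem:probability} on these sums to bound the probability that $S$ is improving with only tiny steps, and finally union bound over $S$. The crucial advantage of working with the cyclic sums $V_j$ rather than the raw \Li s is that Lemma~\ref{lem:rank-p2}(ii) lets us avoid the catastrophic $k^{n-p}$ factor from enumerating the initial strategies of inactive players in the union bound.

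By Claim~\ref{lem:crit}, any critical subsequence satisfies $\ell(S) = 2(d(S) - q_0(S)) \leq 2d(S) \leq 2kp$, since $d(S) \leq k\cdot p(S)$. The hypothesis $p_2 \geq p_1$ gives $p_2 \geq p/2$, so Lemma~\ref{lem:rank-p2} produces $r := \lfloor p_2/2 \rfloor \geq p/4$ linearly independent vectors $V_1, \ldots, V_r$ in the span of $\Lcal$, each a $0$-$1$ combination $V_j = \sum_{t \in J_j} L_t$ for some $J_j \subseteq [\ell]$. If every move of $S$ is only an $\epsilon$-improvement, i.e.\ $\ip{L_t, A} \in (0, \epsilon)$ for all $t$, then for each $j$,
\[
\ip{V_j, A} \;=\; \sum_{t \in J_j} \ip{L_t, A} \;\in\; (0, |J_j|\epsilon) \;\subseteq\; (0, \ell\epsilon) \;\subseteq\; (0,\, 2kp\epsilon).
\]
Since the $V_j$ are linearly independent and have integer entries, applying Lemma~\ref{lem:probability} with each $b_j = 0$ and width $2kp\epsilon$ yields
\[
\Pr\bigl[\,\text{every move of } S \text{ is } \epsilon\text{-improving under } A\,\bigr] \;\leq\; (2kp\,\phi\epsilon)^{r} \;\leq\; (2nk\,\phi\epsilon)^{p/4}.
\]

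Next, I take the union bound. A critical subsequence of length $\ell \leq 2kp$ with moves labeled by (player, strategy) pairs from $[n]\times[k]$ contributes at most $(nk)^\ell \leq (nk)^{2kp}$ possibilities, with summing over $\ell$ absorbing only a lower-order factor. By Lemma~\ref{lem:rank-p2}(ii), each $\ip{V_j, A}$ depends only on the sequence and on the initial strategies of the $p$ active players, so the initial-profile union bound costs only $k^p$ rather than $k^n$. Since the event $\underline\Delta(p) \in (0, \epsilon)$ entails the existence of such a subsequence with all moves in $(0, \epsilon)$, combining the two bounds gives
\[
\Pr[\underline\Delta(p) \in (0, \epsilon)] \;\leq\; (nk)^{2kp} \cdot k^p \cdot (2nk\,\phi\epsilon)^{p/4}
\;=\; 2^{p/4}\,(nk)^{2kp}\, k^{5p/4}\, n^{p/4}\,(\phi\epsilon)^{p/4} \;\leq\; \bigl(\,2\,(nk)^{2k}\, k^{5/4}\,(n\phi\epsilon)^{1/4}\,\bigr)^p,
\]
as claimed.

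The hard part, already resolved by the construction in Lemma~\ref{lem:rank-p2}, is precisely this $k^n$-vs-$k^p$ tension: a direct application of Lemma~\ref{lem:probability} to the columns of $\Lcal$ would force a union bound over all $k^n$ initial profiles, swamping the $(\phi\epsilon)^{p/4}$ probability whenever $p \ll n$. The cyclic-sum viewpoint cancels out the inactive-player contributions entry by entry, so the union bound scales only in the number of active players, and the calculation closes cleanly.
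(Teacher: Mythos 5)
Your proposal is correct and follows essentially the same route as the paper's proof: both use the cyclic sums of Lemma~\ref{lem:rank-p2} to replace the raw transformation vectors, bound each $\ip{V_j,A}$ in a window of width $\ell\epsilon\le 2kp\epsilon$ via Lemma~\ref{lem:probability}, and pay only $k^{p}(nk)^{\ell}$ in the union bound because the cyclic sums are independent of the inactive players' strategies. The arithmetic at the end matches the paper's computation as well.
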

\begin{proof}
Let $\Lcal$ be the transformation set of some critical subsequence $S$, and $\mathcal V$ be a collection of $p_2(S)/2$ independent $V$ vectors from Lemma~\ref{lem:rank-p2} applied to $S$. 
If $V\in \mathcal V$ is given by $V=\sum_{i=1}^m L_{t_i}$ for some collection of indices $t_i$, then we have that $\Pr[\bigwedge_{i=1}^m\ip{L_{t_i},A}\in(0,\epsilon)]\leq \Pr[\ip{V,A}\in (0,m\epsilon)]$.
Since $m\leq \ell$, then taking the collection of all $V$ vectors and applying Lemma~\ref{lem:probability}, we have
\begin{align*}
	\Pr\left[\textstyle\bigwedge_{t=1}^{\ell(S)}\ip{L_{t},A}\in(0,\epsilon)\right]&\leq 
    \Pr\left[\textstyle\bigwedge_{V\in \mathcal V}\ip{V,A}\in (0,\ell\epsilon)\right]
    \leq (\ell\phi\epsilon)^{p_2(S)/2}
\end{align*}
Note that to construct the $V$ vectors, it suffices to have the initial strategies of the active players, and the BR sequence. Thus, there are at most $k^{p(S)}(nk)^{\ell(S)}$ possible collections $\mathcal V$.
Since we are restricting ourselves to critical subsequences, we have $\ell(S)=2(d(S)-q_0(S))\leq k\cdot p(S)$. 
Therefore, we have
\begin{align*}
	\Pr\big[\underline\Delta(p)\in(0,\epsilon)]&\leq k^{p(S)}(nk)^{\ell(S)}(\ell\phi\epsilon)^{p_2(S)/2}\\
    &\leq n^{2k\cdot p(S)}k^{(2k+1)p(S)}(2kp)^{p(S)/4}(\phi\epsilon)^{p(S)/4}\\
    &\leq \left(2(nk)^{2k} k^{5/4}(n\phi\epsilon)^{1/4}\right)^{p(S)}
\end{align*}
as desired.
\end{proof}

\subsubsection{Combining Both Cases}

We have shown above that $\overline\Delta(p)$ and $\underline\Delta(p)$ have vanishing probability of lying in $(0,\epsilon)$. 
In this section, we use these results to show that the BRA will terminate in time polynomial in $n^k$, $k$ and $\phi$, with high probability, when the game graph is complete. 
The following lemma combines our two previous results:

\begin{lem}\label{poly:sequence-of-length-2nk}
Given an improving sequence of length $2nk$, the minimum improvement after performing all moves in the sequence is at least $\epsilon=(nk\phi)^{-O(k)}$ with probability $1-1/poly(n,k)$. 
\end{lem}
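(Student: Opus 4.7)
The plan is to combine Claim~\ref{lem:crit} (existence of a critical subsequence) with the probability bounds of Theorems~\ref{thm:prob-p1} and~\ref{thm:prob-p2}, and then union bound over the possible number of active players in that critical subsequence.

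First, any BR sequence $S$ of length $2nk$ contains, by Claim~\ref{lem:crit}, a critical subsequence $B$. Since every move in an improving $S$ contributes a positive increment, the total potential increase of $S$ is at least that of $B$, which is in turn at least $\Delta'(p(B)) = \min\{\overline{\Delta}(p(B)),\, \underline{\Delta}(p(B))\}$ per Definition~\ref{defn:Delta-p}. The random variables $\overline{\Delta}(p)$ and $\underline{\Delta}(p)$ are already defined as infima over all critical subsequences on $p$ active players and all initial profiles, so no further union bound over the choice of $B$, the initial configuration, or the outer length-$2nk$ sequence is required at this stage: those are already absorbed into Theorems~\ref{thm:prob-p1} and~\ref{thm:prob-p2}.

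Writing $\Delta_{2nk}$ for the worst-case total improvement over all length-$2nk$ improving sequences and starting profiles (Definition~\ref{def:Delta}), the above observation gives
\[
\Pr\big[\Delta_{2nk} < \epsilon\big] \;\leq\; \sum_{p=1}^n \Pr\big[\overline{\Delta}(p) \in (0,\epsilon)\big] \;+\; \sum_{p=1}^n \Pr\big[\underline{\Delta}(p) \in (0,\epsilon)\big].
\]
Plugging in the two bounds $\bigl((20\phi^2 n^3 k^3)^k \epsilon^{1/4}\bigr)^p$ and $\bigl(2(nk)^{2k} k^{5/4}(n\phi\epsilon)^{1/4}\bigr)^p$ and choosing $\epsilon = (nk\phi)^{-ck}$ for a sufficiently large constant $c$ drives both bases below $1/(nk)^2$. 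Each sum then becomes a geometric series in $p$ bounded by $O(1/(nk)^2) = 1/\text{poly}(n,k)$, which is the claim.

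No serious obstacle is expected here: the heavy lifting lies in the rank analyses underlying Theorems~\ref{thm:prob-p1} and~\ref{thm:prob-p2}. All that remains is a routine check that a single choice $\epsilon = (nk\phi)^{-O(k)}$ simultaneously drives both bases below an arbitrary inverse polynomial in $n,k$, and that the resulting per-$p$ geometric decay dominates the trivial factor of $n$ obtained from summing over $p \in \{1,\ldots,n\}$. Both are satisfied by the stated $\epsilon$, completing the lemma.
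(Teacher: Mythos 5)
Your overall architecture matches the paper's: locate a critical subsequence via Claim~\ref{lem:crit}, invoke the two probability bounds, and sum a geometric series over $p$. The arithmetic at the end is fine. But there is a gap at $p=n$ that the paper treats as a separate case and you do not. Theorem~\ref{thm:prob-p1} is proved via Lemma~\ref{lem:crit-rank}, which in turn rests on Lemma~\ref{lem:rank-d-q}; both of those explicitly require the sequence to have \emph{at least one inactive player} (the inactive player $w$ anchors the rows $((v_i,\sigma^i)(w,\sigma^0))$ used to build the upper-triangular submatrix for the separator moves $t_i$). When the critical subsequence has all $n$ players active, that construction is unavailable, so the bound on $\overline\Delta(n)$ you plug into your sum is not actually established by the paper's development. (The bound on $\underline\Delta(n)$ is fine: the digraph argument of Lemma~\ref{lem:rank-p2} does not need an inactive player, and its condition (ii) becomes vacuous.)

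The paper's proof closes this hole by splitting on whether $p(S)=n$ for the full length-$2nk$ sequence. In that case it bypasses critical subsequences entirely and applies Corollary~\ref{cor:all-active} to the whole sequence, obtaining rank at least $\left(1-\tfrac1n\right)\left(d(S)-q_0(S)\right)\geq n-1$, and then uses the na\"ive union bound $k^n(nk)^{2nk}(\phi\epsilon)^{n-1}$ directly --- the point being that when all players are active the rank is $\Omega(n)$, which is large enough to absorb the $k^n$ factor from enumerating initial profiles, so none of the bucketing or cyclic-sum machinery is needed. Your proof would be repaired by adding this third case (or, more carefully, by handling critical subsequences $B$ with $p(B)=n$ via Corollary~\ref{cor:all-active} and the observation that $\ell(B)\geq n$), after which your summation over $p\leq n-1$ goes through as you wrote it.
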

\begin{proof}
We will perform a case analysis based on the values of $p_1(S)$ and $p_2(S)$, with cases for $p(S)=n$, $p(S)<n$ and $p_1(S)\geq p_2(S)$, and $p_2(S)\geq p_1(S)$.

If $p(S)=n$, we apply the rank bound of Corollary~\ref{cor:all-active} and take a union bound over all initial strategy profiles, and all possible sequences to get
\begin{align*}\label{poly-delta-fixed-n}
\Pr[\Delta'(n)\in (0,\epsilon)] &\leq k^n(nk)^{2nk} (\phi\epsilon)^{n-1}\\
&\leq \left(k^{3k}n^{2k}\phi\epsilon\right)^n\Big/\phi\epsilon
\end{align*}

This union bound over-counts the number of sequences with $p(S)=n$, but this isn't a problem.
Setting $\epsilon = \phi^{-1}\left(n^2k^3\right)^{-2k}$ gives, for $n$ sufficiently large, 
\(
\Pr[\Delta'(n)\in (0,\epsilon)]\leq\left(\frac 1{n^2k^3}\right)^{n}.
\)

In the converse case, we combine Theorems~\ref{thm:prob-p1} and~\ref{thm:prob-p2}, then take a union bound over all possible values of $p$ to bound the probability for any sequence of the given length.
As defined previously, $\Delta'(p) = \min\{\overline\Delta(p),\,\underline\Delta(p)\}$ and so, 
\begin{align}
	\Pr[\Delta'(p)\in (0,\epsilon)] \ \leq \ 
    \left(
    	(20\phi^2n^3k^3)^k\epsilon^{1/4}
    \right)^p +
    \left(
    	2(\phi\epsilon)^{1/4}n^{2k+1/4}k^{2k+5/4}
    \right)^p
    \ \leq\ 2\left(
    	(20\phi^2n^3k^3)^k\epsilon^{1/4}
    \right)^p
\end{align}\label{poly-delta-fixed-p}
Since any sequence of length $2nk$ must contain a critical subsequence, it suffices to set $\epsilon=\left(20\phi^2n^3k^3\right)^{-4k-4}$, and taking the union bound over all choices of $p$, we get
\begin{align*}
	\Pr[\Delta'\in(0,\epsilon)]&\leq 
    \sum_{p=1}^n \left(20\phi^2n^3k^4\right)^{-p}
    \ \leq\ \frac{1}{(20\phi^2n^3k^4)-1}
\end{align*}
Combining the two cases of $p=n$ and $p<n$ gives us our desired result.
\end{proof}

It remains to conclude that any execution of the BRA will find a PNE in polynomial time with high probability.
\begin{thm}\label{thm:whp-poly}
Given a smoothed instance of {\em $k$-\nashcoord} on a complete game graph, and with an arbitrary initial strategy profile, 
then any execution of BRA where improvements are chosen arbitrarily will converge to a PNE in at most 
% $(nk\phi)^{O(k+\eta)}$ steps, 
% with probability \mbox{$1-1/(nk\phi)^{O(\eta)}$}, for any fixed $\eta>0$.
$(nk\phi)^{O(k)}$ steps, with probability \mbox{$1-1/poly(n,k,\phi)$}.
% Given an instance of smoothed $k$-\nashcoord with complete graph, the BR algorithm converges to a Pure Nash equilibrium in time polynomial in $\phi$ and $(nk)^k$ with high probability.
% More precisely, for any constant $\eta>0$, the total runtime is $(nk\phi)^{O(k+\eta)}$ with probability at least $1-1/(nk\phi)^{O(\eta)}$.
\end{thm}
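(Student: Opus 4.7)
The plan is to deduce Theorem \ref{thm:whp-poly} directly from Lemma \ref{poly:sequence-of-length-2nk} by a simple bookkeeping argument: the lemma already supplies a uniform lower bound on the potential improvement over every window of $2nk$ consecutive BR moves, so concatenating these bounds gives an upper bound on the total length of any BRA trajectory.

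First, I would fix $\epsilon=(nk\phi)^{-O(k)}$ as given by Lemma \ref{poly:sequence-of-length-2nk} and define the event
\[
	\mathcal E := \bigl\{\text{every improving sequence of length }2nk\text{, from any initial profile, has total improvement }\geq\epsilon\bigr\}.
\]
The lemma asserts $\Pr[\mathcal E]\geq 1-1/poly(n,k,\phi)$. Crucially, the union bound inside the proof of Lemma \ref{poly:sequence-of-length-2nk} ranges over all possible $k^n$ initial configurations and all $(nk)^{2nk}$ candidate move sequences of length $2nk$, so the event $\mathcal E$ simultaneously covers every $2nk$-window that could possibly arise during BRA's execution, no matter how the algorithm resolves ties or chooses improving moves.

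Next, I would bound the range of the potential. Since $G$ has at most $\binom{n}{2}$ edges and each payoff matrix entry lies in $[-1,1]$, $\payoff(\sss)\in[-n^2,n^2]$ for every profile $\sss$, so the total potential gain across any run of BRA is at most $2n^2$. Now suppose, conditioning on $\mathcal E$, that BRA executed strictly more than $T:=2nk\cdot (2n^2/\epsilon)$ moves. Then the trajectory produced would be an improving sequence that can be partitioned into strictly more than $2n^2/\epsilon$ disjoint consecutive blocks of length exactly $2nk$. Since each such block starts from some profile reachable from $\sss^0$ and consists of $2nk$ improving moves, $\mathcal E$ forces each block to contribute at least $\epsilon$ to the potential, yielding total improvement strictly more than $2n^2$, which contradicts the range bound.

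Hence, conditional on $\mathcal E$, BRA terminates in at most $T = 4n^3k/\epsilon = (nk\phi)^{O(k)}$ steps, and upon termination the strategy profile is necessarily a PNE since no better-response move is available. Since $\Pr[\mathcal E]\geq 1-1/poly(n,k,\phi)$, this gives the claimed bound. The main technical obstacle, namely establishing that every $2nk$-window has improvement at least $\epsilon$, was already surmounted in Lemma \ref{poly:sequence-of-length-2nk} via the rank/union-bound trade-off for critical subsequences; the present proof just amortizes that per-window guarantee against the bounded total increase of the potential function.
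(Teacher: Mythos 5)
Your proof is correct and follows essentially the same route as the paper: both invoke Lemma~\ref{poly:sequence-of-length-2nk} to get the per-window improvement $\epsilon=(nk\phi)^{-O(k)}$, partition the BRA trajectory into blocks of length $2nk$, and amortize against the potential's range of $O(n^2)$. Your write-up is in fact slightly more careful than the paper's in spelling out why the event covers every window regardless of how BRA resolves its choices.
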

\begin{proof}
Lemma \ref{poly:sequence-of-length-2nk} directly implies the theorem. 
As outlined in the ``common framework'' (Section~\ref{sec:common}), we begin by partitioning the BRA sequence into blocks of length $2nk$.
Each such block must contain a critical subsequences, and therefore with probability $1-1/poly(n,k,\phi)$ every block in the partition increases the potential by at least $\epsilon=(nk\phi)^{-O(k)}$
Since the total improvement is at most $2n^2$, since there are only $\binom n2$ games, this implies that the BR algorithm can only make at most $2n^2(nk\phi)^{O(k)}$ moves. 
Since making one move takes time polynomial in $n$ and $k$, we are done.
\end{proof}

\subsection{Smoothed Quasi-polynomial Complexity for Arbitrary Graphs} \label{sec:quasipoly}
In this section we show the quasi-polynomial running time when the game graph $G$ is incomplete, and thus prove Theorem \ref{thm:qpoly}.
The analysis mostly uses the lemmas from Section~\ref{sec:poly-case-p2}, paired with the following definition and lemma, from~\cite{ER14}:

\begin{defn}\label{def:Delta-l}
Recall the random variable $\Delta$ from Definition~\ref{def:Delta}.
Call a sequence of length $\ell$ log-repeating if it contains at least $\ell/(5\log(nk))$ repeating moves (pairs). 
We denote as $\Delta(\ell)$ the minimum total potential-improvement after any log-repeating BR sequence of length exactly $\ell$.
\end{defn}

\begin{lem}[From \cite{ER14}, Lemma 3.4]\label{lem:min-repeating-pairs}
Let $\Delta_N$ and $\Delta(\ell)$ be as above. 
Then $\Delta_{5nk}:= \min_{1\leq \ell\leq 5nk}\Delta(\ell)$
%
%OR
%thm:in-text-qpoly, 
%Any BR sequence $S$ of length $5nk$ has a contiguous subsequence $S'$ of length $l\le 5nk$ such that at least $\ell/(5\log(nk))$ many distinct moves appear at least twice in $S'$.
\end{lem}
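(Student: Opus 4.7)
The plan is to reduce to the analogous statement for local-max-cut proved in~\cite{ER14}, replacing the role of ``vertices'' (of which there are $n$) with ``player-strategy pairs'' (of which there are $nk$). Recall $\Delta_N$ is the minimum total potential-improvement over any BR sequence of length exactly $N$, and $\Delta(\ell)$ is the minimum improvement over any length-$\ell$ sequence that is log-repeating, meaning it contains at least $\ell/(5\log(nk))$ recurring $(u,i)$ pairs. The ``$:=$'' in the statement should be read as ``$\geq$''; i.e., I need to show that any BR sequence of length $5nk$ makes enough improvement to dominate the minimum achievable over all log-repeating subsequences it contains.

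The first step is to argue that every BR sequence $S$ of length $5nk$ contains some contiguous subsequence which is log-repeating. To this end, I would partition $S$ into contiguous blocks $B_1,B_2,\dotsc$ whose lengths form a geometric progression (say $|B_j|=2^{j-1}$ up to the last block, which absorbs any remainder). For each block, call a move a \emph{first-occurrence} if its $(u,i)$ pair has not appeared earlier in $S$, and otherwise a \emph{recurrence}. The crucial global bound is that the total number of first-occurrences across all of $S$ is at most $nk$, since that is the number of distinct player-strategy pairs available.

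Next I would argue by contradiction: suppose no block $B_j$ is log-repeating. Then each $B_j$ has strictly fewer than $|B_j|/(5\log(nk))$ recurrences, so it has more than $|B_j|(1-1/(5\log(nk)))$ first-occurrences. Summing over the $O(\log(nk))$ blocks whose total length is $5nk$, the aggregate number of first-occurrences exceeds
\[
    5nk\bigl(1 - 1/(5\log(nk))\bigr) \;\geq\; nk+1,
\]
for $nk$ sufficiently large, contradicting the global bound of $nk$ distinct pairs. Hence some block $B_j$ of some length $\ell \in [1,5nk]$ is log-repeating. This is essentially the counting argument of Lemma~3.4 in~\cite{ER14} transplanted from $n$ vertices to $nk$ player-strategy pairs.

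Finally, since $S$ is a BR sequence every move strictly increases the potential, so the total improvement over $S$ is at least the improvement over the sub-block $B_j$ alone. By definition the latter is at least $\Delta(\ell) \geq \min_{1\leq \ell'\leq 5nk}\Delta(\ell')$. Taking the minimum over all length-$5nk$ BR sequences yields $\Delta_{5nk} \geq \min_\ell \Delta(\ell)$, as desired. The main obstacle is tuning the block-length doubling scheme so that the pigeonhole counting yields the precise constant $1/(5\log(nk))$ claimed in the definition of log-repeating; this is the place where the constants from~\cite{ER14} must be carefully tracked through the substitution $n\to nk$, but otherwise the argument is purely combinatorial and does not interact with the smoothed-analysis machinery.
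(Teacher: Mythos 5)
Your reading of the statement is right: the ``$:=$'' carries the content $\Delta_{5nk}\geq\min_{1\leq\ell\leq 5nk}\Delta(\ell)$, and the intended route --- transplant the purely combinatorial Lemma~3.4 of \cite{ER14} with the alphabet of $n$ vertices replaced by the $nk$ player-strategy pairs --- is exactly what the paper does (it cites the lemma and does not reprove it; the downstream use only needs that every length-$5nk$ sequence contains a contiguous log-repeating block). However, your proposed proof of that combinatorial core has a genuine gap, in two places. First, ``log-repeating'' must count the number of \emph{distinct} pairs that occur at least twice \emph{within the block} (this is what feeds the bound $p_2\geq\ell/(5k\log(nk))$ afterwards), whereas your complement count ``fewer than $|B_j|/(5\log(nk))$ recurrences implies more than $|B_j|(1-1/(5\log(nk)))$ first-occurrences'' implicitly counts repeated \emph{moves} and, worse, measures first occurrences \emph{globally in $S$}. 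A move can fail to be a within-block repeat while also failing to be a global first occurrence (its pair appeared in an earlier block), and a single pair repeated many times contributes only $1$ to the repeating-pair count of a block while absorbing a large fraction of its length. So ``$B_j$ is not log-repeating'' does not yield a lower bound on the number of global first occurrences in $B_j$, and the pigeonhole against the budget of $nk$ distinct pairs does not close.

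Second, no \emph{fixed} partition into geometrically growing blocks can work. Take $S$ to consist of all $nk$ pairs once each, followed by the two pairs $(u,1),(u,2)$ alternating for the remaining $4nk$ moves. Every sufficiently long block of your fixed partition that lies in the alternating tail has exactly two within-block repeating pairs, hence is not log-repeating once $2^{j-1}>10\log(nk)$, and the blocks in the distinct prefix have no repeats at all; yet the total number of global first occurrences is exactly $nk$, so no contradiction arises. The lemma is nonetheless true for this $S$ --- a length-$4$ window inside the tail has $2\geq 4/(5\log(nk))$ repeating pairs --- which shows the argument must search over \emph{all} starting positions and scales (as \cite{ER14} does, essentially by examining nested windows of doubling length around each position and using that a window of length at most $5\log(nk)$ with any repeat is automatically log-repeating), rather than test one fixed block decomposition. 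To repair your write-up, either reproduce that multi-scale argument with $n\to nk$, or simply invoke \cite{ER14} Lemma~3.4 verbatim over the alphabet of $nk$ pairs, which is all the paper needs.
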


The proof of the above lemma proves that any sequence on $5nk$ pairs must contain some contiguous sub-sequence which is log-repeating. 
Thus, for the remainder of the analysis, it suffices to bound $\Delta(\ell)$.
%\note{Comment on how they just show that any sufficiently long sequence has a good subsequence? I'm not sure we can just black-box their result}
Since a sequence captured by $\Delta(\ell)$ must have at least $\ell/(5\log(nk))$ repeated terms, it must have $p_2\geq \ell/(5k\log(nk))$. Therefore, as we have shown in the proof of Theorem~\ref{thm:prob-p2}, we have $\Pr[\Delta(\ell)\in (0,\epsilon)]\leq k^{\ell}(nk)^\ell (\ell\phi\epsilon)^{\ell/10k\log(nk)}$. 
It suffices, then to simply take the union bound over all possible values of $\ell$.

\begin{thm}\label{thm:in-text-qpoly}
Given a smoothed instance of {\em $k$-\nashcoord} with an arbitrary initial strategy profile, 
then any execution of a BR algorithm where improvements are chosen arbitrarily will converge to a PNE in at most 
$\phi\cdot (nk)^{O(k\log(nk))}$
steps, with probability $1-1/poly(n,k)$.
\end{thm}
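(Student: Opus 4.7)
The plan is to combine the cyclic-sum rank bound of Lemma~\ref{lem:rank-p2} with the log-repeating sequence decomposition of Lemma~\ref{lem:min-repeating-pairs}, then union-bound and apply Lemma~\ref{lem:probability} exactly as in the proof of Theorem~\ref{thm:prob-p2}. The starting observation is that $\Delta_{5nk} = \min_{1\le \ell\le 5nk}\Delta(\ell)$, so it is enough to prove that, with high probability over the payoffs, every log-repeating BR sequence of every length $\ell\le 5nk$ increases the potential by at least some $\epsilon = (nk\phi)^{-O(k\log(nk))}$. Given that, the total number of BRA steps is at most $2n^2\cdot 5nk/\epsilon$ because the potential lies in $[-n^2,n^2]$ and one needs only to partition the whole execution into consecutive length-$5nk$ blocks, each of which must contain a log-repeating subsequence and hence gain $\ge \epsilon$ in potential.

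First I would fix an arbitrary length $\ell\le 5nk$ and an arbitrary candidate log-repeating BR sequence $S$ of length $\ell$. By definition of log-repeating, $S$ contains at least $\ell/(5\log(nk))$ repeating (player,strategy) pairs. Since a single player contributes at most $k$ distinct (player,strategy) pairs, this forces $p_2(S)\ge \ell/(5k\log(nk))$. Applying Lemma~\ref{lem:rank-p2} to $S$ produces $p_2(S)/2\ge \ell/(10k\log(nk))$ linearly independent cyclic-sum vectors in the span of the transformation set $\mathcal L$, whose inner products with $A$ are completely independent of the strategies of inactive players. Mimicking the first displayed inequality in the proof of Theorem~\ref{thm:prob-p2}, if every one of the $\ell$ step-improvements $\ip{L_t,A}$ lies in $(0,\epsilon)$, then every one of the cyclic sums $\ip{V,A}$ lies in $(0,\ell\epsilon)$, and Lemma~\ref{lem:probability} gives probability at most $(\ell\phi\epsilon)^{\ell/(10k\log(nk))}$.

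Next I would take the union bound. Because the cyclic sums do not see the initial strategies of inactive players, it is enough to enumerate the initial strategies of the $p(S)\le \ell$ active players (a factor of at most $k^\ell$) and the choice of the $\ell$ moves (a factor of $(nk)^\ell$); the $k^n$ factor that would have appeared for all initial profiles is avoided. Therefore
\begin{equation*}
\Pr\big[\Delta(\ell)\in(0,\epsilon)\big] \;\le\; k^\ell\,(nk)^\ell\,(\ell\phi\epsilon)^{\ell/(10k\log(nk))} \;\le\; \left((nk)^{O(k\log(nk))}\,\phi\epsilon\right)^{\ell/(10k\log(nk))}.
\end{equation*}
Choosing $\epsilon = 1/\big(\phi\cdot (nk)^{c\,k\log(nk)}\big)$ for a sufficiently large constant $c$ makes each factor of the right-hand side at most, say, $(nk)^{-\ell}$, so that the final union bound over $\ell\in\{1,\dotsc,5nk\}$ is $1/\mathrm{poly}(n,k)$.

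Combining these ingredients, with probability $1-1/\mathrm{poly}(n,k)$ every log-repeating BR sequence of length $\le 5nk$ gains at least $\epsilon$ in potential, and then Lemma~\ref{lem:min-repeating-pairs} upgrades this to: every length-$5nk$ window of any BR execution gains at least $\epsilon$. The total number of moves is therefore at most $2n^2\cdot 5nk/\epsilon = \phi\cdot (nk)^{O(k\log(nk))}$, which is the stated bound. The main obstacle in the plan is the bookkeeping in the union bound: without the cyclic-sum step we would be forced to pay $k^n$ for initial configurations, which would blow the rank-versus-union-bound balance and destroy the quasi-polynomial bound for sub-linear $\ell$; the cyclic-sum vectors from Lemma~\ref{lem:rank-p2} are exactly what remove the inactive-player dependence and let the argument close.
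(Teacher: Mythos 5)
Your proposal is correct and follows essentially the same route as the paper: the decomposition into log-repeating subsequences via Lemma~\ref{lem:min-repeating-pairs}, the bound $p_2\geq \ell/(5k\log(nk))$, the cyclic-sum rank bound of Lemma~\ref{lem:rank-p2} combined with Lemma~\ref{lem:probability} as in Theorem~\ref{thm:prob-p2}, the $k^\ell(nk)^\ell$ union bound, and the same choice of $\epsilon$ and final step count. No substantive differences from the paper's argument.
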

\begin{proof}As discussed above, 
\begin{align}\label{quasi-delta-fixed}
Pr[\Delta(\ell)\in (0,\epsilon)]&\leq k^{\ell}(nk)^{\ell}(\ell\phi\epsilon)^{\ell/10k\log(nk)}\nonumber \\
&\leq \left(k^2n(5nk\phi\epsilon)^{1/(10k\log(nk))}\right)^{\ell}
 & (\ell\leq 5nk)\nonumber\\
&\leq \left(2k^3n^2 %\phi^{1/(10k\log(nk))}
(\phi\epsilon)^{1/(10k\log(nk))}\right)^\ell. &(5^{1/10}\leq 2)
\end{align}
Setting $\epsilon = \phi^{-1}(2n^2k^3)^{-2\cdot 10k\log(nk)}$, 
this gives
\[
	Pr[\Delta(\ell)\in (0,\epsilon)]
    \leq\left(\frac{1}{2n^2k^3}\right)^\ell
\]

Let $\Delta_{5nk}$ be the improvement in potential in any length $5nk$ BR sequence. Then using Lemma \ref{lem:min-repeating-pairs}, and taking the union bound over all choices of $\ell$, we have,

\[\Pr[\Delta_{5nk}\in (0,\epsilon)]\leq 
\sum_{\ell=1}^{5nk}\Pr[\Delta(\ell)\in(0,\epsilon)]\le 
\sum_{\ell=1}^{5nk} 
(2n^2k^3)^{-\ell}\leq \frac{(2n^2k^3)^{-1}}{1-(2n^2k^3)^{-1}}=\frac{1}{2n^2k^3 -1}
\]

% Here, the second inequality follows by approximating $\ell\leq 5nk$ in the base, and number of active players $p\leq n$. We fix $\epsilon=(2n^2k^3)^{-(10klog(nk)+\eta)}\phi^{-1}$ to have the above probability value vanish with increasing $\ell$.

% We take a union bound over all possible lengths $\ell$ to establish that there is some sequence of length at most $5nk$ that has improvement at least $\epsilon$ with high probability. 

% By denoting the improvement of the sequence of length $\ell$ as $\Delta(S)$, we have:
% \begin{equation*}
% Pr[\Delta(S)\in (0,\epsilon)] \leq \sum_{\ell=1}^{5nk} (2k^3n^2\phi^{1/(10klog(nk))}\epsilon^{1/(10klog(nk))})^\ell \leq \frac{(2n^2k^3)^{-\eta}}{1-(2n^2k^3)^{-\eta}}=\frac{1}{(2n^2k^3)^\eta -1}.
% \end{equation*}

% By setting $\eta$ to an arbitrarily large constant, this probablity can be reduced to any given $\alpha = 1/(nk)^{O(1)}$.
% Note that, the above probability is over the draw of the payoff vector $A$ for the game. 
 Hence, with probability $1-1/poly(n,k)$ (over the draw of payoff vector $A$), {\em all} BR sequences of length $5nk$ will have total improvement at least $\epsilon$.
In that case, any execution of BR algorithm makes an improvement of at least $\epsilon$ every $5nk$ moves.
Since the total improvement is at most $2n^2$, we conclude that the total number of steps is at most $5nk\cdot 2n^2/\epsilon = 10n^3k(2n^2k^3)^{20k\log(nk)}\cdot \phi=
\phi\cdot (nk)^{O(k\log(nk))}$, and this occurs with probability $1-1/poly(n,k)$.
% by $O((nk)^\eta)$. To keep this quasi-polynomial, we require $\eta=O(k\log(nk))$. 
% This bounds $\alpha$ to at most $1/((2n^2k^3)^{O(k\log(nk))}-1)=1/O((nk)^{k\log(nk)})$. 
\end{proof}

This completes our analysis of the smoothed performance of BRA for finding pure Nash equilibria in network coordination games. 
In the next section, we show that this result indeed holds in expectation, and then go on to show a notion of smoothness-preserving reduction which allows us to prove alternative, conditional, algorithms for this problem.

\section{Expected Smoothed Time Complexity}\label{sec:exp}
The analysis in the previous section establishes smoothed complexity of \netcos with respect to the with high probability notion. Another aspect of smoothed analysis is to analyze the expected time of completion of the algorithm. In this section, we provide a theorem to obtain expected time results from the with high probability bounds. The results are presented in a general form to allow application to any problem in PLS that has a bounded total improvement in potential value.

\begin{thm}\label{thm:expected-time}
Given a PLS problem with input size $N$, potential function range $[-N^{r_1}, N^{r_2}]$, and a local-search algorithm $\CA$ to solve it, let $d$ be the number of distinct choices the algorithm has in each step and let $\Lambda$ be the total size of the search space of the algorithm. For an instance $I$ drawn at random with maximum density $\phi$, 
suppose the probability that any length-$N^\beta$ sequence of improving moves of $\CA$ results in total improvement in the potential value at most~$\epsilon$, is at most $\sum_{q=1}^{N^\beta}((\phi N)^{f(N)}(\phi\epsilon)^{1/g(N)})^q$. 
% Here $f(N)$ and $g(N)$ are any functions in $N$. 
Then the expected running-time of the algorithm is $O(N^{\beta+r}\cdot g(N)\cdot (\phi N)^{f(N)g(N)}\cdot \ln\Lambda)$.
Here, $f(N)$ and $g(N)$ are functions of $N$.
\end{thm}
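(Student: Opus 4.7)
The plan is to convert the hypothesized per-block lower-tail bound on the potential improvement into an expected-time bound, exploiting the deterministic ceiling that any strictly-improving local-search trajectory visits each state at most once.

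Let $\Delta$ denote the random variable equal to the minimum total potential improvement over any contiguous length-$N^\beta$ improving subsequence of moves of $\CA$, and let $T$ be the random total number of steps until $\CA$ terminates at a local optimum. Two deterministic bounds hold with probability one: first, $T\le \Lambda$, because the potential strictly increases along every move so no state is revisited; and second, $T\le N^\beta\cdot(N^{r_1}+N^{r_2})/\Delta \le 2N^{\beta+r}/\Delta$ with $r:=\max(r_1,r_2)$, because after $T/N^\beta$ consecutive length-$N^\beta$ blocks the potential has risen by at least $(T/N^\beta)\cdot\Delta$, which cannot exceed the total range $N^{r_1}+N^{r_2}$. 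Writing $C:=2N^{\beta+r}$, the second bound converts to the tail inequality $\Pr[T\ge t]\le \Pr[\Delta \le C/t]$ for every $t\ge 1$.

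Next I would compute $\mathbb{E}[T]=\sum_{t=1}^{\Lambda}\Pr[T\ge t]$ and split the sum at a threshold $T^\star$, chosen so that for $t>T^\star$ the ratio $c(t):=(\phi N)^{f(N)}(\phi C/t)^{1/g(N)}$ is below $1/2$; the geometric series $\sum_{q=1}^{N^\beta}c(t)^q$ assumed by the hypothesis is then bounded by $2\,c(t)$. Setting $c(T^\star)=\tfrac12$ gives $T^\star=\Theta(N^{\beta+r}(\phi N)^{f(N)g(N)})$. For $t\le T^\star$ I use only the trivial bound $\Pr\le 1$, contributing at most $T^\star$, which already realizes the main polynomial factor. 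For $t>T^\star$ the tail contributes
\begin{equation*}
\sum_{t=T^\star+1}^{\Lambda}2\,c(t)\;=\;2(\phi N)^{f(N)}(\phi C)^{1/g(N)}\!\sum_{t=T^\star+1}^{\Lambda}t^{-1/g(N)}.
\end{equation*}
I would estimate the remaining sum by the integral $\int_{T^\star}^{\Lambda}t^{-1/g(N)}\,dt$; its antiderivative brings out the $g(N)$ factor, while the logarithmic dependence on $\Lambda/T^\star$ yields the $\ln\Lambda$ term, producing the stated bound $O(N^{\beta+r}\cdot g(N)\cdot (\phi N)^{f(N)g(N)}\cdot \ln\Lambda)$.

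The main obstacle I expect is the careful bookkeeping around the threshold $T^\star$: one has to pick it so that both the head contribution $T^\star$ and the bounded tail collapse into the claimed form, and in particular verify that the geometric-series simplification $\sum_q c^q\le 2c$ remains valid throughout the integration range $t>T^\star$. The deterministic cap $T\le\Lambda$, which rests on the strict monotonicity of the potential function along improving moves in any PLS problem, is the indispensable ingredient that keeps the tail of $\mathbb{E}[T]$ finite and prevents the rare event $\{\Delta\approx 0\}$ from dominating the expectation.
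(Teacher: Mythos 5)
Your setup is identical to the paper's: the deterministic cap $T\le\Lambda$, the tail inequality $\Pr[T\ge t]\le\Pr[\Delta\le 2N^{\beta+r}/t]$, the rescaling at $T^\star=\Theta(N^{\beta+r}(\phi N)^{f(N)g(N)})$ (the paper's $\gamma$), and the head/tail split of $\mathbb E[T]=\sum_{t}\Pr[T\ge t]$. The gap is in the last step. Once you replace the hypothesized geometric series by twice its first term, your tail bound decays only like $\Pr[T\ge t]\lesssim (T^\star/t)^{1/g(N)}$, and for $g(N)>1$
\[
\sum_{t=T^\star}^{\Lambda}t^{-1/g(N)}\;=\;\Theta\!\left(\tfrac{g(N)}{g(N)-1}\,\Lambda^{1-1/g(N)}\right),
\]
because the antiderivative of $t^{-1/g}$ is $\tfrac{g}{g-1}\,t^{1-1/g}$; a logarithm appears only when the exponent is exactly $-1$, i.e.\ $g(N)=1$. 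Multiplying by $2(\phi N)^{f(N)}(\phi C)^{1/g(N)}=(T^\star)^{1/g(N)}$, your tail contributes $\Theta\bigl(T^\star(\Lambda/T^\star)^{1-1/g(N)}\bigr)$, not $O(T^\star g(N)\ln\Lambda)$. In the intended applications $\Lambda$ is exponential (e.g.\ $k^n$) while $T^\star$ and $g(N)$ are at most quasi-polynomial, so $(\Lambda/T^\star)^{1-1/g(N)}$ is superpolynomial and the conclusion does not follow. The sentence ``the logarithmic dependence on $\Lambda/T^\star$ yields the $\ln\Lambda$ term'' is exactly where the argument breaks.

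The paper takes a different route at precisely this point: it keeps the entire sum over $q$, substitutes $t=\gamma i$, and asserts $\sum_{q\ge1}(1/i)^{q/g}\le g\sum_{q'\ge1}(1/i)^{q'}\le g/(i-1)$, i.e.\ a harmonic $1/i$ tail, which is what produces $\ln\Lambda$ upon summation. The moral is that $\ln\Lambda$ can only come from a $\sim 1/t$ decay of $\Pr[T\ge t]$, and the $q=1$ term of the hypothesis alone only supplies $t^{-1/g}$; truncating the series to its first term discards the very terms (large $q$) that are supposed to supply the faster decay. Be aware, though, that the paper's regrouping inequality is itself delicate: for $0<a<1$ one has $\sum_{q\ge1}a^{q}\ge g\sum_{q'\ge1}a^{gq'}$, so the displayed inequality points the wrong way as written (e.g.\ $g=2$, $i=4$ gives $1\not\le 2/3$). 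So to repair your proof you cannot simply copy that step either; you would need a genuinely stronger per-$t$ tail bound (or a strengthened hypothesis) rather than a looser one.
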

\begin{proof}
The proof is from \cite{ER14}. As we have stated it in a more general form, the analysis is included for completeness.

The maximum improvement possible before $\CA$ terminates is the maximum change in the potential function value, given by $N^{r_2}+N^{r_1}$. 
For any integer $t\geq 1$, if the algorithm requires more than $t$ steps to terminate, then there must exist some subsequence of length $N^{\beta}$ that results in an improvement in the potential value of less than $N^{\beta}(N^{r_2}+N^{r_1})/t\leq 2N^{\beta+{\max\{r_2,r_1\}}}/t$. We denote $r:=\max\{r_1,r_2\}$.

We define a random variable $T$ as the number of steps $\CA$ requires to terminate. Using the notation $\Delta(N^\beta)$ to denote the minimum total improvement in a length-$N^\beta$ sequence of the algorithm $\CA$, this gives the probability of $\CA$ running for more than $t$ steps as:
\begin{equation*}
\Pr[T\geq t]\leq Pr[\Delta(N^\beta)\in (0,N^{r+\beta}/t)] \leq \sum\limits_{q=1}^{N^\beta} \left((\phi N)^{f(N)}\left(\phi\cdot \frac{N^{\beta+r}}{t}\right)^{1/g(N)}\right)^q.
\end{equation*}

We define $t=\gamma i$, for $\gamma=(\phi N)^{f(N)g(N)}(\phi N^{r+\beta})=\phi^{f(N)g(N)+1}N^{f(N)g(N)+\beta+r}$, and compute the probability of $T\geq \gamma i$ for any integer $i$:
\begin{equation*}
\Pr[T\geq \gamma i]\leq \sum\limits_{q=1}^{N^\beta} \left((\phi N)^{f(N)}\left(\phi\cdot\frac{N^{r+\beta}}{\gamma i}\right)^{1/g(N)}\right)^q \leq \sum\limits_{q=1}^\infty \left(\frac{1}{i}\right)^{q/g(N)} 
\leq g(N)\sum_{q'=1}^\infty \left(\frac 1i\right)^{q'}\leq 
\frac{g(N)}{i-1}.
\end{equation*}

We now sum over all values of $t$, by using that $\Pr[T\geq t]\leq \Pr[T\geq t\cdot\lceil t/\gamma \rceil]$, and compute the expected time steps as:
\begin{align*}
 \mathbb E[T]&=\sum_{t=1}^\Lambda \Pr[T\geq t] \leq 
%  \sum_{i=1}^{\Lambda}\sum_{t=1}^\gamma Pr[T\geq t\cdot i] \leq \sum_{i=1}^{\Lambda}\sum_{t=1}^\gamma Pr[T\geq \gamma i] \leq 
 \sum_{i=1}^{\Lambda/\gamma} \sum_{t=1}^\gamma
\Pr[T\geq (i+1)\gamma] 
\leq \sum_{i=2}^{\Lambda/\gamma} \frac{g(N)\gamma}{i-1} = O(g(N)\cdot\gamma\cdot\ln\Lambda)
% \\
% &\leq \sum_{t=1}^\gamma Pr[T\geq \gamma] + g(N)\gamma ln(\Lambda) \leq \gamma +g(N)\gamma ln(\Lambda) =O(\gamma g(N)ln\Lambda).
\end{align*}

Thus, replacing the value for $\gamma$, the expected runtime is at most $O(N^{\beta+r}g(N)(\phi N)^{f(N)g(N)}\ln\Lambda)$.%O(g(N)(\phi N)^{f(N)g(N)}ln\Lambda)$.
\end{proof}

\begin{cor}\label{corr:expected-quasi}
The smoothed expected time for BR to terminate for all \netcos is polynomial in $(n^{(k\log(nk))},\phi)$.
\end{cor}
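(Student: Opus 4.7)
The plan is to apply Theorem~\ref{thm:expected-time} directly, feeding it the high-probability analysis of Section~\ref{sec:quasipoly} as its probability input. I would set $N = nk$ as the input-size parameter. The $k$-\nashcoord potential $\payoff(\sss)$ lies in $[-n^2,n^2]\subseteq[-N^2,N^2]$, giving $r_1=r_2=2$. At each step the algorithm has $d = nk = N$ choices of (player, strategy) pair, and the strategy-profile search space has size $\Lambda = k^n$, so $\ln\Lambda = O(n\log k) = O(N)$. Since Lemma~\ref{lem:min-repeating-pairs} uses block length $5nk = O(N)$, I would set $\beta = 1$, absorbing the constant.

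The key step is to match the bound obtained from Lemma~\ref{lem:min-repeating-pairs} together with the estimate derived in the proof of Theorem~\ref{thm:in-text-qpoly},
\[
\Pr[\Delta_{5nk}\in (0,\epsilon)] \;\leq\; \sum_{\ell=1}^{5nk}\Big(2k^3 n^2\,(\phi\epsilon)^{1/(10k\log(nk))}\Big)^{\ell},
\]
against the template $\sum_{q=1}^{N^\beta}\!\big((\phi N)^{f(N)}(\phi\epsilon)^{1/g(N)}\big)^{q}$ required by Theorem~\ref{thm:expected-time}. Taking $g(N) = 10k\log(nk) = O(k\log N)$ matches the $(\phi\epsilon)^{1/g(N)}$ factor exactly; the prefactor $2k^3n^2$ is then absorbed into $(\phi N)^{f(N)}$ by taking $f(N)=3$ (using $\phi \geq 1$ without loss of generality, else adjusting small constants). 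Hence $f(N) = O(1)$ and $g(N) = O(k\log N)$.

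Substituting these parameters into the expected-time formula $O\big(N^{\beta+r}\cdot g(N)\cdot (\phi N)^{f(N)g(N)}\cdot \ln\Lambda\big)$ from Theorem~\ref{thm:expected-time} then yields an expected running time bounded by $\phi^{O(k\log(nk))}\cdot (nk)^{O(k\log(nk))}$, which is polynomial in $(n^{k\log(nk)},\phi)$ in the sense asserted by the corollary. The main ``obstacle'' here is purely bookkeeping: correctly matching the prefactor $2k^3n^2$ with the $(\phi N)^{f(N)}$ slot of the meta-theorem, and then tracking how the $f(N)g(N)$ product propagates into the final bound. No new probabilistic arguments or rank computations are required beyond those already developed in Section~\ref{sec:quasipoly} and in the proof of Theorem~\ref{thm:expected-time} itself.
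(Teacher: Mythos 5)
Your proposal is correct and follows essentially the same route as the paper: both apply Theorem~\ref{thm:expected-time} with $N=nk$, $\Lambda\leq k^n$, $N^{r+\beta}\leq N^3$, $f(N)=O(1)$, and $g(N)=O(k\log(nk))$, feeding in the probability bound~\eqref{quasi-delta-fixed} from Theorem~\ref{thm:in-text-qpoly}. Your version merely spells out the bookkeeping (absorbing the prefactor $2k^3n^2$ into $(\phi N)^{f(N)}$) that the paper leaves implicit.
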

\begin{proof}
From ~\eqref{quasi-delta-fixed} in Theorem \ref{thm:in-text-qpoly}, we know that the probability that the minimum improvement in a fixed BR sequence of length $5nk$ is at most $\epsilon$, is at most $\sum_{\ell=1}^{5nk}\left(2n^2k^3(\phi \epsilon)^{1/(10k\log(nk))}\right)^l$.

% Thus, the minimum improvement in any sequence of length $5nk$ is: $\sum_{l=1}^{5nk}\left(2n^2k^3(\phi \epsilon)^{1/(10klog(nk))}\right)^l$

Applying Theorem \ref{thm:expected-time}, for $N=nk$ and $\Lambda \leq k^n$, we get $f(N)=O(1)$, $N^{r+\beta}\leq N^3$, and $g(N)=O(k\log(nk))$, and the result follows.
\end{proof}

\begin{cor}\label{corr:expected-poly}
For complete graphs, the smoothed expected time for BR to terminate for \netcos is 
polynomial in $(n^k,\phi)$.
\end{cor}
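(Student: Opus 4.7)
The plan is to mirror the structure of Corollary~\ref{corr:expected-quasi}: combine the with-high-probability bound obtained in Lemma~\ref{poly:sequence-of-length-2nk} with the meta-theorem on expected running time (Theorem~\ref{thm:expected-time}), choosing the parameters so that the exponents reflect the polynomial, rather than quasi-polynomial, regime.

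First, I would re-extract the per-$\epsilon$ probability bound underlying Lemma~\ref{poly:sequence-of-length-2nk} (before any specific choice of $\epsilon$ is substituted). For a length-$2nk$ sequence $S$ with $p(S)=p$, combining Theorems~\ref{thm:prob-p1} and~\ref{thm:prob-p2} gives
\[
\Pr[\Delta'(p)\in (0,\epsilon)] \;\le\; 2\bigl((20\,\phi^2 n^3 k^3)^k\,\epsilon^{1/4}\bigr)^p,
\]
and the $p=n$ case (where Corollary~\ref{cor:all-active} applies) yields a strictly faster-decaying bound of the form $(k^{3k}n^{2k}\phi\epsilon)^n/(\phi\epsilon)$. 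Summing these over all feasible values of $p$ from $1$ to $n$ and setting $N=nk$, I would rewrite the total as
\[
\Pr[\Delta'\in(0,\epsilon)] \;\le\; \sum_{q=1}^{N}\bigl((\phi N)^{f(N)}\,(\phi\epsilon)^{1/g(N)}\bigr)^{q},
\]
with $f(N)=O(k)$ and $g(N)=4$, which is exactly the hypothesized form in Theorem~\ref{thm:expected-time}. The $p=n$ contribution is absorbed since its effective $g$-value is smaller than $4$ and so decays faster.

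Next, I would instantiate Theorem~\ref{thm:expected-time} with the remaining parameters: the sequence length is $2nk = O(N)$, so $\beta = 1$; the potential function lies in $[-n^2,n^2] \subset [-N^2,N^2]$, so $r_1,r_2 \le 2$ and $r=2$; and the state space has size $\Lambda \le k^n$, so $\ln\Lambda = O(n\log k)$. Substituting these into the conclusion of Theorem~\ref{thm:expected-time} gives expected running time
\[
O\!\left(N^{\beta+r}\cdot g(N)\cdot (\phi N)^{f(N)\,g(N)}\cdot \ln\Lambda\right)
\;=\; O\!\left((nk)^{3}\cdot (\phi\, nk)^{O(k)}\cdot n\log k\right)
\;=\; (nk\phi)^{O(k)},
\]
which is polynomial in $(n^{k},\phi)$ as claimed.

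The main obstacle is the bookkeeping in the first step: unlike the quasi-polynomial case where a single uniform bound suffices, here the analysis of Lemma~\ref{poly:sequence-of-length-2nk} splits into the ``all players active'' case and the ``critical subsequence'' case, and these two cases yield bounds with different effective $g(N)$. I expect the cleanest way around this is to observe that both contributions are upper bounded by a common expression of the form $((\phi N)^{O(k)}(\phi\epsilon)^{1/4})^{q}$ (since $\epsilon \le 1$ makes the larger $g$ the binding constraint), after which the application of Theorem~\ref{thm:expected-time} is mechanical. A secondary, very minor obstacle is verifying that the tacit treatment of $\phi$ and constants inside the $(\phi N)^{f(N)}$ factor is consistent with the hypothesis of Theorem~\ref{thm:expected-time}; this is routine since $\phi^{O(k)}(nk)^{O(k)}$ can be written as $(\phi\cdot nk)^{O(k)}$ up to adjusting the hidden constant in $f(N)$.
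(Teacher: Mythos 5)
Your proposal is correct and follows essentially the same route as the paper's proof: both combine the bound for the all-players-active case with the combined bound from Theorems~\ref{thm:prob-p1} and~\ref{thm:prob-p2}, rewrite the result in the form $\sum_q\bigl((\phi N)^{f(N)}(\phi\epsilon)^{1/g(N)}\bigr)^q$ with $N=nk$, $f(N)=O(k)$ and $g(N)=O(1)$, and then invoke Theorem~\ref{thm:expected-time} with $N^{r+\beta}\le N^3$ and $\Lambda\le k^n$.
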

\begin{proof}
From \eqref{poly-delta-fixed-n} in Lemma \ref{poly:sequence-of-length-2nk}, for the case of complete graphs when a BR sequence has all active players, we have:
\begin{equation*}
\Pr[\Delta(p)\in (0,\epsilon)]\leq \left(k^{3k}n^{2k}\phi\epsilon\right)^n\Big/\phi\epsilon \leq \sum_{i=1}^n\left(k^{3k}n^{2k}\phi^{1/2}\epsilon^{1/2}\right)^i\Big/\phi\epsilon. 
\end{equation*}

Similarly, from \eqref{poly-delta-fixed-p} in Lemma \ref{poly:sequence-of-length-2nk}, the probability that the minimum improvement in a BR sequence of length $2nk$ is at most $\epsilon$, is given by:
\begin{equation*}
\Pr[\Delta(p)\in (0,\epsilon)] \leq \sum_{p=1}^n 2\left((100\phi^2n^3k^4)^k\epsilon^{1/4}\right)^p
\end{equation*}

Combining these sums, we get the probability that a BR sequence of length $2nk$ has improvement at most $\epsilon$ is:
\begin{align*}
\Pr[\Delta(p)\in (0,\epsilon)] &\leq \max\left\{\sum_{p=1}^n 2\left((100\phi^2n^3k^5)^k\epsilon^{1/4}\right)^p, \sum_{i=1}^n\left(k^{3k}n^{2k}\phi^{1/2}\epsilon^{1/2}\right)^i\right\}\\ &\leq \sum_{j=1}^n ((\phi^c_1(nk)^{c_2k}(\phi\epsilon)^{1/c_3})^j.
\end{align*}

Applying Theorem \ref{thm:expected-time}, for $N=nk$, $N^{r+\beta}\leq N^3$, and $\Lambda \leq k^n$, we get $f(N)=O(1)$ and $g(N)=O(1)$, and the result follows.
\end{proof}

\section{Smoothness-Preserving Reduction to $1$- and $2$-\maxcut}\label{sec:red}
Recall Definition~\ref{def:SmoothRed} in Section~\ref{sec:PrelOR}, where we have defined a notion of Strong and Weak smoothness preserving reductions.
As a reminder, search problem $\CP$ is reduced to $\CQ$ if random instances of $\CP$ may be reduced to random instances of $\CQ$ in such a way that the independence of the random inputs and the bounds on their density are preserved.
The reduction is {\em strong} if independent random parameters in $\CP$ produce independent random inputs to the reduced problem, and {\em weak} if the reduced parameters are instead linear combinations of independent random variables.

Recall from Definition \ref{sec:PrelLMC} the $d$-\maxcut problem of finding a cut in a weighted graph such that the value of the cut cannot be improved by performing up to $d$ flips. In this section, we wish to provide a {\em weak} reduction from instances of $k$-\nashcoord to $2$-\maxcut. 
Solving the smoothed complexity of $2$-\maxcut is still an open problem, but it is not unlikely that it matches that of $1$-\maxcut, namely, quasi-polynomial smoothed complexity on arbitrary graphs, and polynomial smoothed complexity on complete graphs. We will show that, if this were true, then this implies the same would hold for $k$-\nashcoord, independently of the value of $k$.

We first show that a strong smoothness-preserving reduction to local-max-cut on (arbitrary) complete graphs implies (quasi-polynomial) polynomial smoothed complexity.

\begin{thm} \label{lem:reductionGivesAlg}
	Let $\CQ$ be a search problem with (quasi-)polynomial smoothed complexity, as defined above.
    Let $\CP$ be a problem which admits a strong smoothness-preserving reduction to $\CQ$, given by $f_1,\,f_2,\,f_3$, as in Definition~\ref{def:SmoothRed}. 
    Then $\CP$ has (quasi)polynomial smoothed complexity.
\end{thm}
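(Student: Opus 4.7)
The plan is to construct an algorithm $\CA_\CP$ for $\CP$ by lifting the smoothed algorithm $\CA_\CQ$ for $\CQ$ through the reduction. On input $(I,X)$, the algorithm $\CA_\CP$ samples $R\in\Omega$ using fresh internal coins, forms the reduced instance $(I',X'):=(f_1(I),f_2(X,R))$, invokes $\CA_\CQ$ on it to obtain a (locally optimal) solution $\sss$, and returns $f_3(\sss)$. Since $f_1,f_2,f_3$ are polynomial-time by the definition of the reduction, the reduction overhead is polynomial in $|I|,|X|,|R|$, and correctness of the output is immediate from the definition of a smoothness-preserving reduction (any solution of the reduced instance maps via $f_3$ to a solution of $(I,X)$).

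Next, the running time follows directly from the strong smoothness-preserving property. When $X$ has coordinates drawn independently with density at most $\phi$ and $R$ is sampled from $\Omega$ with coordinates of density at most $\phi$, the coordinates of $X'=f_2(X,R)$ are themselves independent random variables with density bounded by some $\phi':=\mathrm{poly}(\phi,|X|,|R|)$. Since $\CA_\CQ$ has (quasi-)polynomial smoothed complexity by hypothesis, there exists a constant $c$ such that $\CA_\CQ$ terminates on $(I',X')$ within $\tau:=(\phi'\cdot|I'|\cdot|X'|)^c$ steps (or its quasi-polynomial analogue) with probability at least $1-1/\mathrm{poly}(\phi',|I'|,|X'|)$ over the draw of $X'$ -- equivalently, over the joint draw of $(X,R)$. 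Because $|R|,|I'|,|X'|$ are all polynomial in $|I|,|X|$, both $\tau$ and the failure probability remain (quasi-)polynomially bounded in $\phi,|I|,|X|$ alone.

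The one subtlety -- and the main obstacle -- is that Definition~\ref{def:Smooth} asks for a high-probability bound over $X$ alone, whereas the preceding analysis gives one over the joint distribution of $(X,R)$. This is routine to resolve with Markov's inequality. Writing $T(X,R)$ for the running time of $\CA_\CP$ and setting $\delta:=1/\mathrm{poly}(\phi,|I|,|X|)$, we have $\mathbb{E}_X\!\bigl[\Pr_R[T(X,R)>\tau]\bigr]\le\delta$, so $\Pr_X\!\bigl[\Pr_R[T>\tau]>\sqrt{\delta}\bigr]\le\sqrt{\delta}$. A union bound then gives that $\CA_\CP$ halts within $\tau$ steps with probability at least $1-2\sqrt{\delta}=1-1/\mathrm{poly}(\phi,|I|,|X|)$ over the joint randomness, which matches the required form once we observe that smoothed analysis accommodates randomized algorithms (probability taken over all randomness). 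If one insists on a probability over $X$ only, running $\CA_\CP$ in parallel a constant number of times and halting on the first termination amplifies success to $1-1/\mathrm{poly}$ at negligible overhead, using that PLS solutions are efficiently verifiable.

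Everything else is bookkeeping: one verifies that the polynomial (respectively quasi-polynomial) blowups $|I|\mapsto|I'|$, $|X|\mapsto|X'|$, $\phi\mapsto\phi'$ compose into a single polynomial (respectively quasi-polynomial) function of the original parameters, so that both regimes of Theorem~\ref{lem:reductionGivesAlg} follow by the same argument.
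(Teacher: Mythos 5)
Your proposal is correct and follows essentially the same route as the paper: you compose the reduction with the smoothed algorithm for $\CQ$, invoke the strong smoothness-preserving property to get independence and a density bound $\phi'$ for $f_2(X,R)$, and then use exactly the paper's Markov-inequality argument with the $\sqrt{\delta}$ threshold to convert the guarantee over the joint draw of $(X,R)$ into one over $X$ alone. The only additions beyond the paper's proof are the optional amplification remark and the explicit bookkeeping paragraph, neither of which changes the argument.
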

\begin{proof}
	The algorithm for instances of $\CP$ is as follows: \\ 1) Perform the randomized reduction,\\
    2) Run the smoothed-(quasi-)polynomial-time algorithm for $\CQ$ on the reduced instance,\\
    3) Compute the solution to the instance of $\CP$ given the solution to the reduced problem.
    
    By the definition of smoothness-preserving reductions and (quasi-)polynomial smoothed complexity, step (2) will always correctly solve the reduced instance in finite time, and therefore step (3) will output a correct solution to the instance of $\CP$.
    
    It remains then to show that the algorithm runs in polynomial time with high probability, which we do via Markov's inequality.
    Let $(I,X)$ be an arbitrary instance of $\CP$ where $I$ is fixed, and $X$ is a random vector whose entries have density at most $\phi$.
    Let $R$ be a random vector whose density is at most $\phi$, and $\phi'$ be the bound on the density of the reduced random input $f_2(X,R)$. Finally, let $\CA$ be the algorithm that solves instances of $\CQ$ efficiently with high probability.
    Suppose that on random input $\big(f_1(I),f_2(X,R)\big)$,
    $\CA$ runs in time $(\phi'|I||X|)^c$ with probability $1-1/|I|^{c'}$ taken over the random input $(X,R)$, where $|X|$ is the size of $X$.
    Then we wish to show that $\CA$ runs in time $(\phi'|I||X|)^c$ on input $\big(f_1(I),f_2(X,R)\big)$, with high probability over the input vector $X$.
    To this end, we define the indicator function $B(I,X,R)$ which, for fixed values of $I$, $X$, and $R$, indicates whether $\CA$ takes time greater than $(\phi'|I||X|)^c$. 
    Thus, we wish to bound $\Pr_X[B(I,X,R)=1] = \mathbb E_{X}[B(I,X,R)]$ for all $I$ fixed, and for $R$ random. 
    Letting $\delta=1/|I|^{c'}$, we have
    \begin{align*}
    	\Pr_X\big[\,\Pr_R[B(I,X,R)=1]\geq\sqrt\delta\,\big]&=
        \Pr_X\big[\,\mathbb E[B(I,X,R)|X]\geq\sqrt\delta\,\big]\\
        (\text{Markov's Inequality})&\leq \frac{\mathbb{E}_X\big[\mathbb E_R[B(I,X,R)|X]\big]}
        {\sqrt\delta}\\
        (\text{Law of Total Expectation})&= \mathbb E_{X,R}[B(I,X,R)]/\sqrt{\delta}\\
        (\text{by assumption})&\leq \delta/\sqrt\delta=\sqrt\delta
    \end{align*}
    Therefore, with probability $1-1/|I|^{c'/2}$, the algorithm $\CA$ will solve the reduced instance in (quasi-) polynomial time. Since the values of $\phi'$, $|f_1(I)|$ and $|f_2(X,R)|$ are all assumed to be polynomial in $\phi$, $|I|$, and $|X|$, then the values $c$, $c'$ can be assumed to be constants (or logarithmic, in the quasi-polynomial case), and we have our desired result.
\end{proof}

\begin{cor}
    Let $\CP$ be a problem which admits a {\em weak} smoothness-preserving reduction to local-max-cut on an (arbitrary) complete graph, 
    then $\CP$ has (quasi-polynomial) polynomial smoothed complexity.
\end{cor}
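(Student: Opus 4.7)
The plan is to adapt the proof of Theorem~\ref{lem:reductionGivesAlg} to the weak-reduction setting, where the issue is that the smoothed complexity algorithm for local-max-cut can no longer be invoked as a black box, because the reduced instance has correlated edge weights. Instead, I will reopen the proofs of \cite{ER14,A+17,BCC18} and observe that the rank-based framework of Section~\ref{sec:common} continues to apply essentially unchanged. Recall that every known smoothed analysis of FLIP for local-max-cut proceeds by defining, for each candidate improving move $t$, a transformation vector $L_t \in \{-1,0,1\}^{|E|}$, bounding the rank of the matrix formed from these vectors for every sufficiently long improving sequence, and invoking Lemma~\ref{lem:probability} to bound the probability that many such improvements are simultaneously tiny.

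Under a weak smoothness-preserving reduction, the edge-weight vector takes the form $W = M(X,R) + v$, where $(X,R)$ is a vector of independent random variables with densities at most $\phi$, $M$ is a fixed matrix whose rows are linearly independent (this is precisely the weak-reduction hypothesis), and $v$ is a fixed offset. The per-step improvement is then $\langle L_t,W\rangle = \langle M^{\top}L_t,\,(X,R)\rangle + \langle L_t,v\rangle$, an affine function of the underlying independent random vector. The central step — and the main technical obstacle — will be to verify the following variant of Lemma~\ref{lem:probability}: if $L_{t_1},\dotsc,L_{t_r}$ are linearly independent integer vectors, then $M^{\top}L_{t_1},\dotsc,M^{\top}L_{t_r}$ are linearly independent real vectors (because $M$ has full row rank), and the joint density of $\bigl(\langle M^{\top}L_{t_i},(X,R)\rangle\bigr)_{i=1}^r$ is bounded by $(\phi')^r$, where $\phi'=poly(\phi,|X|+|R|)$ is the per-coordinate density bound guaranteed by the weak-reduction definition. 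I expect to prove this by reinspecting the standard proof of Lemma~\ref{lem:probability} with a change-of-variables through $M$; the care required is in tracking the dependence on entries of $M$ to verify that at most polynomial factors are lost in the density bound.

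With the extended probability lemma in hand, the rank and union-bound arguments of \cite{ER14,A+17,BCC18} carry over verbatim with $\phi$ replaced by $\phi'$, giving smoothed (quasi-)polynomial running time for FLIP on the reduced local-max-cut instance, in the (in)complete-graph case respectively. Since $\phi' = poly(\phi,|X|+|R|)$ and the sizes $|f_1(I)|,|f_2(X,R)|$ are polynomial in $|I|,|X|$ by definition of a reduction, the running time remains (quasi-)polynomial in the parameters of $\CP$. The final step is to convert this high-probability statement (taken jointly over $X$ and the auxiliary randomness $R$) into a high-probability statement over $X$ alone, which I will do by the Markov-style argument used in the proof of Theorem~\ref{lem:reductionGivesAlg}: if the reduced FLIP run exceeds the (quasi-)polynomial threshold with probability at most $\delta$ over $(X,R)$, then it does so with probability at most $\sqrt{\delta}$ over $X$ alone after averaging out $R$. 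Computing $f_3$ on the resulting cut yields a solution of $(I,X)$ within the same time bound, completing the reduction.
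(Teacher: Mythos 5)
Your proposal is correct and follows essentially the same route as the paper: both observe that the local-max-cut analyses are rank-based applications of Lemma~\ref{lem:probability}, transpose the full-rank matrix $M$ via $\ip{\alpha,MX}=\ip{M^{\mathsf T}\alpha,X}$ to preserve linear independence, track the (polynomially bounded) degradation of the density bound coming from $M$ (the paper packages this as the condition that nonzero entries of $M$ are at least $\eta\geq 1/poly(\phi,|X|)$), and finish with the Markov argument from Theorem~\ref{lem:reductionGivesAlg}. The only difference is cosmetic: you carry an explicit affine offset $v$, which is harmless since shifting does not affect densities.
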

\begin{proof}
	It suffices to show that local-max-cut has (quasi-polynomial) polynomial smoothed complexity on (arbitrary) complete graphs, when the edge weights are full-rank linear combinations of independent random variables. 
    This allows us to conclude, following the proof method of the previous lemma, that the reduction implies (quasi-)polynomial smoothed complexity for $\CP$.
To see this, we observe that the analyses of \cite{ER14,A+17} reduce to applying Lemma~\ref{lem:probability} to a high-rank collection of integer vectors, exactly as we have done in Sections \ref{sec:BRA} and \ref{sec:exp}.
Furthermore, $\ip{\alpha,MX} = \ip{M^\mathsf{T}\alpha,X}$ for any square matrix $M$. Therefore, we may restate the lemma as: \\[-0.75em]

\begin{claim}Let $X\in \R^d$ be a vector of $d$ independent random variables where each $X_i$ has density bounded by $\phi$. Let $\alpha_1,\,\dotsc,\,\alpha_r$ be $r$ linearly independent vectors in $\mathbb{Z}^d$, and $M$ a full rank matrix in $\mathbb R^{d\times d}$ with $|M_{i,j}|\geq \eta>0$ for all $i,j$ such that $M_{i,j}\neq 0$. 
then the joint density of $(\ip{\alpha_i,MX})_{i\in [k]}$ is bounded by $(\phi/\eta)^r$. In particular, for all $b_1,b_2,\dotsc\in \mathbb R$, and $\epsilon>0$,
\begin{equation*}
\Pr\Big[\textstyle\bigwedge_{i=1}^r\ip{\alpha_i,MX}\in [b_i,b_i+\epsilon]\Big] \leq (\phi\epsilon/\eta)^r
\end{equation*}
\end{claim}
The only difference is the addition of the matrix $M$. Note that as $X$ is a vector of $d$ independent random variables and $M$ is a full rank matrix, the product $MX$ is also a vector of $d$ independent random variables. Further, as every element of $X$ has density bounded by $\phi$, and every entry of $M$ is at least $\eta$, every element of the product $MX$ has density bounded by $\phi/\eta$. Hence, by applying the analysis to $MX$ instead of $X$, the proof of the claim easily follows from that of Lemma \ref{lem:probability}.

We conclude that the smoothed (quasi-polynomial) polynomial complexity for local-max-cut on (arbitrary) complete graphs from \cite{ER14,A+17} does not require edge-weights to be independent, but instead, it suffices to have edge weights which are full-rank, linear combinations of independent random variables, where the non-zero entries of $M$ are bounded away from 0.
	Since the smoothed analysis of local-max-cut needs $\epsilon$ to be $1/poly(|X|,\phi)$, it suffices to have $\eta \geq 1/poly(\phi,|X|)$.
    
	The rest of the analysis is identical to that of Theorem~\ref{lem:reductionGivesAlg}, where $\CA$ is the FLIP algorithm.
\end{proof}

We note that if it were possible to weakly reduce $k$-\nashcoord to $1$-\maxcut, then this would imply a (quasi-)polynomial smoothed complexity for $k$-\nashcoord, where the degree of the polynomial does not depend on $k$. 
Unfortunately, we only achieve a weak reduction to $2$-\maxcut, which is likely to have similar smoothed complexity to $1$-\maxcut, though this is not as of yet known. 
We leave the smoothed analysis of $2$-\maxcut, and therefore of \nashcoord for $k$ variable, as an open problem.

\begin{thm}\label{thm:reduction}
	The problem of finding a Nash Equilibrium in a Network Coordination Game with $k$ strategies {\em ($k$-\nashcoord)} admits a {\em weak} smoothness-preserving reduction to local-max-cut up to two-flips {\em (2-\maxcut)}. Furthermore, {\em 2-\nashcoord} reduces to {\em 1-\maxcut}.
\end{thm}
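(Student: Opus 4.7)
The plan is to construct, for each $k$-\nashcoord instance with game graph $G=(V,E)$ and random payoff vector $A$, an auxiliary weighted graph $G'=(V',E')$ whose $2$-locally-maximum cuts correspond exactly to PNE of the original game. I would take $V'=\{s,t\}\cup\{(u,i):u\in V,\,i\in[k]\}$, giving $nk+2$ nodes, and include three kinds of edges in $E'$: terminal edges from $s$ and $t$ to every other vertex; intra-player exclusion edges between $(u,i)$ and $(u,j)$ for $i\neq j$; and game edges between $(u,i)$ and $(v,j)$ whenever $(u,v)\in E$. A cut $(S,T)$ is declared \emph{valid} when $s\in S$, $t\in T$, and exactly one of $(u,1),\dotsc,(u,k)$ lies in $S$ for each player $u$; every valid cut induces a strategy profile $\sss(S)$ via $\sigma_u=i\iff(u,i)\in S$.

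Next, I would pick edge weights as small linear combinations of the random payoffs $A_{uv}(i,j)$ together with a freshly drawn independent perturbation $R$ and a fixed enforcing constant $M$. Concretely, the game edge $((u,i),(v,j))$ receives weight $A_{uv}(i,j)+R_{uv,ij}/\phi^{c}$, so that for any valid cut $(S,T)$ the sum of game-edge contributions equals $\payoff(\sss(S))$ up to a known additive term depending only on $\sss(S)$; this already gives the key identity between cut value and potential up to absorbable terms. The terminal and exclusion edges receive deterministic weight $M$, chosen so that any invalid cut admits a strictly improving single flip (moving a surplus or deficient $(u,i)$ vertex restores validity with net gain $>2n^{2}$). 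Hence every $2$-locally-optimal cut must be valid, and a $2$-flip in a valid cut is precisely the simultaneous removal of $(u,i)$ and addition of $(u,i')$, which is the unilateral deviation of player $u$. Taking $f_{1}(I)$ to be this graph, $f_{2}(A,R)$ to be the weight vector, and $f_{3}$ to map a returned cut $S$ to $\sss(S)$ yields the desired reduction.

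To verify smoothness-preservation, I would observe that each coordinate of $f_{2}(A,R)$ is a linear combination of entries of $A$ and $R$ with small integer coefficients, so the resulting density is bounded by $\mathrm{poly}(\phi)$ and the linear map from $(A,R)$ to the weight vector is full rank (the dummy randomness $R$ decouples any otherwise-correlated entries). Since the combinations are nontrivial, only a weak reduction is obtained, consistent with the theorem statement. For the second part, I would shrink the construction to $V'=\{s,t\}\cup V$, with terminal edges weighted by $M$ and a single edge per game edge $(u,v)\in E$ whose weight is $A_{uv}(1,2)+A_{uv}(2,1)-A_{uv}(1,1)-A_{uv}(2,2)$ (the standard $2$-coordination-to-cut transformation), plus an appropriate $R$-perturbation. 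Interpreting $u\in S$ as $\sigma_{u}=1$ and $u\in T$ as $\sigma_{u}=2$ makes each unilateral deviation a single vertex flip, so local-max-cuts correspond to PNE.

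The main obstacle I anticipate is choosing $M$ and the enforcing structure carefully enough that (i) no $2$-flip can bypass validity by temporarily gaining via an exclusion or terminal edge, (ii) the weight vector stays a linear combination of independent random variables whose induced density remains $\mathrm{poly}(\phi,|A|,|R|)$ as Definition~\ref{def:SmoothRed} requires, and (iii) the enforcing constant $M$ can be absorbed as a deterministic shift rather than inflating density. In particular, $M$ must dominate the range $[-n^{2},n^{2}]$ of the potential but must not appear as a random entry, so I would set it to a fixed integer polynomial in $n$ and $k$; this keeps the entries of $f_{2}(A,R)$ linearly independent combinations of $(A,R)$ after subtracting the deterministic shift, completing the verification that the reduction is weakly smoothness-preserving.
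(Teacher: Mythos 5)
Your graph topology ($nk+2$ nodes, terminal/exclusion/game edges), the valid-cut-to-profile map, and the identification of a unilateral deviation with a $2$-flip all match the paper's construction, as does the shrunken $n+2$-node construction for $k=2$. However, two of the steps you wave through are exactly where the real work lies, and as written they fail. First, the weight $A_{uv}(i,j)+R_{uv,ij}/\phi^{c}$ on the game edge $((u,i),(v,j))$ does not make cut value equal to payoff: when the profile is $(\sigma_u,\sigma_v)=(i,j)$, the edge $((u,i),(v,j))$ has \emph{both} endpoints in $S$ and is therefore \emph{not} cut; what the cut collects is $\sum_{j'\neq j}w((u,i),(v,j'))+\sum_{i'\neq i}w((u,i'),(v,j))$. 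So the chosen payoff must enter with a \emph{negative} coefficient together with per-node marginal corrections; the paper derives $w((u,i)(v,j))=\underline Y(u,i)+\underline Y(v,j)-A_{uv}(i,j)-\underline A_0$ for fresh random variables $\underline Y,\underline A_0$, and verifies the identity $\delta(S)=\payoff(\sss(S))$ via an inclusion--exclusion over singleton cuts. This is not an ``absorbable term depending only on $\sss(S)$''; it is the content of the claim.

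Second, the deterministic big-$M$ enforcement is doubly problematic. A purely deterministic edge weight is a point mass, hence not a ``linearly independent combination of entries of $X$ and $R$ with bounded density'' as Definition~\ref{def:SmoothRed} requires, and it would also break the downstream rank-based smoothed analysis of $2$-FLIP, which needs the relevant weights perturbed. Worse, a large positive $M$ on the exclusion edges does not enforce validity: putting $m$ of player $u$'s $k$ nodes in $S$ cuts $m(k-m)$ exclusion edges, which is maximized near $m=k/2$, not $m\le 1$; and removing one surplus node when $k\ge 4$ typically \emph{loses} exclusion weight. The paper's mechanism is different: it defines validity as \emph{at most} one node per player (with a dummy strategy of deliberately bad payoff covering the ``deficient'' case, so you never need to repair it), draws the auxiliary weights $\underline W(u,i),\underline R(u,ij)$ from $[-1,0)$ and $\underline A_0,\underline Y$ from $[0,0.5)$ while the payoffs sit in $[0.5,1]$, pins down $\delta(\{s,(u,i),(u,j)\})=2\underline R(u,ij)<0$, and then uses submodularity of the cut function to show that adding a second strategy of the same player always strictly decreases the cut. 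You would need to replace your big-$M$ argument with something of this kind, and then exhibit the full (square, integral, upper-triangular after row operations) linear system from $(A,\underline Y,\underline A_0,\underline R,\underline W)$ to the edge weights to certify that the reduction is weak smoothness-preserving.
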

\begin{proof}
	Assume, first, that the payoff values of the coordination game are supported in $[0.5,1]$. If the input is assumed to have been supported on $[-1,1]$, then this is simply an affine transformation of the input, and at most quadruples the maximum density.
    The idea of the reduction is to set up a graph such that every ``good'' cut can be mapped to a strategy profile, the total cut values of these ``good'' cuts are equivalent to the payoff of the associated strategy profile, and every locally maximal cut must be a ``good'' cut.
    Recall that the definition of smoothness-preserving reduction allows for extra randomness to be introduced.
    We will use this randomness to ensure that these conditions hold, and that the edge weights are simply a full-rank, integer combination of the payoff values and the extra random variables.
     Since the cut graph topology is only a function of the game graph topology, and the edge weights are only a function of the input payoff values and the extra random variables, this will be a valid reduction. It will suffice to argue that the density bound does not blow up, and that local max cuts optimal up to two flips are exactly those ``good'' cuts which are mapped to PNE strategy profiles.
   
   In this proof, the total payoff function for a strategy profile $\sss$ will be considered as 
   \[\payoff(\sss)=\sum_{u}\payoff_{u}(\sss) = 2\sum_{uv\in E}A_{uv}(\sigma_u,\sigma_v)
   \]
   which is double the potential function considered in the previous sections. 
   This will be necessary to ensure that the linear system has integer entries. 
   To have our payoff value be equal to that of the standard potential function, it suffices to halve the payoff values, which at most doubles the density of the random variables. (Alternatively, we may set $\eta=\tfrac12$ in the proof of Theorem~\ref{lem:reductionGivesAlg}.)
   
   The reduction is as follows: given an instance of $k$-\nashcoord on game graph $G$, we construct a graph with $nk+2$ nodes: two terminal nodes $s$ and $t$, and $nk$ nodes indexed by player-strategy pairs $(u,i)$. 
Nodes $s$ and $t$ are connected to every other node in the graph, and for each player $u$ and strategy $i$, there is an edge from node $(u,i)$ to node $(u,j)$ for all $j\neq i$.
Furthermore, if players $u$ and $v$ share an edge in the game graph ($i.e.$ play a game together), there is also an edge from $(u,i)$ to $(v,j)$ for all $1\leq i,j\leq k$.
Therefore, the cut graph is complete if and only if the game graph is.

Call a cut $S,T$ {\em valid} if it is an $s$-$t$ cut with $s\in S$ and $t\in T$, and $S$ contains at most one node $(u,i)$ for each player $u$.
Now, for any valid $s$-$t$ cut, we can interpret this cut as determining a strategy profile as follows:
If player $u$ appears in $S$ paired with strategy $i$, then set $\sigma_u=i$. 
Otherwise, set $\sigma_u=0$, a ``dummy'' strategy with bad payoff. Call this profile $\sss(S)$.
We wish to choose edge weights such that all locally maximal cuts are valid cuts, 
and also such that for any valid cut $S$, the total cut value is equal to $\payoff(\sss(S))$.
We denote as $A((u,i)(v,j))$ the payoff value for the $uv$ game, when player $u$ plays strategy $0\leq i\leq k$, and player $v$ plays strategy $0\leq j\leq k$. This is simply to disambiguate the $A_{uv}(i,j)$ notation. We will, however, use this latter notation when space does not permit the former. 
Letting 0 denote the dummy strategy, we assume that $A((u,i)(v,0))=\underline Y(u,i)$ for all $v\neq u$, and $A((u,0)(v,0))=\underline A_0$ for all $u\neq v$, where
the underlined values denote new random variables which are not given by the instance of $k$-\nashcoord.

   Since the variables $\underline A_0$ and $\underline Y(u,i)$ are in our control, we assume that they are drawn independently at random, and are supported on $[0,0.5)$. 
   This ensures that any Nash Equilibrium must entirely consist of non-zero strategies, since we have assumed $A((u,i)(v,j))\in[0.5,1]$ for all $1\leq i,j\leq k$.
   To minimize the density of the new random variables, we may assume that the distribution is uniform.
   For the purposes of the cut graph, we will also need random variables $\underline W(u,i)$ for all players $u$ and $1\leq i\leq k$, and $\underline R(u,ij)$ for all $u$ and $1\leq i<j\leq k$.
   We assume the $\underline W$ and $\underline R$ variables to be $i.i.d.$ uniform random variables over $[-1,0)$.

	Denote as $\delta:2^{V'}\to\mathbb R$ is the cut function, where $\delta(S):=\sum\limits_{\substack{uv\in E':\\u\in S, v\notin S}} w(uv)$.
   Furthermore, for simplicity of notation, let $\pi(\{(u_1,i_1),\,\dotsc,\,(u_\ell,i_\ell)\})$ denote the value: $\payoff\big(\sss\big(\{s,(u_1,i_1),\,\dotsc,\,(u_\ell,i_\ell)\}\big)\big)$, where $\pi$ stands for ``payoff.'' 
   We wish to choose edge weights $w$ such that
   \begin{enumerate}
   \item[(i)] For every valid cut $S,T$, $\delta(S)=\payoff(\sss(S))=\pi(S\setminus\{s\})$,
   \item[(ii)] for every player $u$ and $1\leq i<j\leq k$, 
   $\delta(\{s,(u,i),(u,j)\}) = 2\underline R(u,ij)$,
   \item[(iii)] and for every pair $(u,i)\in \mathcal S$,
   $w((s,(u,i)))=\underline W(u,i)$.
   \end{enumerate}
   
   The rest of the proof is contained in the 3 following claims, which we will prove at the end of this section:
   \begin{claim}\label{claim1}
   Condition (i) is satisfied if and only if (a) $\delta(\{s,(u,i)\})=\pi(\{(u,i)\})$ for all players $u$ and $1\leq i\leq k$, and (b) $w((u,i),(v,j))= \underline Y(u,i)-\underline Y(v,j)-A((u,i)(v,j))-\underline A_0$.
   \end{claim}
   \begin{claim}\label{claim2}
   The edge weights $w(u,v)$ which satisfy conditions (i), (ii), and (iii) are full-rank, square, integer-valued, linear combinations of the random variables $A_{uv}(i,j)$, $\underline Y(u,i)$, $\underline A_0$, $\underline R(u,ij)$, and $\underline W(u,i)$. %for all choices of $u,\,v,\,i,\,j$.
   \end{claim}
   \begin{claim}\label{claim3}
    	If conditions (i), (ii), and (iii) are satisfied, then all local-max-cuts up to $2$ flips are valid cuts, and their associated strategy profiles are Nash equilibria. 
    \end{claim}
    
    Thus, we have provided a reduction from an instance of $k$-\nashcoord to an instance of 2-\maxcut, such that any solution to the reduced instance of 2-\maxcut directly translates to a solution to the original instance of $k$-\nashcoord, the edge weights of the cut graph are a full-rank linear combination of the game payoff values and the extra random variables,
    and the linear combinations are integral.
    Therefore, we have provided a weak smoothness-preserving reduction from $k$-\nashcoord on complete (resp. arbitrary) game graphs to $2$-\maxcut on complete (arbitrary) graphs.
    
   In the case $k=2$, we slightly modify the reduction to not include a dummy strategy. Simply create a graph on $n+2$ nodes labeled as $s$, $t$, and one for each player in the Network Game. 
   We say that every $s$-$t$ cut $S,T$ is valid, 
   and define $\sss(S)$ as setting $\sigma_u=1$ if $u\in S$, and $\sigma_u=2$ if $u\in T$.
   Then $w(s,u)=\underline W(u)$, $w(u,v)=A((u,1)(v,2))+A((u,2)(v,1))-A((u,1)(v,1))-A((u,2)(v,2))$, and the rest of the proof goes through.
   Since the default (``no node selected'') strategy is one of the two possible strategies, every local max cut up to $1$ flip must map to a PNE, as desired.\end{proof}   
   \begin{cor}
   	If {\em 2-\maxcut} has (quasi-polynomial) polynomial smoothed complexity on (arbitrary) complete graphs when inputs are linearly independent combinations of independent random variables, then {\em \nashcoord} has (quasi-polynomial) polynomial smoothed complexity on (arbitrary) complete game graphs for $k$ in the input, rather than for fixed $k$.
   \end{cor}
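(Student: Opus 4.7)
My plan is to construct, from an instance of $k$-\nashcoord on game graph $G=(V,E)$, a weighted graph on $nk+2$ vertices consisting of two terminals $s,t$ and one vertex $(u,i)$ for every player-strategy pair, and then argue that local-max cuts (up to two flips) in the constructed graph correspond to pure Nash equilibria of the original game. The edge set will have $s,t$ adjacent to every $(u,i)$, every two $(u,i),(u,j)$ adjacent, and $(u,i)\sim(v,j)$ whenever $uv\in E$; in particular the cut graph is complete iff the game graph is. I will call an $s$-$t$ cut $(S,T)$ with $s\in S$ \emph{valid} if $S$ contains at most one $(u,\cdot)$ node per player $u$, and map such a cut to a strategy profile $\sss(S)$ where $\sigma_u=i$ when $(u,i)\in S$, and otherwise $\sigma_u=0$, interpreted as a ``dummy'' strategy with deliberately bad payoff. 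I will introduce fresh independent uniform auxiliary random variables $\underline{Y}(u,i),\underline{A}_0$ (for the dummy-strategy payoffs), together with $\underline{W}(u,i)$ (for $s$-edge weights) and $\underline{R}(u,ij)$ (used to break ties between two $(u,\cdot)$ nodes on the same side of the cut).

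Next, I will choose the edge weights to force three simultaneous conditions: (i) for every valid cut, $\delta(S)$ equals the total game payoff $\pi(S\setminus\{s\})$; (ii) $\delta(\{s,(u,i),(u,j)\})=2\underline{R}(u,ij)$; (iii) $w(s,(u,i))=\underline{W}(u,i)$. The key algebraic observation is that condition (i) can be decomposed into the one-vertex base case $\delta(\{s,(u,i)\})=\pi(\{(u,i)\})$ together with the edge-wise increment formula $w((u,i),(v,j))=\underline{Y}(u,i)-\underline{Y}(v,j)-A_{uv}(i,j)-\underline{A}_0$, derived by telescoping the payoff across pairs. The weights of edges incident to $t$ and between $(u,i)$ and $(u,j)$ for $i\neq j$ can then be determined by difference to satisfy (ii) and (iii). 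Because the resulting system is triangular in the auxiliary variables and each $A_{uv}(i,j)$ appears in exactly one edge weight with coefficient $\pm 1$, the map from $(A,\underline{Y},\underline{A}_0,\underline{W},\underline{R})$ to the edge-weight vector is a full-rank integer linear map, giving the weak smoothness-preservation property (densities blow up by at most a fixed constant from the affine rescaling of $A$ into $[0.5,1]$, and the coefficients of the linear map are bounded).

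The main obstacle, and the step requiring the $\underline{Y}$/$\underline{W}$/$\underline{R}$ variables, is verifying that every $2$-FLIP-local-max cut is automatically valid and that $\sss(S)$ is a PNE. Since the dummy strategy has payoff in $[0,\tfrac12)$ and real strategies give at least $\tfrac12$, in a valid cut no player will benefit from switching to $0$, so a $2$-flip that removes $(u,i)$ and inserts $(u,j)$ (which is exactly a unilateral deviation $\sigma_u:i\mapsto j$) improves the cut iff it improves $u$'s payoff in the game. For an \emph{invalid} cut, some player has two nodes $(u,i),(u,j)$ in $S$; I will argue that removing one of them is a $1$-flip improvement with high probability, because its marginal contribution is (by construction, using condition (ii) and the $\underline{R}$ variables) generically nonzero and can be made to point in the improving direction by choosing which of $i,j$ to drop. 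Hence no $2$-FLIP-local-optimum can be invalid, and validity plus no-profitable-unilateral-deviation gives a PNE, so $f_3(S):=\sss(S)$ is a correct decoding. Finally, for $k=2$ I will describe the simpler construction on $n+2$ nodes where $\sigma_u=1$ iff $u\in S$ and $\sigma_u=2$ otherwise, which removes the need for the dummy strategy entirely, and in which a unilateral deviation corresponds to a single flip — yielding the reduction to $1$-\maxcut and completing the proof.
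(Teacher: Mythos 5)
Your proposal is essentially the paper's own proof of Theorem~\ref{thm:reduction}, which is what this corollary rests on (together with the meta-theorem that a weak smoothness-preserving reduction transfers smoothed complexity from the target problem): the same $nk+2$-vertex graph, the same auxiliary variables $\underline Y(u,i),\underline A_0,\underline W(u,i),\underline R(u,ij)$, the same three conditions (i)--(iii), and the same decoding of valid cuts to strategy profiles, with the $k=2$ specialization to $1$-\maxcut. Two points need tightening. First, the cross-edge weight must be symmetric in its two endpoints to define an undirected edge weight; it arises as $A_{uv}(i,0)+A_{uv}(0,j)-A_{uv}(0,0)-A_{uv}(i,j)$ and therefore equals $\underline Y(u,i)+\underline Y(v,j)-\underline A_0-A_{uv}(i,j)$, whereas your minus sign on $\underline Y(v,j)$ would make the weight depend on the orientation of the edge. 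Second, and more substantively, your claim that an invalid cut is improvable ``with high probability'' is not the right statement: Definition~\ref{def:SmoothRed} requires that \emph{every} locally optimal solution of the reduced instance decode to a solution of the original, so validity of all $2$-flip-local-max cuts must hold deterministically. This is exactly what the sign conventions on the supports buy you: with $\underline A_0,\underline Y(u,i)\in[0,0.5)$ and $\underline R(u,ij),\underline W(u,i)\in[-1,0)$, submodularity of the cut function gives, for any $S$ containing both $(u,i)$ and $(u,j)$, that the marginal value of $(u,j)$ is at most $\delta(\{s,(u,i),(u,j)\})-\delta(\{s,(u,i)\})=2\underline R(u,ij)-2(|E|-d(u))\underline A_0-2d(u)\underline Y(u,i)<0$, so dropping the redundant node \emph{always} improves the cut --- no probabilistic argument and no choice of which node to drop is needed. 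The same sign conventions also yield the step your sketch omits, namely that $s$ and $t$ lie on opposite sides of every local optimum, via $\delta(S\cup\{s\})-\delta(S)=2|E|\underline A_0-2\sum_{(u,i)\in S}\underline W(u,i)>0$.
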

   
   We leave as an open problem the smoothed complexity of $2$-\maxcut.
   Below are presented the proofs of the claims necessary for Theorem~\ref{thm:reduction}

   \begin{proof}[Proof of Claim~\ref{claim1}:]
   \textit{Condition (i) is satisfied if and only if (a) $\delta(\{s,(u,i)\})=\pi(\{(u,i)\})$ for all players $u$ and $1\leq i\leq k$, and (b) $w((u,i),(v,j))= A((u,i)(v,j))+\underline A_0-\underline Y(u,i)-\underline Y(v,j)$.}
   
   	We begin by showing the two following equalities:
   \begin{align}
		\pi(\underbrace{\{(u_1,i_1),\dotsc,(u_\ell,i_\ell)\}}_{S'}) &= 
		\Big[\sum_{j=1}^\ell \pi(\{(u_j,i_j)\})\Big] - (\ell-1) \pi(\emptyset)\nonumber\\
		&\quad - \sum_{\substack{(u,i),(v,j)\in S\\uv\in E}} 
        \mkern-18mu
			2\left[A_{uv}(i,0)+A_{uv}(0,j)-A_{uv}(0,0)-A_{uv}(i,j)\right]\label{eq:H-breakdown}\\
		\delta(\underbrace{\{s,(u_1,i_1),\dotsc,(u_\ell,i_\ell)\}}_S) &= 
		\Big[\sum_{j=1}^\ell\delta(\{s,(u_j,i_j)\})\Big] - (\ell-1)\delta(\{s\})\nonumber\\
		&\quad - \sum_{\substack{(u,i),(v,j)\in S\\uv\in E}} 
      	\mkern-18mu 2w((u,i),(v,j)) \label{eq:w-breakdown}
	\end{align}
    For (\ref{eq:H-breakdown}), note first if there is no $uv$ edge in the game graph, then $A_{uv}$ does not appear on either side of the equality, and we may restrict our attention to pairs which form game edges. 
    Now, for every $v$ and $w$ which do not appear in $S$, the left-hand-side has $2A_{vw}(0,0)$, and the right-hand-side has $2(\ell-(\ell-1))A_{vw}(0,0)$ from the first line. If $u$ appears with strategy $i$, and $v$ does not appear in $S$, then the left-hand-side has $2A_{uv}(i,0)$, and the right-hand-side has $2A_{uv}(i,0)$ from the $\pi(\{(u,i)\})$ term. 
    If $u$ appears with strategy $i$, and $v$ appears with strategy $j$, then the left-hand-side has $2A_{uv}(i,j)$, and the right-hand-side has $2A_{uv}(i,0)$ and $2A_{uv}(0,j)$ from the $\pi(\{(u,i)\})$ and $\pi(\{(v,j)\})$ terms which are canceled out by the second line, $2(\ell-2-(\ell-1))A_{uv}(0,0)$ terms from the first line which is canceled out by the second line, and the term $2A_{uv}(i,j)$ from the second line.
    
    A similar argument shows the validity of (\ref{eq:w-breakdown}). Since condition (i) requires that $\pi(\{u,i\})=\delta(\{s,(u,i)\})$, this is necessary. 
    In the case $\ell=2$, this implies that $w((u,i)(v,j))$ must be equal to $A_{uv}(i,0)+A_{uv}(0,j)-A_{uv}(0,0)-A_{uv}(i,j)\ = \ \underline Y(u,i)+\underline Y(u,j)-\underline A_0 -A_{uv}(i,j)$. Finally, setting $w((u,i)(v,j))$ to this value fulfills condition (i) for all values of $\ell$, as desired.
  	\end{proof}
%     Thus, if $H(\{(u,i)\})=\delta(\{s,(u,i)\})$ for all players $u$ and $1\leq i\leq k$, then the $\ell=2$ requires us to set  $w((u,i)(v,j))=-\tfrac12\big(Y(u,i)+Y(v,j)-X_0-X((u,i)(v,j))$.

   \begin{proof}[Proof of Claim~\ref{claim2}:] \textit{The edge weights which satisfy conditions (i), (ii), and (iii) are a full-rank, square, integer-valued, linear combination of the random variables $A_{uv}(i,j)$, $\underline Y(u,i)$, $\underline A_0$, $\underline R(u,i,j)$, and $\underline W(u,i)$ for all choices of $u,\,v,\,i,\,j$.}
   
    The previous claim allows us to set up the following system:
    \begin{align}
    	w(s,(u,i))&=\underline W(u,i)\\[2ex]
        w((u,i)(v,j))&= \underline Y(u,i)+\underline Y(v,j)-A_{uv}(i,j)-\underline A_0\\
        \pi(\emptyset)=\delta(\{s\})&= w(s,t) + \sum_{u\in V}\sum_{i=1}^k \underline W(u,i)\nonumber\\
        \implies w(s,t)&= \pi(\emptyset) - \sum_{u\in V}\sum_{i=1}^k \underline W(u,i)\label{eq:st-edge}\\[2ex]
        2\underline R(u,ij)=\delta(\{s,(u,i),(u,j)\})&=
        \delta(\{s,(u,i)\})+\delta(\{s,(u,j)\})-\delta(\{s\})
        -2w((u,i)(u,j))\nonumber\\
        \implies w((u,i)(u,j))&= \tfrac12\big(\,
        	\pi(\{(u,i)\})+\pi(\{(u,j)\})-\pi(\emptyset)-2\underline R(u,i,j)
        \,\big)
        \end{align}
        \begin{align}
        \pi(\{(u,i)\})=\delta(\{s,(u,i)\})&= 
        w(s,t)+w((u,i),t)+\sum_{(v,j)\neq(u,i)} [\underline W(v,j) + w((u,i)(v,j))]\nonumber\\
        \implies w((u,i),t)&=\pi(\{(u,i\})-w(s,t) - \sum_{(u,i)\neq (v,j)} [\underline W(v,j)
        + w((u,i)(v,j))]\nonumber\\
        &=\pi(\{(u,i\})-\pi(\emptyset) + \underline W(u,i) - \sum_{(u,i)\neq (v,j)} w((u,i)(v,j))\label{eq:ut-edges}
    \end{align}
    We observe first that (\ref{eq:ut-edges}) adds the values of the previous numbered equations to the value of $w((u,i),t)$. Therefore, it suffices to perform simple row-elimination to get $\widehat w((u,i),t)=\pi(\{u,i\})-\sum \underline W(v,j)$. Now, let $G=(V,E)$ be the underlying game graph, and let $d(u)$ be the number of games that player $u$ participates in, $i.e.$ the degree of $u$ in the game graph. Then $\pi(\emptyset)=2|E|A_0$, and $\pi(\{(u,i)\})=\pi(\emptyset) + 2d(u)[\underline{Y}(u,i)-\underline A_0]$.
    Finally, we have 
    
    \begin{equation}\def\arraystretch{1}
	\left(\begin{matrix}
    	\vdots\\w((u,i)(v,j))\\\vdots\\\hline
    	\vdots\\w((u,i)(u,j))\\\vdots\\\hline
        \vdots\\\widehat w((u,i),t)\\\vdots\\\hline
        w(s,t)\\\hline
        \vdots\\ w(s,(u,i))\\\vdots
    \end{matrix}\right)=\def\arraystretch{2}
    \left(\begin{array}{c|c|c|c|c}
    	-Id&\bm 0&*&-\bm 1&\bm 0\\\hline
        \bm 0&-Id&*&*&\bm 0\\\hline
        \bm 0&\bm 0&2d(u)Id&*&Id\\\hline 
        \bm 0&\bm 0&\bm 0&2|E|&-\bm 1\\\hline 
        \bm 0&\bm 0&\bm 0&\bm 0&Id
    \end{array}
    \right)\def\arraystretch{1}
    \left(\begin{matrix}
    	\vdots\\A_{uv}(i,j)\\\vdots\\\hline
    	\vdots\\\underline R(u,i,j)\\\vdots\\\hline
        \vdots\\\underline Y(u,i)\\\vdots\\\hline
        \underline A_0\\\hline
        \vdots\\ \underline W(u,i)\\\vdots
    \end{matrix}\right)
\end{equation}
It is easy to check that the $*$ values are integral, since the $\pi$ values must be even combinations of the $A$ values. Therefore, after the row-operations leading to $\widehat w((u,i),t)$ values, the matrix is upper-triangular, which implies that the system is full-rank, square, and integral, as desired.
    \end{proof}
    
    \begin{proof}[Proof of Claim~\ref{claim3}:]
    \textit{If conditions (i), (ii), and (iii) are satisfied, then all local-max-cuts up to $2$ flips are valid cuts, and their associated strategy profiles are Nash equilibria.}
    
    Recall that we have assumed that $0.5\leq A_{uv}(i,j)\leq 1$ for all edges $uv$ and for all $1\leq i,j\leq k$, that $0\leq \underline A_0,\underline Y(u,i)<0.5$ for all players $u$ and $1\leq i\leq k$, and that $-1\leq \underline R(u,ij),\underline W(u,i)<0$ for all players $u$ and $1\leq i<j\leq k$.
    
    	We wish to show, first, that any local max cut must be an $s$-$t$ cut. Without loss of generality, assume that all cuts considered do not include $t$, since it suffices to take the complement of the cut set.
        Thus, it suffices to argue that for any set $S$ of vertices containing neither $s$ nor $t$, 
        that $\delta(S\cup\{s\})-\delta(S)>0$.
        The two cuts may only differ on edges incident to $s$. The positive term includes $w(s,t)$ and $\underline W(u,i)$ for all $(u,i)\notin S$, and the negative term includes $\underline W(u,i)$ for all $(u,i)\in S$.
        However, we know from (\ref{eq:st-edge}) that $w(s,t)=\pi(\emptyset)-\sum_{(u,i)}\underline W(u,i)$. 
        Therefore, we get
        \[
        	\delta(S\cup\{s\})-\delta(S) \ \ =\ \  
            2|E|\underbrace{\underline A_0}_{\geq 0}
            \ -\  2\!\!\!\sum_{(u,i)\in S} 
            \underbrace{\underline W(u,i)}_{< 0}\ \ >\ \ 0
        \]
        
%     	Note that $w(s,t)=H(\emptyset)-\sum_{(u,i)} \underline W(u,i)\geq -\sum W(u,i)$. 
%         Let $S$ be any set of player-strategy pairs. 
%         Then $\delta(S)-\delta(\{s\}\cup S) \leq  \sum_{u\in V}\sum_{i=1}^k \underline W(u,i)$, since every $w(s,(u,i))$ edge is either subtracted in the $w(s,t)$ term of $\delta(\{s\}\cup S)$ if $(u,i)\notin S$, or added in the $\delta(S)$ term otherwise. 
%         But $W(u,i)<0$ for all $u,i$. Thus, any local max cut must contain $s$. 
%         If $S$ contains $t$, the $\underline W$ values cancel, and the difference becomes $-H(\emptyset)<0$.
%         Therefore, any local max cut must contain $s$.
        
%         \note{Show that any cut which contains $s$ shouldn't contain $t$. We get a similar subtraction of the W's... }
%         similar argument \note{(details?)} shows that any local max cut must not contain $t$, and therefore it must be an $s$-$t$ cut.
        
    	Now, it remains to show that any locally optimal $s$-$t$ cut must be valid. It is well known that cut functions in undirected graphs are submodular, and therefore for any $S$ not containing $s$, $(u,i)$, or $(u,j)$, \begin{align*}
        	\delta(S\cup &\{s,(u,i),(u,j)\}) - 
            \delta(S\cup \{s,(u,i)\})\\&\leq 
            \delta(\{s,(u,i),(u,j)\})-
            \delta(\{s,(u,i)\})\\
            &= 2\underline R(u,i,j)-2(|E|-d(u))\underline A_0-2d(u) \underline Y(u,i)\\
            &<0
        \end{align*}
        Thus, conditions (ii) and (iii) are sufficient to guarantee that all local max cuts are valid cuts, and therefore local max cuts up to $2$ flips must also be valid cuts.
        Condition (i) implies that the {\em value} of a valid cut is equal to its associated strategy profile. 
        However, for any strategy profile $\sigma$, if $S$ is the valid cut associated to $\sigma$, and player $u$ benefits from replacing $\sigma(u)$ with $i'$, then this implies that $S-(u,\sigma(u))+(u,i')$ is a cut with greater value, which is a $2$-flip move. Therefore, $S$ can only be a max-cut up to $2$ flips if its associated strategy profile forms a Nash Equilibrium, and by construction, this strategy profile must consist of non-zero strategies.      
    \end{proof}

\end{document}